\documentclass[aps,prd,floats,floatfix,superscriptaddress,nofootinbib,noshowpacs]{revtex4-2}

\usepackage{hyperref}
\usepackage{amssymb}
\usepackage{amsmath}
\usepackage{physics}
\usepackage{latexsym}

\usepackage{verbatim}
\usepackage{mathrsfs}
\usepackage{amsfonts}
\usepackage{graphicx,subfigure}
\usepackage{epsfig}

\usepackage{soul}
\usepackage{ulem}
\usepackage{xcolor}
\usepackage{multirow}

\newcommand{\Natural}{\mathbb{N}}
\newcommand{\Real}{\mathbb{R}}
\newcommand{\Complex}{\mathbb{C}}
\newcommand{\identity}{1\!\! 1}
\newcommand{\re}{\mbox{Re}}

\newcommand{\proof}{\noindent {\bf Proof. }}
\newcommand{\proofof}[1]{\noindent {\bf Proof of #1. }}
\newcommand{\qed}{\hfill \fbox{} \vspace{.3cm}}

\newtheorem{lemma}{Lemma}
\newtheorem{proposition}{Proposition}
\newtheorem{corollary}{Corollary}
\newtheorem{theorem}{Theorem}
\newtheorem{remark}{Remark}

\begin{document}

\title{Existence of nonrelativistic $\ell$- and multi-$\ell$-boson stars and their radial stability}

\author{Emmanuel Ch\'avez Nambo}
\affiliation{Instituto de F\'isica y Matem\'aticas,
Universidad Michoacana de San Nicol\'as de Hidalgo,
Edificio C-3, Ciudad Universitaria, 58040 Morelia, Michoac\'an, M\'exico}
\author{Olivier Sarbach}
\affiliation{Instituto de F\'isica y Matem\'aticas,
Universidad Michoacana de San Nicol\'as de Hidalgo,
Edificio C-3, Ciudad Universitaria, 58040 Morelia, Michoac\'an, M\'exico}

\date{\today}

\begin{abstract}
Using direct methods of the calculus of variations we establish the existence of an infinite class of spherically-symmetric solutions to the multi-field Schr\"odinger-Poisson system. This is achieved by proving that the energy functional admits a global minimum when restricted to the set of vector-valued wave functions in the Sobolev space $H^1$ which are invariant with respect to a suitable representation of the rotation group and whose components have fixed $L^2$-norms. Additionally, we show that these minima correspond to solutions which are orbital stable with respect to perturbations of the wave function within this set. The generalization to include an external potential and some important properties of the minima are also discussed.
\end{abstract}

\maketitle

\section{Introduction}
\label{Sec:Intro}

Recently, there has been renewed interest in stationary solutions of the multi-field Schr\"odinger-Poisson system, mainly motivated from the possibility that dark matter is described by an ultra-light scalar or vector field~\cite{PhysRevLett.85.1158, Matos:2000ss, Marsh:2010wq, Hui:2016ltb, FerreiraReview, Jain:2021pnk}. In particular, spherically symmetric stationary solutions play a central role since they serve as approximate models for dark matter halo cores in galaxies~\cite{Schive:2014dra,Amin:2022pzv}. Numerical investigations strongly suggest that there is a rich spectrum of such solutions, which have been referred to as multi-state boson star, $\ell$-boson stars, or multi-$\ell$-boson stars in the literature (see Refs.~\cite{Bernal:2009zy,Alcubierre:2018ahf,SemiclassicalBosonStars} for the relativistic regime in which they were initially proposed and~\cite{Urena-Lopez:2010zva,Guzman:2019gqc,Roque:2023sjl,Navarro-Boullosa:2023bya} for their nonrelativistic counterparts). As shown in~\cite{Nambo:2023yut}, $\ell$-boson stars can be characterized by the property that their wave function is invariant with respect to a suitable representation of the rotation group. Based on a  combination of analytic and numerical methods, $\ell$-boson stars whose radial profile have no nodes have been argued to be stable with respect to radial linear perturbations~\cite{Roque:2023sjl}. Furthermore, these solutions have been conjectured to be stable with respect to arbitrary linear perturbations when $\ell=0$ or $\ell=1$, see Ref.~\cite{Nambo:2023yut}.

The purpose of this article is to provide a rigorous proof for the existence of (multi-) $\ell$-boson stars as  solutions of the multi-field Schr\"odinger-Poisson system. More specifically, we show that the solutions which have been conjectured to be stable with respect to radial linear perturbations can be obtained by minimizing the total energy functional on a suitable function space $H^1_\rho$, consisting of wave functions which are invariant under the appropriate representation of the rotation group, keeping fixed the $L^2$-norm of their components. Furthermore, we prove that each of these solutions are orbital stable with respect to  perturbations within the space $H^1_\rho$. In particular, this establishes the stability of $\ell$- and multi-$\ell$-boson stars with respect to nonlinear radial perturbations.

The Schr\"odinger-Poisson system also plays an important role in other areas of physics, including Choquard's one-component plasma model~\cite{Lieb1977,LionsChoquard,LIONS1984109,ChoquardGuide} and the gravitationally induced collapse of the wave function~\cite{Moroz}. For applications to the pseudo-relativistic and fully relativistic theory of boson stars, see for instance~\cite{LiebYau2005,Frohlich:2005sh} and~\cite{Bizon:2000es,Liebling:2012fv}, respectively.

This article is organized as follows. In section~\ref{Sec:MainResults} we review previous results and state the main theorems of this article regarding the existence and orbital stability of $\ell$- and multi-$\ell$ boson stars. Next, in section~\ref{Sec:Preliminaries} we discuss some preliminary estimates and results. Section~\ref{Sec:Min} is devoted to the proof for the existence of the global minima corresponding to $\ell$-boson stars, based on Strauss' lemma which ultimately allows one to extract a convergent subsequence from a minimizing sequence in $H^1_\rho$. Next, in section~\ref{Sec:Properties} we discuss the main properties of these minima, including virial relations, scaling properties, and the fact that they constitute strong solutions of the multi-field Schr\"odinger-Poisson system. Section~\ref{Sec:MutliEll} discusses the existence and main properties of multi-$\ell$-boson stars. Finally, in section~\ref{Sec:OrbitalStability} we show that the solutions are orbital stable in $H^1_\rho$. For completeness, we also include in appendix~\ref{App:GlobalExistence} a proof for the global in time well-posedness of the Cauchy problem for the multi-field Schr\"odinger-Poisson system, whereas appendix~\ref{App:Projector} discusses the properties of a relevant projection operator.

We use the following notations and definitions: $L^p(\Real^3,\Complex^m)$ with $p\geq 1$ denotes the space of (equivalent classes of) measurable functions $u: \Real^3\to\Complex^m$ such that $|u|^p$ is Lebesgue-integrable. The corresponding norm is denoted by $\| u \|_p$. For each $s=0,1,2,\ldots$, $H^s(\Real^3,\Complex^m)$ is the Sobolev space consisting of functions $u\in L^2(\Real^3,\Complex^m)$ whose  partial derivatives $D^\alpha u$ of order $|\alpha|\leq s$ also belong to $L^2(\Real^3,\Complex^m)$, with the corresponding norm
\begin{equation}
\| u \|_{H^s} := \sqrt{\sum\limits_{|\alpha|\leq s} \| D^\alpha u \|_2^2},\qquad
u\in H^s(\Real^3,\Complex^m).
\end{equation}

\section{Main results}
\label{Sec:MainResults}

The multi-field Schr\"odinger-Poisson system is given by
\begin{align}
i\frac{\partial \psi}{\partial t} &= -\Delta\psi + (U + V)\psi,
\label{Eq:Schroedinger}\\
\Delta U &= |\psi|^2,
\label{Eq:Poisson}
\end{align}
where $\psi: \Real\times\Real^3\to \Complex^m$, $(t,x)\mapsto \psi(t,x)$, is a vector-valued wave function with $m$ components, $\psi = (\psi_1,\psi_2,\ldots,\psi_m)$, $\Delta$ the Laplace operator on $\Real^3$ (acting on each component of $\psi$) and $U: \Real\times\Real^3\to \Real$, $(t,x)\mapsto U(t,x)$ refers to the gravitational potential induced by the field $\psi$. Further, $V: \Real^3\to \Real$, $x\mapsto V(x)$ is an external, time-independent potential. For the physical scenario of a self-gravitating dark matter cloud (whose particle number density is described by $|\psi|^2 = \sum_{j=1}^m |\psi_j|^2$), the external potential $V$ models the gravitational potential induced by the galactic disk of baryonic matter or by the supermassive black hole in the center of the galaxy, treating the black hole as a point mass~\cite{ElliotMocz}. In the latter case, $V(x) = V_{BH}(x) = -M/|x|$ with $M$ the black hole's mass. An example concerning baryonic matter is the Miyamoto-Nagai potential $V = V_{MN}$, given by~\cite{mMrN75}
\begin{equation}
V_{MN}(x) = -\frac{M}{\sqrt{x_1^2 + x_2^2 + \left( a + \sqrt{x_3^2 + b^2} \right)^2}},\qquad x = (x_1,x_2,x_3)\in\Real^3,
\label{Eq:MNPotential}
\end{equation}
with $M$ the total baryon mass and $a$ and $b$ two positive parameters. In the limit $a\to 0$ this potential reduces to the Plummer potential which is rotational symmetric, and when both $a$ and $b$ tend to zero one recovers $V_{BH}$.

One can check that the system~(\ref{Eq:Schroedinger},\ref{Eq:Poisson}) formally preserves the positive semi-definite Hermitian matrix
\begin{equation}
Q[\psi] := \int \overline{\psi} \otimes \psi dx
\label{Eq:QDef}
\end{equation}
and the total energy
\begin{equation}
\mathcal{E}[\psi] := \frac{1}{2}\int
 |\nabla\psi|^2 dx + \frac{1}{4}\int U(t,x)|\psi|^2 dx + \frac{1}{2} \int V(x)|\psi|^2 dx,
\end{equation}
where here and in the following, $\overline{\psi}$ denotes the complex conjugate of $\psi$, such that $|\psi|^2 = \overline{\psi}\cdot \psi$, and the integrals are performed over $\Real^3$. Assuming that $U(t,x)\to 0$ for $|x|\mapsto \infty$ it is convenient to represent the gravitational potential in the form
\begin{equation}
U(t,x) = \Delta^{-1}(|\psi|^2)(t,x) := -\frac{1}{4\pi}\int \frac{|\psi(t,y)|^2}{|x-y|} dy,
\label{Eq:GravPotential}
\end{equation}
and work with the resulting integro-differential equation obtained from Eq.~(\ref{Eq:Schroedinger}). Accordingly, the energy functional is written in the form 
\begin{equation}
\mathcal{E}[u] = T[u] - L[n] - D[n],
\label{Eq:EnergyFunctional}
\end{equation}
where $T$, $L$, and $D$ are defined as 
\begin{align}
T[u] &:= \frac{1}{2}\int
 |\nabla u(x)|^2 dx,
\label{Eq:DefT}\\
L[n] &:= \frac{1}{2}\int (-V(x)) n(x) dx,
\label{Eq:DefL}\\
D[n] &:= \frac{1}{16\pi} \int\int \frac{n(x) n(y)}{|x-y|} dx dy,
\label{Eq:DefD}
\end{align}
and $n := |u|^2$.

For the following, we make the following assumptions on $V$:
\begin{enumerate}
\item[(i)] $V:\Real^3\to \Real$ is a measurable function satisfying $V\in L^2_{\rm{loc}}(\Real^3)$ and $V(x)\to 0$ as $|x|\to \infty$,
\item[(ii)] $V\leq 0$,
\end{enumerate}
which include the aforementioned examples $V_{BH}$ and $V_{MN}$ when $M\geq 0$. As explicitly shown later (see Lemma~\ref{Lem:BasicEstimates} and Lemma~\ref{Lem:H0Properties}), these conditions imply that $T$, $L$, $D$, and $\mathcal{E}$ are well-defined for each $u\in H^1(\Real^3,\Complex^m)$ and that the Schr\"odinger operator $H_0 := -\Delta + V: D(H_0)\subset L^2(\Real^3,\Complex^m)\to L^2(\Real^3,\Complex^m)$ is self-adjoint with domain $D(H_0) = H^2(\Real^3,\Complex^m)$ and bounded from below.

\subsection{The Cauchy problem}

The Cauchy problem associated with Eq.~(\ref{Eq:Schroedinger}) can be written as
\begin{eqnarray}
i\frac{d\psi}{dt}(t) &=& H_0\psi(t) + F(\psi(t)),\qquad t\in\Real,
\label{Eq:Cauchy1}\\
\psi(0) &=& \psi_0,
\label{Eq:Cauchy2}
\end{eqnarray}
with the self-adjoint operator $H_0 = -\Delta + V$ defined above and the nonlinear map $F(\psi) := \Delta^{-1}(|\psi|^2) \psi$, with $\Delta^{-1}$ defined as in Eq.~(\ref{Eq:GravPotential}). Since $F: H^2(\Real^3,\Complex^m)\to H^2(\Real^3,\Complex^m)$ is a well-defined, locally Lipschitz continuous map (see~\cite{Lenzmann} and appendix~\ref{App:GlobalExistence}), one can prove that the Cauchy problem~(\ref{Eq:Cauchy1},\ref{Eq:Cauchy2}) is local in time well-posed. In fact, one can prove that it is also global in time well-posed:

\begin{theorem}
\label{Thm:WP}
Suppose $V$ satisfies condition (i). Then, given $\psi_0\in D(H_0)$ there exists a unique, continuously differentiable curve $\psi: (-\infty,\infty)\to L^2(\Real^3,\Complex^m)$ such that $\psi(t)\in D(H_0)$ for all $t\in\Real$, $\psi: (-\infty,\infty)\to H^2(\Real^3,\Complex^m)$ is continuous, and $\psi$ satisfies Eqs.~(\ref{Eq:Cauchy1},\ref{Eq:Cauchy2}). Moreover, $\mathcal{E}[\psi(t)] = \mathcal{E}[\psi_0]$ and $Q[\psi(t)] = Q[\psi_0]$ for all $t\in \Real$.
\end{theorem}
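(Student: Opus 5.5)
The plan has three stages. First, we obtain a local solution in $H^2$ by a standard semigroup/Duhamel argument. Next, we derive the conservation laws. Finally, we use those laws together with a priori estimates to rule out blow-up in finite time, which gives global existence.

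\emph{Local well-posedness.} Since $H_0=-\Delta+V$ is self-adjoint on $D(H_0)=H^2(\Real^3,\Complex^m)$ and bounded from below (the properties of $H_0$ quoted above, to be established in Lemma~\ref{Lem:H0Properties}), the group $e^{-itH_0}$ is unitary on $L^2$ and commutes with $H_0$. It is therefore an isometry on $D(H_0)$ equipped with the graph norm $\|\psi\|_{H_0}:=\|\psi\|_2+\|H_0\psi\|_2$, and this norm is equivalent to the $H^2$ norm. Given that $F:H^2\to H^2$ is locally Lipschitz, we would apply the contraction mapping principle to the Duhamel formulation
\begin{equation*}
\psi(t)=e^{-itH_0}\psi_0-i\int_0^t e^{-i(t-s)H_0}F(\psi(s))\,ds
\end{equation*}
in $C([-T,T],D(H_0))$. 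This yields a unique maximal mild solution on $(-T_-,T_+)$, with the blow-up alternative: if $T_+<\infty$, then $\|\psi(t)\|_{H^2}\to\infty$ as $t\to T_+$. Because $F(\psi(\cdot))$ is continuous in $D(H_0)$, the mild solution is a classical one, i.e.\ $C^1$ into $L^2$ and satisfying~(\ref{Eq:Cauchy1}).

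\emph{Conservation laws.} For $Q$, we differentiate $\int\overline{\psi_j}\psi_k\,dx$ in time and use~(\ref{Eq:Cauchy1}). The terms coming from $H_0$ cancel by self-adjointness, and the terms coming from $F$ cancel because the potential $U=\Delta^{-1}(|\psi|^2)$ is real and multiplies all components by the same scalar. For $\mathcal{E}$, we compute $\frac{d}{dt}\mathcal{E}[\psi(t)]=\re\langle H_0\psi+F(\psi),\dot\psi\rangle=\re\langle -i\dot\psi ... \rangle$; more precisely, since $H_0\psi+F(\psi)=i\dot\psi$, the derivative equals $\re\langle i\dot\psi,\dot\psi\rangle=0$. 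The factor $1/4$ in front of the $U$-term produces exactly $F$ because of the symmetry of the Coulomb kernel. This computation requires $\mathcal{E}$ to be $C^1$ along $H^2$-valued $C^1$ curves, which follows from the basic estimates of Lemma~\ref{Lem:BasicEstimates}.

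\emph{Global existence and main obstacle.} The main obstacle is an a priori $H^2$ bound. Note that $V\le 0$ is not assumed, only condition (i). We proceed in two steps.

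First, we obtain an $H^1$ bound. By the Hardy-type estimates, $|L[n]|$ and $D[n]$ are controlled by $\epsilon\|\nabla\psi\|_2^2+C_\epsilon\|\psi\|_2^2$ (and by $C\|\psi\|_2^3\|\nabla\psi\|_2$ for $D$). To see this for $L$, split $V=V_1+V_2$ with $V_1\in L^2$ compactly supported and $\|V_2\|_\infty$ small. Together with conservation of $\mathcal{E}$ and of $\|\psi\|_2^2=\operatorname{tr}Q$, this gives $\sup_t\|\psi(t)\|_{H^1}<\infty$.

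Second, we upgrade to an $H^2$ bound. Set $\phi:=\dot\psi$, which satisfies $\|\phi\|_2\sim\|H_0\psi\|_2$ up to lower-order terms. Then $\phi$ obeys the linearized equation $i\dot\phi=H_0\phi+DF(\psi)\phi$. Using the $H^1$ bound, $\|\Delta^{-1}(|\psi|^2)\|_\infty$ and $\|\nabla\Delta^{-1}(\cdot)\|$ are bounded, so $DF(\psi)$ is a bounded operator on $L^2$ with norm controlled by $\|\psi\|_{H^1}^2$. If differentiating is awkward, one can instead estimate $\|F(\psi)\|_{H_0}\le C(\|\psi\|_{H^1})\|\psi\|_{H^2}$, an estimate linear in the $H^2$ norm, and apply Gronwall directly to the Duhamel formula. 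Either route gives an exponential-in-time bound on $\|\psi(t)\|_{H^2}$, which excludes blow-up, so $T_\pm=\infty$.

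The delicate point is verifying that this estimate is linear in the top-order norm: in $\Delta(U\psi)$ the derivatives must fall on $U$ only through $\Delta U=|\psi|^2$ and $\nabla U$, and these factors must be bounded using the $H^1$ norm alone.
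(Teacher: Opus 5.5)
Your proposal is correct and follows essentially the same route as the paper's appendix. The steps match one for one: local well-posedness from the Duhamel formula and the local Lipschitz continuity of $F$ on $H^2$; conservation of $Q$ and $\mathcal{E}$ by direct differentiation; a uniform $H^1$ bound from the conservation laws together with Lemma~\ref{Lem:BasicEstimates}(c),(d), which indeed do not need $V\le 0$; and a Gronwall argument on the Duhamel formula based on $\|F(\psi)\|_{H^2}\le C\|\psi\|_{H^1}^2\|\psi\|_{H^2}$ (Lemma~\ref{Lem:KeyEstimate}(a), which the paper imports from Lenzmann).

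One small correction concerns the energy identity. The solution is only $C^1$ into $L^2$ and continuous into $H^2$, not $C^1$ into $H^1$ or $H^2$. So $\frac{d}{dt}\mathcal{E}[\psi(t)]$ should be computed as in the paper: write $\mathcal{E}=\frac12(\psi,H_0\psi)+\frac14(\psi,\Delta^{-1}(|\psi|^2)\psi)$ and use self-adjointness of $H_0$, rather than a chain rule derived from Lemma~\ref{Lem:BasicEstimates}.
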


It is also possible to extend the solution concept to the space $H^1(\Real^3,\Complex^m)$. For the case $m=1$, see for instance~\cite{cazenave2003semilinear} and~\cite{Ginibre1980OnAC} where  Theorem~\ref{Thm:WP} was first proven for $V=0$. For completeness, we include a proof of Theorem~\ref{Thm:WP} for arbitrary $m\in \Natural$ in appendix~\ref{App:GlobalExistence} which is based on semi-group methods and some key estimates for the map $F$ following the work in Ref.~\cite{Lenzmann}.

\subsection{Stationary ground state solution and orbital stability}

One of the main results of this article concerns the existence of {\it stationary solutions}, for which the wave function has the form $\psi(t,x) = e^{-i \omega t} u(x)$ with $\omega\in\Real$, and their stability. These solutions satisfy the stationary problem
\begin{equation}
H_0 u + F(u) = \omega u,
\label{Eq:SPTimeIndependent}
\end{equation}
which can be interpreted as a nonlinear eigenvalue problem for $\omega$, where $u$ must lie in a suitable function space. More specifically, we are interested in those stationary solutions that can be obtained by minimizing the energy functional $\mathcal{E}[u]$ over a suitable function space, keeping fixed the matrix $Q$ defined in Eq.~(\ref{Eq:QDef}). 

First, let us summarize what is known when only one field is present ($m=1$), in which case $Q[\psi] = N$ is the total particle number. Define for each $N > 0$ the quantity
\begin{equation}
E(N) := \inf\left\{ \mathcal{E}[u] : u\in H^1(\Real^3,\Complex), \| u \|_2^2 = N \right\}.
\end{equation}
From the pioneering works by H. Lieb and P.S. Lions~\cite{Lieb1977,LIONS1984109} one can deduce:

\begin{theorem}
\label{Thm:GroundState}
Let $N > 0$ and suppose $V$ satisfies $V\in L^p(\Real^3) + L^q(\Real^3)$ with $3/2\leq p,q < \infty$ and assumption (ii). Then, $-\infty < E(N) < 0$. Moreover, $E(N)$ possesses a minimizer, that is, there exists $u_*\in H^1(\Real^3,\Complex)$ with $\| u_* \|_2^2 = N$, such that
\begin{equation}
\mathcal{E}[u_*] = E(N).
\end{equation}
Additionally,
\begin{itemize}
\item[(a)] $u_*$ is a strong solution of the problem~(\ref{Eq:SPTimeIndependent}). That is, $u_*\in H^2(\Real^3,\Complex)$ and there exists $\omega\in\Real$ such that Eq.~(\ref{Eq:SPTimeIndependent}) is satisfied.
\item[(b)] $E(N)$ and $\omega$ are related according to
\begin{equation}
\frac{1}{2} N\omega = E(N) - D[n_*],\qquad n_* := |u_*|^2.
\label{Eq:Virial2}
\end{equation}
\item[(c)] If $V$ is rotational symmetric, that is, if $V(Rx) = V(x)$ for all $R\in SO(3)$ and almost all $x\in\Real^3$, then $u_*$ is rotational symmetric and has no zeros. If $V\equiv 0$, then $u_*$ is unique, up to translations and a multiplication with a constant phase.
\item[(d)] Let $(u_n)$ be a minimizing sequence, that is, $u_n\in H^1(\Real^3,\Complex)$ satisfies $\| u_n \|_2^2 = N$ and $\mathcal{E}[u_n]\to E(N)$. Then, $(u_n)$ possesses a subsequence which converges strongly in $H^1(\Real^3,\Complex)$ to a minimizer.
\end{itemize}
\end{theorem}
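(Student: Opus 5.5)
The plan is to follow the concentration-compactness strategy of Lions, combined with Lieb's rearrangement arguments for part (c).

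\textbf{Step 1: bounds on $E(N)$.} For $u\in H^1$ with $\|u\|_2^2=N$, the Hardy--Littlewood--Sobolev, H\"older and Gagliardo--Nirenberg inequalities give
\begin{equation}
D[n]\lesssim \|u\|_{12/5}^4 \lesssim N^{3/2}\|\nabla u\|_2.
\end{equation}
Splitting $V=V_1+V_2$ with $V_1\in L^p$ and $V_2\in L^q$, and taking a further decomposition in which $V_1$ is small in $L^{3/2}$ and $V_2\in L^\infty$, the Sobolev inequality yields $L[n]\le \varepsilon T[u]+C_\varepsilon N$. Hence $\mathcal{E}[u]\ge (1-\varepsilon)T[u]-C\sqrt{T[u]}-C_\varepsilon N$. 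This shows $E(N)>-\infty$, and also that minimizing sequences are bounded in $H^1$.

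For $E(N)<0$, I will use the scaling $u_\lambda(x)=\lambda^{3/2}u(\lambda x)$. Then $T[u_\lambda]=\lambda^2T[u]$, $D[n_\lambda]=\lambda D[n]$, and $L[n_\lambda]\ge0$ by (ii). Taking $\lambda$ small gives $\mathcal{E}[u_\lambda]<0$.

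\textbf{Step 2: strict subadditivity.} I will show $E(N)<E(\alpha)+E^\infty(N-\alpha)$ for $0<\alpha<N$, where $E^\infty$ is the functional with $V=0$. First, $E^\infty(\theta N)=\theta^{3}E^\infty(N)$ by the scaling $u\mapsto \theta^2 u(\theta x)$. Together with $E^\infty<0$, this gives $E^\infty(N)<E^\infty(\alpha)+E^\infty(N-\alpha)$. Second, $E(N)\le E^\infty(N)$ since $V\le0$, so the strict inequality also holds for $E$.

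\textbf{Step 3: compactness of minimizing sequences (part (d)).} This is the main obstacle. Take a minimizing sequence $(u_n)$, which is bounded in $H^1$ by Step 1. I apply Lions' trichotomy to $|u_n|^2$.
\begin{itemize}
\item \emph{Vanishing} is excluded: it forces $u_n\to0$ in $L^{12/5}$ and $L[n_n]\to0$ (by local compactness and $V\to0$ at infinity). Then $\liminf\mathcal{E}[u_n]\ge0>E(N)$, a contradiction.
\item \emph{Dichotomy} is excluded by Step 2. The split pieces asymptotically decouple in $T$, $D$ and $L$; the part escaping to infinity feels no $V$.
\end{itemize}
So compactness holds up to translations $y_n$. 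If $|y_n|\to\infty$ and $V\not\equiv0$, the energy would tend to at least $E^\infty(N)$. I need $E(N)<E^\infty(N)$ whenever $V\not\equiv0$; this follows by evaluating $\mathcal{E}$ on a translated minimizer of $E^\infty$. Hence $y_n$ stays bounded, or is harmless when $V\equiv0$.

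Extracting a weakly convergent subsequence, tightness gives strong $L^2$ convergence to $u_*$ with $\|u_*\|_2^2=N$. Interpolation then gives strong $L^{12/5}$ convergence, so $D$ and $L$ converge. Weak lower semicontinuity of $T$ gives $\mathcal{E}[u_*]\le E(N)$, so $u_*$ is a minimizer. Moreover $T[u_n]\to T[u_*]$, which upgrades weak to strong $H^1$ convergence.

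\textbf{Step 4: Euler--Lagrange equation, virial relation and symmetry.}
\begin{itemize}
\item \emph{Part (a).} The Lagrange multiplier rule for the constraint $\|u\|_2^2=N$ gives $H_0u_*+F(u_*)=\omega u_*$ weakly. Regularity follows since $F(u_*)\in L^2$ ($U$ is bounded) and $Vu_*\in L^2$, so $\Delta u_*\in L^2$ and $u_*\in H^2$.
\item \emph{Part (b).} Pairing the equation with $\bar u_*$ gives $2T-2L-4D=\omega N$. Combined with $\mathcal{E}=T-L-D$, this yields $\tfrac12 N\omega=E(N)-D[n_*]$.
\item \emph{Part (c).} For radial $V$ I replace $u_*$ by the symmetric decreasing rearrangement of $|u_*|$. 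This does not increase $T$ (P\'olya--Szeg\H{o}) and does not decrease $D$ (Riesz). To control $L$, I would add the assumption that $-V$ is symmetric decreasing, as for the examples considered; then $L$ does not decrease either (Hardy--Littlewood). Equality cases and the strong maximum principle for $-\Delta+V+U-\omega$ applied to $|u_*|$ then give positivity and the absence of zeros.
\item \emph{Uniqueness for $V\equiv0$.} I would cite Lieb's proof, which is based on Newton's theorem and an ODE argument, rather than reprove it.
\end{itemize}
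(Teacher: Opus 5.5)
Your overall route (Lions' concentration--compactness for existence and (d), rearrangement for (c), Lieb for uniqueness) is the one the paper itself points to, since it gives no proof of its own. However, Step 2 has a genuine logical gap, and the exclusion of dichotomy in Step 3 rests on it. From $E(N)\le E^\infty(N)<E^\infty(\alpha)+E^\infty(N-\alpha)$ you cannot conclude $E(N)<E(\alpha)+E^\infty(N-\alpha)$. Since $V\le 0$ one has $E(\alpha)\le E^\infty(\alpha)$, so the quantity you must beat is \emph{smaller} than the one you actually beat. It can be much smaller: $E(\alpha)\le\frac{\alpha}{2}\inf\sigma(H_0)$, which for a well with $\inf\sigma(H_0)<0$ and small $\alpha$ lies far below $E^\infty(\alpha)=\alpha^3E^\infty(1)$. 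As written, dichotomy is therefore excluded only when $V\equiv 0$.

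The repair uses ingredients you already have. The identity $\mathcal{E}[\sqrt{\theta}\,u]=\theta\,\mathcal{E}[u]-\theta(\theta-1)D[n]$ gives $E(\theta\mu)\le\theta E(\mu)$ for $\theta\ge 1$. Hence, for $\alpha\ge\beta>0$, $E(\alpha+\beta)\le E(\alpha)+\frac{\beta}{\alpha}E(\alpha)\le E(\alpha)+E(\beta)$. Strictness then comes from $E(\mu)<E^\infty(\mu)$ for \emph{every} $\mu>0$ when $V\not\equiv 0$, which is exactly what you prove in Step 3 by testing with a translated minimizer of $E^\infty(\mu)$. This gives $E(N)\le E(\alpha)+E(N-\alpha)<E(\alpha)+E^\infty(N-\alpha)$. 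To avoid circularity, treat $V\equiv 0$ first. There your cubic-scaling argument is correct and sufficient, and it makes minimizers of $E^\infty(\mu)$ available for all $\mu$.

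A secondary problem is in (a). The claim $Vu_*\in L^2$ follows from $u_*\in H^1\subset L^2\cap L^6$ only if $V\in L^3+L^\infty$. For $p\ge 2$ you should argue as in Theorem~\ref{Thm:Strong}: $V\in L^2+L^\infty$ is Kato--Rellich, so $H_{u_*}=-\Delta+V+U_{u_*}$ is self-adjoint on $H^2$, and the weak solution equals $(\lambda+H_{u_*})^{-1}(\lambda+\omega)u_*\in H^2$. For $3/2\le p<2$ the conclusion can genuinely fail. Take $V=-|x|^{-a}$ with $3/2<a<2$, which lies in $L^{3/2}+L^2$ and is of Kato class. Then $|u_*|$ is a minimizer, continuous and strictly positive at the origin by Harnack's inequality. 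So $Vu_*\notin L^2$, while $\Delta u_*-Vu_*=(U_{u_*}-\omega)u_*\in L^2$, hence $u_*\notin H^2$. Part (a) should therefore be read under the paper's assumption (i).

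Your extra hypothesis in (c), that $-V$ be symmetric decreasing, is not a weakness but is actually needed. Consider the merely radial shell well $V=-A\chi_{\{R<|x|<R+1\}}$ with large $N$. A single concentrated bump placed on the shell has lower energy than any radial function, so no minimizer is radial. Similarly, for $V\equiv 0$, statement (d) can only hold after translating the sequence, which is what your ``harmless'' remark amounts to.
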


\begin{remark}
It follows from the property $E(N) < 0$ and the relation~(\ref{Eq:Virial2}) that $\omega < 0$.
\end{remark}

For rotational-symmetric $V$ the proof follows directly from the results in~\cite{Lieb1977} based on symmetric rearrangements. For the more general case for which $V$ is not necessarily rational symmetric, see~\cite{LIONS1984109} which uses the concentration-compactness principle instead. 

The significance of property (d) relies in the fact that one can prove the {\it orbital stability} of minimizers. More precisely, let
\begin{equation}
S_N := \left\{ u\in H^1(\Real^3,\Complex) : \| u \|_2^2 = N, \mathcal{E}[u] = E(N)  \right\}
\end{equation}
be the subset of minimizers (which, according to Theorem~\ref{Thm:GroundState}, is non-empty). The set $S_N$ is called orbital stable if given $\varepsilon > 0$ there exists $\delta > 0$ such that for $\psi_0\in H^2(\Real^3,\Complex)$ satisfying
\begin{equation}
\mbox{dist}(\psi_0,S_N) := \inf\limits_{u\in S_N} \| \psi_0 - u \|_{H^1} < \delta
\end{equation}
the unique global solution $\psi: (-\infty,\infty)\to H^2(\Real^3,\Complex)$ of the Cauchy problem~(\ref{Eq:Cauchy1},\ref{Eq:Cauchy2}) satisfies
\begin{equation}
\mbox{dist}(\psi(t),S_N) < \varepsilon
\end{equation}
for all $t\in \Real$.

As follows from Ref.~\cite{Cazenave1982}, one has:

\begin{theorem}
Under the same hypotheses as in the previous theorem, the set $S_N$ is orbital stable in $H^1(\Real^3,\Complex)$.
\end{theorem}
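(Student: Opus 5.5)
The plan is to follow the Cazenave--Lions argument by contradiction. It combines the conservation laws from Theorem~\ref{Thm:WP} with the compactness of minimizing sequences in Theorem~\ref{Thm:GroundState}(d).

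Suppose $S_N$ is not orbital stable. Then there exist $\varepsilon>0$, initial data $\psi_{0,k}\in H^2(\Real^3,\Complex)$ with $\mbox{dist}(\psi_{0,k},S_N)\to 0$, and times $t_k$ such that the corresponding global solutions satisfy
\begin{equation}
\mbox{dist}(\psi_k(t_k),S_N)\geq \varepsilon .
\end{equation}
The first step is to show that the initial data are asymptotically minimizing. Pick $v_k\in S_N$ with $\|\psi_{0,k}-v_k\|_{H^1}\to 0$. Then $\|\psi_{0,k}\|_2^2\to N$, and by continuity of $\mathcal{E}$ on $H^1$ we get $\mathcal{E}[\psi_{0,k}]-E(N)\to 0$.

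For the continuity of $\mathcal{E}$ I would use local Lipschitz bounds on bounded sets of $H^1$:
\begin{itemize}
\item for $T$ this is immediate;
\item for $D$ it follows from the Hardy--Littlewood--Sobolev inequality together with the Sobolev bound $H^1\subset L^{12/5}$;
\item for $L$ it follows from H\"older's inequality together with $V\in L^p+L^q$, $3/2\le p,q<\infty$, and $H^1\subset L^r$ for $2\le r\le 6$.
\end{itemize}
I would treat this continuity as a routine preliminary. The one point needing care is that $\mathcal{E}$ must be bounded uniformly on the bounded set $\{v_k\}$. This holds because $S_N$ is bounded in $H^1$: the energy is coercive at fixed $L^2$-norm, since $D$ and $L$ are controlled by $\|\nabla u\|_2^{\theta}$ with $\theta<2$ via Gagliardo--Nirenberg.

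The second step uses the conservation laws. By Theorem~\ref{Thm:WP}, $\|\psi_k(t_k)\|_2^2=\|\psi_{0,k}\|_2^2=:N_k\to N$ and $\mathcal{E}[\psi_k(t_k)]=\mathcal{E}[\psi_{0,k}]\to E(N)$. To obtain an admissible minimizing sequence for $E(N)$, I rescale the amplitude, setting $w_k:=\sqrt{N/N_k}\,\psi_k(t_k)$, so that $\|w_k\|_2^2=N$. The coercivity above shows $\psi_k(t_k)$ is bounded in $H^1$, because its energy is bounded and its $L^2$-norm is bounded. Hence $\|w_k-\psi_k(t_k)\|_{H^1}\le|\sqrt{N/N_k}-1|\,\|\psi_k(t_k)\|_{H^1}\to 0$. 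By the local Lipschitz continuity of $\mathcal{E}$ it follows that $\mathcal{E}[w_k]\to E(N)$, so $(w_k)$ is a minimizing sequence.

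The third step applies Theorem~\ref{Thm:GroundState}(d) to $(w_k)$. This yields a subsequence converging strongly in $H^1$ to some $u_*\in S_N$. Therefore $\mbox{dist}(\psi_k(t_k),S_N)\le\|\psi_k(t_k)-w_k\|_{H^1}+\|w_k-u_*\|_{H^1}\to 0$ along this subsequence, which contradicts $\mbox{dist}(\psi_k(t_k),S_N)\ge\varepsilon$.

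I expect the main technical obstacle to be the uniform $H^1$ bounds, namely coercivity of $\mathcal{E}$ at fixed mass with general $V\in L^p+L^q$. I would handle it by splitting $V=V_1+V_2$ with $\|V_2\|_\infty$-type smallness achieved by truncation. Since $p,q<\infty$, the part in $L^{p}$ can be taken to have small norm after removing a bounded piece. This gives $L[n]\le \tfrac14\|\nabla u\|_2^2+C N$, and similarly $D[n]\le C N^{3/2}\|\nabla u\|_2\le\tfrac14\|\nabla u\|_2^2+C'N^3$. Together these bound $\|\nabla u\|_2^2$ by $4\mathcal{E}[u]+C(N)$, uniformly for $N$ in compact sets.
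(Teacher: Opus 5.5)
Your argument is correct and is essentially the one the paper relies on: the paper cites Cazenave--Lions for this statement and carries out the same contradiction argument in Section~\ref{Sec:OrbitalStability} for $S_{\ell,N}$, namely conservation of mass and energy, local Lipschitz continuity of $\mathcal{E}$, renormalization to an exact-mass minimizing sequence, and compactness of minimizing sequences, with Theorem~\ref{Thm:GroundState}(d) playing here the role of Theorem~\ref{Thm:Existence}. One small slip is in your coercivity step: with $L[n],D[n]\le\frac14\|\nabla u\|_2^2+\ldots$ the gradient terms cancel exactly against $T[u]=\frac12\|\nabla u\|_2^2$, so take $\frac18$ in both bounds instead (possible since the smallness parameter is arbitrary) to obtain $\|\nabla u\|_2^2\le 4\mathcal{E}[u]+C(N)$.
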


\subsection{Nonrelativistic $\ell$-boson stars}

In the present work we construct minimizers of the total energy functional $\mathcal{E}$ on suitable subspaces of $H^1(\Real^3,\Complex^m)$ and prove that these minimizers are orbital stable {\it within these subspaces}. For this, we exploit the fact that the rotation group $SO(3)$ can act in a non-trivial way on the components of the wave function. More precisely, consider a specific unitary representation $D: SO(3)\to GL(\Complex^m)$, $R\mapsto D(R)$, of the rotation group on the vector space $\Complex^m$. Then, we consider the subspace $H^1_\rho$ of $H^1(\Real^3,\Complex^m)$ consisting of functions $u$ which are invariant with respect to the unitary representation $\rho: SO(3)\to GL( L^2(\Real^3,\Complex^m))$ defined by
\begin{equation}
[\rho(R) u](x) := D(R) u(R^{-1} x),\qquad
R\in SO(3),\quad x\in \Real^3,\quad
u\in H^1(\Real^3,\Complex^m),
\label{Eq:Representation}
\end{equation}
that is,
\begin{equation}
H^1_\rho := \{ u\in H^1(\Real^3,\Complex^m) : \rho(R) u = u \hbox{ for all $R\in SO(3)$} \}.
\label{Eq:H1rho}
\end{equation}
Note that $H^1_\rho$ is a closed subspace of $H^1(\Real^3,\Complex^m)$ and hence it is a Hilbert space (with the scalar product and norm inherited from $H^1$). It is easy to verify that for $u\in H^1_\rho$ the quantity $Q$ defined in Eq.~(\ref{Eq:QDef}) satisfies
\begin{equation}
[D(R),Q] = 0\quad \hbox{for all $R\in SO(3)$.}
\label{Eq:Commutator}
\end{equation}

We start with the case in which the unitary representation $D = D_\ell$ is {\it irreducible}, where $\ell\in \Natural_0$ labels the corresponding angular momentum number and $m = 2\ell+1$. According to Schur's lemma (see, for example, Ref.~\cite{Hall-Groups}), it follows from Eq.~(\ref{Eq:Commutator}) that $Q$ is proportional to the identity operator on $\Complex^{2\ell+1}$, that is, $Q = K\identity_{2\ell+1}$ with $N = \Tr(Q) = K(2\ell+1)$. Therefore, we define for each $N > 0$ the quantity
\begin{equation}
E_\ell(N) := \inf\left\{  \mathcal{E}[u] : u\in H^1_\rho, \| u \|_2^2 = N \right\}.
\end{equation}
The first main result of this article is:

\begin{theorem}
\label{Thm:Main}
Let $D_\ell: SO(3)\to GL(\Complex^{2\ell+1})$ be an irreducible unitary representation of the rotation group on the vector space $\Complex^{2\ell+1}$ and consider the associated invariant subspace $H^1_\rho$ defined in Eq.~(\ref{Eq:H1rho}). Suppose the external potential $V$ satisfies the assumptions (i) and (ii) and is rotational symmetric, and let $N > 0$. Then, $-\infty < E_\ell(N) < 0$ and $E_\ell(N)$ possesses a minimizer $u_*$.

Moreover, $u_*\in H^2(\Real^3,\Complex^{2\ell+1})$ is a strong solution of Eq.~(\ref{Eq:SPTimeIndependent}), satisfies property (b) of Theorem~\ref{Thm:GroundState} (with $E(N)$ replaced with $E_\ell(N)$), and the set
\begin{equation}
S_{\ell,N} := \left\{ u\in H^1_\rho : \| u \|_2^2 = N, \mathcal{E}[u] = E_\ell(N)  \right\}
\label{Eq:SrhoN}
\end{equation}
is orbital stable in $H^1_\rho$.
\end{theorem}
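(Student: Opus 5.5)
The plan is to use the direct method of the calculus of variations on the closed subspace $H^1_\rho$.

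\textbf{Step 1: the energy is finite and bounded below.} First I establish $-\infty<E_\ell(N)$ through the standard estimates $D[n]\le C\|u\|_2^3\|\nabla u\|_2$, which follows from Hardy--Littlewood--Sobolev together with Gagliardo--Nirenberg. For the external potential I split $V=V\chi_{B_R}+V\chi_{B_R^c}$. The first piece lies in $L^2$ by (i), and the second has small sup norm because $V\to0$. This gives $L[n]\le \varepsilon T[u]+C_\varepsilon N$. Hence
\[
\mathcal{E}[u]\ge \tfrac12(1-\varepsilon)\|\nabla u\|_2^2-C\|\nabla u\|_2-C_\varepsilon ,
\]
so $\mathcal{E}$ is coercive on the constraint set. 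In particular minimizing sequences are bounded in $H^1$.

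\textbf{Step 2: the infimum is negative.} To show $E_\ell(N)<0$, I pick any nonzero $u\in H^1_\rho$; such functions exist, for example $f(r)$ times the vector of $Y^{\ell m}$'s suitably intertwined with $D_\ell$. I normalize it and use the mass-preserving scaling $u_\lambda(x)=\lambda^{3/2}u(\lambda x)$, which stays in $H^1_\rho$. Then $T[u_\lambda]=\lambda^2T[u]$ and $D[n_\lambda]=\lambda D[n]$, while $L\ge0$ by (ii). Taking $\lambda$ small gives negative energy.

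\textbf{Step 3: existence of a minimizer.} This is the main obstacle, since $\mathbb{R}^3$ is not compact. I take a minimizing sequence, extract a weak $H^1$ limit $u_*\in H^1_\rho$ (the subspace is weakly closed), and invoke the invariance: $|u|^2$ is radial for $u\in H^1_\rho$ because $D_\ell$ is unitary. Applying Strauss' lemma to the radial function $|u_n|$ gives the decay $|u_n(x)|\le C|x|^{-1}\|u_n\|_2^{1/2}\|\nabla u_n\|_2^{1/2}$. Combined with local Rellich compactness, this yields $u_n\to u_*$ in $L^p$ for $2<p<6$. The functional $D$ is continuous in $L^{12/5}$, and $L$ converges by the splitting of Step 1 and local compactness. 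Weak lower semicontinuity of $T$ then gives $\mathcal{E}[u_*]\le E_\ell(N)$.

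The remaining issue is that $\|u_*\|_2^2=M\le N$ might be strictly smaller than $N$. I rule this out with the strict scaling inequality $E_\ell(\theta N)<\theta E_\ell(N)$ for $\theta>1$, obtained from $u\mapsto\sqrt\theta\, u$: in that case $T$ and $L$ scale by $\theta$, $D$ scales by $\theta^2$, and $D>0$ near minimizers since $E_\ell<0$. Indeed, if $M<N$, then rescaling $u_*$ to mass $N$ would give energy strictly below $E_\ell(N)$, a contradiction; note $M>0$ because $E_\ell(N)<0$. So $M=N$, and then $u_n\to u_*$ strongly in $L^2$ and $T[u_n]\to T[u_*]$, hence strongly in $H^1$. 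This proves the analogue of property (d) within $H^1_\rho$.

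\textbf{Step 4: Euler--Lagrange equation, regularity and virial relation.} By the principle of symmetric criticality (Palais), equivalently by projecting onto $H^1_\rho$ with the averaging projector $P=\int_{SO(3)}\rho(R)\,dR$, which commutes with $H_0$ and $F$ when $V$ is radial, the Lagrange multiplier equation holds against all of $H^1$. This gives the weak form of~(\ref{Eq:SPTimeIndependent}) with some $\omega$. Elliptic regularity then applies: $\Delta^{-1}n\in L^\infty$, $Vu\in L^2$, and $H_0$ is self-adjoint on $H^2$, so $u_*\in H^2$. Testing the equation with $u_*$ gives $N\omega=2T-2L-4D$, and therefore $\frac12 N\omega=\mathcal E-D$, which is property (b).

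\textbf{Step 5: orbital stability.} I note first that the flow preserves $H^1_\rho$: by uniqueness in Theorem~\ref{Thm:WP}, $\rho(R)\psi(t)$ solves the same Cauchy problem. The Cazenave--Lions argument then applies. Suppose stability fails. Then there are data $\psi_{0,k}\to S_{\ell,N}$ and times $t_k$ with $\mathrm{dist}(\psi_k(t_k),S_{\ell,N})\ge\varepsilon$. By conservation of $\mathcal E$ and $Q$, after normalizing masses to $N$, the sequence $\psi_k(t_k)$ is a minimizing sequence. By Step 3 a subsequence converges in $H^1$ to an element of $S_{\ell,N}$, which is a contradiction.
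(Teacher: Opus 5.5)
Your proposal is correct and follows the paper's proof essentially step by step. The steps are: coercivity from the $L^2+L^\infty$ splitting of $V$; negativity of $E_\ell(N)$ via the mass-preserving scaling; compactness from Strauss' lemma applied to $|u|$; exclusion of mass loss by rescaling $u_*$ up to mass $N$; extension of the Euler--Lagrange equation from $H^1_\rho$ to all of $H^1$ via the invariant projector, where your Haar average $\int_{SO(3)}\rho(R)\,dR$ is the same orthogonal projector the paper builds with the Fourier transform; and the Cazenave--Lions contradiction argument.

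The only point to tighten is regularity. Under assumption (i) alone, $Vu_*\in L^2$ is not automatic for $u_*\in H^1$ (e.g.\ $V\sim -|x|^{-7/5}$ and $u\sim |x|^{-2/5}$ near the origin). Argue as the paper does instead: for large $\lambda$, $\lambda+H_0+U_{u_*}$ maps $H^2$ bijectively onto $L^2$ and has a coercive form on $H^1$. Hence the weak solution coincides with $(\lambda+H_0+U_{u_*})^{-1}(\lambda+\omega)u_*\in H^2$.
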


\begin{remark}
The minimizer yields the ``ground state" nonrelativistic $\ell$-boson star discussed in~\cite{Roque:2023sjl}. It has the form
\begin{equation}\label{Eq:ExplicitFormEll}
u_*(x) = f_\ell(r) \mathcal{Y}^\ell(\hat{x}),\qquad
r = |x|,\quad \hat{x} = \frac{x}{r},
\end{equation}
where the radial function $f_\ell: (0,\infty)\to \Real$ is nodeless and $\mathcal{Y}^\ell = (Y^{\ell,-\ell},Y^{\ell,1-\ell},\ldots,Y^{\ell,\ell})$ with $Y^{\ell,m}$ denoting the standard spherical harmonics.

Moreover, the numerical investigation in~\cite{Roque:2023sjl} indicates that this solution is mode stable for linear perturbations which are invariant with respect to $\rho$, meaning that such perturbations possess only purely oscillatory modes. Theorem~\ref{Thm:Main} strengthens these results by proving that ground state nonrelativistic $\ell$-boson stars are, in fact, orbital stable within $H^1_\rho$.
\end{remark}
\begin{remark}
For the case $V \equiv 0$, the existence of ground and excited states was rigorously established in~\cite{2002math.ph...8045H} and \cite{nambo21} for $\ell=0$ and $\ell\geq 1$, respectively, by analyzing directly the radial problem for $f_\ell$. In particular, it was shown that for each $\ell\in \Natural_0$ there exists a family of functions $f_{n\ell}$ ($n \in \mathbb{N}_0$) such that the corresponding $\ell$-boson star configurations $u_{n\ell}(x) = f_{n\ell}(r) \mathcal{Y}_\ell(\hat{x})$ solve the stationary problem~\eqref{Eq:SPTimeIndependent}, with the index $n$ characterizing the number of nodes of the radial profile $f_{n\ell}$.
\end{remark}
\begin{remark}
Theorem~\ref{Thm:Main} {\it does not} imply that the minimizers in $E_\ell(N)$ are orbital stable in the larger space $H^1(\Real^3,\Complex^{2\ell+1})$. In fact, the numerical results in~\cite{Nambo:2023yut} suggest that nonrelativistic $\ell$-boson stars with $\ell > 1$ are linearly unstable.
\end{remark}

For the particular case $V = V_{BH}$ one can further show:

\begin{theorem}
\label{Thm:BHPotential}
Suppose $V(x) = V_{BH}(x) = -M/|x|$, and denote for each $M\geq 0$ and $N > 0$ by $E_\ell(M,N)$ and $\omega_\ell(M,N)$ the energy and frequency belonging to the minimizer $u_*$ in Theorem~\ref{Thm:Main}. Then,
\begin{enumerate}
\item[(a)] the following virial relation holds:
\begin{equation}
2T[u_*] - L[n_*] - D[n_*] = 0,\qquad
n_* = |u_*|^2,
\label{Eq:Virial}
\end{equation}
which implies $E_\ell(M,N) = -T[u_*]$ and $\omega_\ell(M,N) = 2N^{-1}(3E_\ell(M,N) + L[n_*])$.
\item[(b)] the functions $E_\ell,\omega_\ell: [0,\infty)\times (0,\infty)\to \Real$ satisfy the following scaling relations for all $M\geq 0$ and $N > 0$:
\begin{equation}
E_\ell(M,N) = N^3 E_\ell(N^{-1} M,1),\qquad
\omega_\ell(M,N) = N^2\omega_\ell(N^{-1} M,1).
\label{Eq:Scaling}
\end{equation}
In particular when $M=0$, one has $E_\ell(0,N) = N^3 E_\ell(0,1)$, $\omega_\ell(0,N) = N^2\omega_\ell(0,1)$, and $\omega_\ell(0,1) = 6E_\ell(0,1)$.
\item[(c)] For fixed $M > 0$ one has
\begin{equation}
\lim\limits_{N\to 0} \omega_\ell(M,N) = E_\ell^{(0)} = -\frac{M^2}{4(\ell+1)^2},
\qquad
\lim\limits_{N\to\infty} \frac{\omega_\ell(M,N) - \omega_\ell(0,N)}{N^2} = 0,
\label{Eq:MLimits}
\end{equation}
where $E_\ell^{(0)}$ is the ground state energy of the Hamilton operator $H_0 = -\Delta + V_{BH}$ restricted to $H_\rho^1$.
\end{enumerate}
\end{theorem}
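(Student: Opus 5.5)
For the virial relation (a) we use a dilation argument inside the constraint set. Given the minimizer $u_*$ of Theorem~\ref{Thm:Main}, we set $u_\lambda(x) := \lambda^{3/2}u_*(\lambda x)$ for $\lambda>0$. This dilation commutes with $\rho(R)$ because it commutes with rotations and does not act on the components. Hence $u_\lambda\in H^1_\rho$, and it satisfies $\|u_\lambda\|_2^2=N$.

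A change of variables gives $T[u_\lambda]=\lambda^2T[u_*]$ and $D[n_\lambda]=\lambda D[n_*]$. Because $V_{BH}$ is homogeneous of degree $-1$, we also get $L[n_\lambda]=\lambda L[n_*]$. Therefore
$$\mathcal{E}[u_\lambda]=\lambda^2 T[u_*]-\lambda(L[n_*]+D[n_*]).$$
This is a smooth function of $\lambda$ that is minimized at $\lambda=1$. Setting its derivative at $\lambda=1$ to zero gives $2T-L-D=0$. Substituting into $\mathcal{E}=T-L-D$ then yields $E_\ell=-T[u_*]$. Finally, property (b) of Theorem~\ref{Thm:GroundState} gives $\tfrac12N\omega=E-D$. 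Using $D=2T-L=-2E-L$, this becomes $\tfrac12 N\omega=3E+L$, which is the stated formula for $\omega_\ell$.

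For the scaling relations (b), we use the map $v(x)=N^{-1/2}\,\cdot$ combined with a dilation. Let $u\in H^1_\rho$ with $\|u\|_2^2=N$, and define $v(x):=N^{-2}u(x/N)$, so that $\|v\|_2^2=N^{-4}N^3=N^{-1}$. This normalization is not yet right, so we instead look for the correct exponents. Set $u(x)=a\,w(bx)$ with $\|w\|_2^2=1$. The constraint requires $a^2b^{-3}=N$. Under this substitution, $T$ scales like $a^2b^{-1}$, $D$ like $a^4b^{-5}$, and $L_M$ like $Ma^2b^{-2}$. Requiring all three to scale alike, $a^2b^{-1}=a^4b^{-5}$, gives $a^2=b^4$, and then $b=N$ and $a=N^2$. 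With these choices,
$$\mathcal{E}_M[u]=N^3\big(T[w]-L_{M/N}[w]-D[w]\big),$$
because $Ma^2b^{-2}=N^3(M/N)$. This map is a bijection between the two constraint sets in $H^1_\rho$, so taking infima gives $E_\ell(M,N)=N^3E_\ell(M/N,1)$.

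Minimizers correspond to each other under this map. Plugging the correspondence into the Euler--Lagrange equation~(\ref{Eq:SPTimeIndependent}) gives $\omega_\ell(M,N)=N^2\omega_\ell(M/N,1)$; equivalently, this follows from the relation in (a), since $L$ scales like $N^3$. This requires $\omega$ to be determined by $E_\ell$ and $L[n_*]$. If minimizers are not unique, $\omega_\ell$ should be read as belonging to a chosen minimizer and its image. For $M=0$ we have $L=0$, so (a) gives $\omega=6E/N$, and at $N=1$ this is $\omega_\ell(0,1)=6E_\ell(0,1)$.

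Part (c) is the main obstacle. For the limit $N\to0$ we write $\omega=2N^{-1}(3E+L)$. We normalize $u_*=\sqrt N\,\phi$ with $\|\phi\|_2=1$. Then $D=O(N^2)$, because the Hardy--Littlewood--Sobolev and Sobolev bounds control $D$ by $N^{3/2}\|\nabla u\|_2$ type quantities. We compare $E_\ell(M,N)$ with $\tfrac N2 E^{(0)}_\ell$ from above and below:
$$\tfrac N2E^{(0)}_\ell - O(N^2)\le E_\ell(M,N)\le \tfrac N2 E_\ell^{(0)}.$$
The upper bound comes from $D\ge0$ and the trial function given by the hydrogenic ground state with angular momentum $\ell$, namely $r^\ell e^{-Mr/(2(\ell+1))}\mathcal{Y}^\ell$. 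The lower bound uses the fact that $H_0$ restricted to $H^1_\rho$ is bounded below by $E_\ell^{(0)}$, which gives $\langle\phi,H_0\phi\rangle\to E^{(0)}_\ell$. We then express $\omega$ directly as $\omega N=\langle u,H_0u\rangle-4D$, which tends to $E^{(0)}_\ell$ after dividing by $N$. This also requires that the value of $E^{(0)}_\ell$ in the $\rho$-invariant sector equals the hydrogen level $-M^2/(4(\ell+1)^2)$, where the factor $4$ reflects $H_0=-\Delta$ with coefficient $1$. We establish this by reducing to the radial operator with centrifugal term $\ell(\ell+1)/r^2$.

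For $N\to\infty$, the scaling in (b) gives $(\omega_\ell(M,N)-\omega_\ell(0,N))/N^2=\omega_\ell(M/N,1)-\omega_\ell(0,1)$. We therefore need continuity of $\mu\mapsto\omega_\ell(\mu,1)$ at $\mu=0$. First, $E_\ell(\mu,1)\to E_\ell(0,1)$, because $\mathcal{E}$ is monotone in $\mu$ and $L$ is bounded by Hardy's inequality on a bounded minimizing family. Next, $\mu\cdot\int n/|x|\to0$, since $T$ is uniformly bounded by (a). Then $\omega=2(3E+L)$ converges. The delicate points are this uniform bound and the possible non-uniqueness of $\omega$, which we handle by working with the explicit formula in (a).
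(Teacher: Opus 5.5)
Your proposal is correct and follows the paper's proof essentially step by step. For the virial identity you use the dilation $\lambda^{3/2}u_*(\lambda x)$, and for the scaling laws the map $w\mapsto N^2w(N\cdot)$. For the first limit you use the sandwich $E_\ell^{(0)}-o(1)\le\omega_\ell(M,N)\le E_\ell^{(0)}$ obtained from $N\omega=(u_*,H_0u_*)-4D[n_*]$. For the second you use Hardy's inequality plus monotonicity in $M$ to get continuity of $\omega_\ell(\cdot,1)$ at $0$.

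Two small differences. Your claim $D[n_*]=O(N^2)$ needs $T[u_*]=O(N)$, which you did not state but which follows at once from Eq.~(\ref{Eq:CoersiveBound}) and $E_\ell<0$; the paper instead bounds $D[n_*]\le CN^{3/2}|E_\ell(M,N)|^{1/2}$ and uses monotonicity in $N$. Your radial computation of $E_\ell^{(0)}=-M^2/(4(\ell+1)^2)$ supplies a step the paper only asserts.
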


\subsection{Nonrelativistic multi-$\ell$-boson stars}

Next, consider the case for which the unitary representation $D: SO(3)\to GL(\Complex^m)$ is {\it reducible}. More specifically, we assume that $D$ has the following form:
\begin{equation}
D = D_{\ell_1} \oplus D_{\ell_2} \oplus \cdots \oplus D_{\ell_r},
\label{Eq:DDecomposition}
\end{equation}
with {\it inequivalent} irreducible components $D_{\ell_j}$. Thus, without loss of generality, $0\leq \ell_1 < \ell_2 < \ldots < \ell_r$ and $r\geq 2$.  Schur's lemma and Eq.~(\ref{Eq:Commutator}) imply that $Q$ must have the following diagonal form:
\begin{equation}
Q = \left( \begin{array}{cccc}
K_1\identity_{2\ell_1+1} & & & \\
& K_2\identity_{2\ell_2+1} & & \\
& & \ddots &  \\
& & & K_r\identity_{2\ell_r+1}
\end{array} \right),
\label{Eq:QFormReducible}
\end{equation}
with $K_j\geq 0$. For what follows, it is convenient to consider the corresponding decomposition $H^1_\rho = \bigoplus_{j=1}^{r} H^1_{\ell_j}$, such that a function in $H^1_\rho$ is written as $u = (u^{(1)}, u^{(2)}, \dots, u^{(r)})$, where each $u^{(j)}$ belongs to the space $H^1_{\ell_j}$ defined by
\begin{equation}
H^1_{\ell_j} := \{ u\in H^1(\Real^3,\Complex^{2\ell_j+1}) : \rho_{\ell_j}(R) u^{(j)} = u^{(j)} \hbox{ for all $R\in SO(3)$} \},
\end{equation}
where 
\begin{equation}
[\rho_{\ell_j}(R) u^{(j)}](x) = D_{\ell_j}(R)u^{(j)}(R^{-1}x),\qquad
R\in SO(3),\quad x\in\Real^3.
\end{equation}
Note that the matrix $Q[u]$ associated with a function $u\in H^1_\rho$ has the form~(\ref{Eq:QFormReducible}), where $\| u^{(j)} \|_2^2 = K_j(2\ell_j+1)$ for all $j=1,2,\ldots, r$. Accordingly, we define for each $N_1,N_2,\ldots,N_r > 0$ the quantity
\begin{equation}
E_{\rho}(N_1,N_2,\ldots,N_r) := \inf\left\{  \mathcal{E}[u] : u\in H^1_\rho, \| u^{(j)} \|_2^2 = N_j \text{ for each } j = 1, 2, \ldots, r \right\}.
\end{equation}
The second main result of this article is:

\begin{theorem}
\label{Thm:Main2}
Let $D: SO(3)\to GL(\Complex^m)$ be a unitary representation of the rotation group on the vector space $\Complex^m$ of the form~(\ref{Eq:DDecomposition}) with $0\leq \ell_1 < \ell_2 < \ldots < \ell_r$, and consider the associated invariant subspace $H^1_\rho$ defined in Eq.~(\ref{Eq:H1rho}). Suppose the external potential $V$ satisfies the assumptions (i) and (ii) and is rotational symmetric, and let $N_1,N_2,\ldots,N_r > 0$. Then, $-\infty < E_{\rho}(N_1,N_2,\ldots,N_r) < 0$ and $E_{\rho}(N_1,N_2,\ldots,N_r)$ possesses a minimizer $u_*$.

Moreover, the minimizer $u_*$ satisfies the following conditions:
\begin{enumerate}
\item[(a)] $u_*\in H^2(\Real^3,\Complex^m)$ is a strong solution of the problem~(\ref{Eq:SPTimeIndependent}), where $\omega$ is replaced with a diagonal matrix of the form
\begin{equation}
\left( \begin{array}{cccc}
\omega_1\identity_{2\ell_1+1} & & & \\
& \omega_2\identity_{2\ell_2+1} & & \\
& & \ddots &  \\
& & & \omega_r\identity_{2\ell_r+1}
\end{array} \right),
\label{Eq:OmegaMatrix}
\end{equation}
with $\omega_j\in\Real$.
\item[(b)] The following relation holds:
\begin{equation}
\frac{1}{2}\sum\limits_{j=1}^r N_j\omega_j = E_\rho(N_1,N_2,\ldots,N_r) - D[n_*],
\qquad n_* := |u_*|^2.
\label{Eq:Virial2Bis}
\end{equation}
\item[(c)] The set
\begin{equation}
S_{\rho,N_1,N_2,\ldots,N_r} := \left\{ u\in H^1_\rho : \| u^{(j)} \|_2^2 = N_j, \mathcal{E}[u] = E_\rho(N_1,N_2,\ldots,N_r)  \right\}
\label{Eq:SrhoNs}
\end{equation}
is orbital stable in $H^1_\rho$.
\end{enumerate}
\end{theorem}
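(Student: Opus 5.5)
The proof follows the same scheme as Theorem~\ref{Thm:Main}, applied componentwise on the decomposition $H^1_\rho = \bigoplus_j H^1_{\ell_j}$. The key compactness input is Strauss' lemma for $\rho$-invariant functions. For $u\in H^1_\rho$ the density $n=|u|^2 = \sum_j |u^{(j)}|^2$ is rotationally symmetric, because $D(R)$ is unitary and $u(Rx) = D(R)u(x)$. Hence the radial decay estimates for $|u|$ hold with constants controlled by $\|u\|_{H^1}$, and the embedding of $\rho$-invariant $H^1$ functions into $L^p$ for $2<p<6$ is compact.

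\emph{Lower bound and coercivity.} Use the estimates of Lemma~\ref{Lem:BasicEstimates}: $D[n]\leq C\|u\|_2^{3}\|\nabla u\|_2$ (Hardy--Littlewood--Sobolev plus Gagliardo--Nirenberg), and $L[n]\leq \varepsilon T[u] + C_\varepsilon \|u\|_2^2$, obtained by splitting $V$ into an $L^2$ piece on a large ball and a small $L^\infty$ piece outside, using (i). With $\sum_j N_j$ fixed this gives $\mathcal{E}[u]\geq \tfrac12 T[u] - C$. So $E_\rho>-\infty$, and minimizing sequences are bounded in $H^1$.

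\emph{Negativity.} Pick any $v\in H^1_\rho$ with $\|v^{(j)}\|_2^2=N_j$; such a $v$ exists, e.g.\ $f_j(r)\mathcal{Y}^{\ell_j}$ in each block. Take the mass-preserving dilation $v_\lambda(x)=\lambda^{3/2}v(\lambda x)$, which stays in $H^1_\rho$ and preserves each $N_j$. Then $T[v_\lambda]=\lambda^2T[v]$, $D[n_\lambda]=\lambda D[n]$, and $L\geq 0$ by (ii). So $\mathcal{E}[v_\lambda]<0$ for small $\lambda$.

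\emph{Existence.} Take a minimizing sequence. Extract a subsequence converging weakly in $H^1_\rho$ (a weakly closed subspace) and strongly in $L^p$, $2<p<6$. Then $D[n_k]\to D[n_*]$, using the HLS bound with $n\in L^{6/5}$, which follows from $u\in L^{12/5}$. Also $T$ is weakly lower semicontinuous. For $L$, the part of $V$ on a ball converges by local compactness (Rellich), and the tail is small uniformly since $V\to 0$. The main obstacle is that the $L^2$ norms may drop: one only gets $\|u_*^{(j)}\|_2^2=:M_j\leq N_j$, and Strauss gives no $L^2$ compactness. I would handle this as follows. First, $u_*\neq0$, since otherwise $\liminf\mathcal{E}\geq 0>E_\rho$. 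Then rule out $M_j<N_j$ by strict monotonicity: $E_\rho$ is strictly decreasing in each $N_j$, via the scaling $u^{(j)}\mapsto \theta u^{(j)}$ with $\theta>1$. This scaling multiplies $T$ and $L$ of that block by $\theta^2$, while $D$ grows at least like $\theta^2$ and strictly faster when cross terms or the self-term are nonzero. Concretely, with $\mathcal{E}[u_*]\leq E_\rho(N)$, rescale each component with $M_j>0$ up to $N_j$. The quartic $D$ gains strictly, and $T-L$ scales as $\theta_j^2$; one checks $\mathcal{E}$ decreases when $\mathcal{E}<0$ in a suitable way. For blocks with $M_j=0$, add a far-out dilated spread-out profile of mass $N_j-M_j$ with negligible energy contribution, then use the strict gain from $D$ cross terms. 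This requires some care. The outcome is the contradiction $E_\rho(N)<\mathcal{E}[u_*]\leq E_\rho(N)$ unless $M_j=N_j$. Then $L^2$ convergence holds, $T[u_k]\to T[u_*]$, hence strong $H^1$ convergence of the subsequence, as in property (d).

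\emph{Euler--Lagrange equations, virial, and stability.} The constraints $\|u^{(j)}\|_2^2=N_j$ are $r$ independent constraints on $H^1_\rho$. Lagrange multipliers then give $(H_0u_*+F(u_*)-\omega u_*,\varphi)=0$ for all $\varphi\in H^1_\rho$, with $\omega$ the block-diagonal matrix~(\ref{Eq:OmegaMatrix}). The left-hand expression is $\rho$-equivariant, since $V$ is radial and $F$ commutes with $\rho$. Using the orthogonal projector onto $\rho$-invariant functions (appendix~\ref{App:Projector}), which commutes with $H_0$ and $F$, the equation holds against all $\varphi\in H^1$. Elliptic regularity with $V\in L^2_{\rm loc}$ and Lemma~\ref{Lem:H0Properties} gives $u_*\in H^2$. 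Testing with $u_*$ yields $2T-2L-4D=\sum_j\omega_jN_j$, i.e.\ (b). For (c), use Cazenave--Lions. The flow of Theorem~\ref{Thm:WP} preserves $H^1_\rho$, by uniqueness and the equivariance of $H_0$ and $F$. It also conserves $\mathcal{E}$ and each $N_j$, which are diagonal entries of $Q$. If stability failed, one would get a sequence $\psi_k(t_k)$ with $\|\psi_k^{(j)}(t_k)\|_2^2\to N_j$ and energy $\to E_\rho$ but distance $\geq\varepsilon$ from $S_\rho$. Normalize the block norms exactly (a perturbation small in $H^1$) to obtain a minimizing sequence. By the compactness above it converges in $H^1$ to an element of $S_\rho$, a contradiction.
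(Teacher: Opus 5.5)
\textbf{The gap: ruling out loss of mass in a single block.} The step that excludes $M_j<N_j$ does not go through, and it is the core of the existence proof. It is also the source of the compactness of minimizing sequences that you invoke for (c). Rescaling only the $j$th block of $u\in H^1_\rho$ by $\theta>1$ gives exactly
\[
\mathcal{E}[u^\theta]-\mathcal{E}[u]=(\theta^2-1)\Bigl(\mathcal{E}_j[u^{(j)}]-\theta^2 D\bigl[|u^{(j)}|^2\bigr]\Bigr).
\]
Here $\mathcal{E}_j$ is the one-block functional with effective potential $W_j=V+\sum_{j'\neq j}U_{u^{(j')}}$, as in Lemma~\ref{Lem:EnergyDifference}.

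This difference is negative if $\mathcal{E}_j[u^{(j)}]\le 0$. However, the sign of $\mathcal{E}_j[u^{(j)}]$ is not controlled by $\mathcal{E}[u]<0$. Take $V=0$, $r=2$, a block $u^{(2)}$ with very negative energy, and a block $u^{(1)}$ of small mass but large kinetic energy. Then $\mathcal{E}[u]<0$ while $\mathcal{E}_1[u^{(1)}]>0$, so enlarging $u^{(1)}$ raises the energy. In the irreducible case the common factor gives $\mathcal{E}[\alpha u]\le\alpha^2\mathcal{E}[u]$, but block-dependent factors admit no such identity. So ``one checks $\mathcal{E}$ decreases when $\mathcal{E}<0$ in a suitable way'' is false in general.

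Nothing in your argument shows that the weak limit $u_*$ avoids such a configuration. To know that, you would need $u_*$ to be a minimizer for its own masses $M_j$, which presupposes the very monotonicity in each $N_j$ that is in question. For the same reason, strict monotonicity of $E_\rho$ in one argument does not follow from scaling arbitrary elements. Scaling all blocks by the same factor only gives $E_\rho(\theta^2N_1,\ldots,\theta^2N_r)\le\theta^2E_\rho(N_1,\ldots,N_r)$.

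\textbf{The missing idea: replace a block, not only rescale it.} Changing only the $j$th block from $u^{(j)}$ to $w$ changes $\mathcal{E}$ by $\mathcal{E}_j[w]-\mathcal{E}_j[u^{(j)}]$. Since $W_j$ satisfies (i) and (ii) (Lemma~\ref{Lem:GravPot}), your dilation argument gives $w\in H^1_{\ell_j}$ with $\|w\|_2^2=N_j$ and $\mathcal{E}_j[w]<0$. Now split into two cases:
\begin{itemize}
\item If $\mathcal{E}_j[u_*^{(j)}]\ge0$, which covers $M_j=0$, replace $u_*^{(j)}$ by $w$ outright. You need neither to keep $u_*^{(j)}$ nor to add far-out mass.
\item If $\mathcal{E}_j[u_*^{(j)}]<0$, scale by $\alpha=\sqrt{N_j/M_j}>1$ and use $\mathcal{E}_j[\alpha u^{(j)}]\le\alpha^2\mathcal{E}_j[u^{(j)}]<\mathcal{E}_j[u^{(j)}]$.
\end{itemize}
Treat the deficient blocks one at a time, recomputing $W_j$ after each change. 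This yields an element of $\mathcal{A}_{\rho,N_1,\ldots,N_r}$ with energy strictly below $\mathcal{E}[u_*]\le E_\rho(N_1,\ldots,N_r)$, a contradiction. This is the paper's route.

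With this step repaired, the rest of your proposal is sound:
\begin{itemize}
\item Lagrange multipliers instead of the paper's explicitly normalized variations; this is legitimate since every $u_*^{(j)}\neq0$.
\item The projector argument.
\item $H^2$ regularity.
\item Deriving (b) by testing the equation with $u_*$, which is equivalent to summing the formulas for $N_j\omega_j$.
\item The Cazenave--Lions argument for (c).
\end{itemize}
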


\begin{remark}
Theorem~\ref{Thm:Main2} proves the existence of nonrelativistic multi-$\ell$-boson stars which are orbital stable with respect to perturbations within $H^1_\rho$.  Relativistic analogues of these configurations have been constructed numerically in~\cite{SemiclassicalBosonStars}; however, their stability properties had not been analyzed so far.
\end{remark}

There are several interesting problems that are still open and that we hope to address in future work. One straightforward generalization is to include a self-interaction term of the form $\lambda |\psi|^2\psi$ to the right-hand side of Eq.~\eqref{Eq:Schroedinger}, leading to the (multi-field) Gross-Pitaevskii-Poisson system~\cite{Nambo:2024gvs,Nambo:2024hao}. It should be possible to extend our results regarding the existence of $\ell$- and multi-$\ell$ boson stars to this case, at least for repulsive self-interaction $\lambda > 0$.

Another interesting question is whether one can construct orbital stable solutions with repeated irreducible representations in Eq.~(\ref{Eq:DDecomposition}). Assuming a diagonal representation of $Q$, this requires that the corresponding components of the wave functions must be orthogonal to each other, which yields an additional constraint. For $\ell=0$, such solutions and their linear stability have been recently analyzed in~\cite{Nambo:2024hao,Nambo:2025lnu,Andrade:2026hgn} by numerical means.

A further important problem is to investigate the existence and orbital stability of other stationary and multi-frequency solutions that can be obtained by relaxing the $SO(3)$-symmetry imposed in this work. In particular, an open question is to characterize the global minima of the energy functional for fixed, given values of $Q$ without any symmetry assumptions.

Finally, numerical studies (see \cite{Guzman:2019gqc} and also~\cite{Jaramillo:2020rsv,Sanchis-Gual:2021edp} for the relativistic case) indicate that nonrelativistic $\ell$-boson stars with $\ell=1$ might be stable under nonlinear perturbations (whereas nonrelativistic $\ell$-boson stars with $\ell\geq 2$ seem to be linearly unstable, as previously mentioned). Therefore, it would be desirable to extend our orbital stability result in Theorem~\ref{Thm:Main} to generic perturbations in $H^1(\Real^3,\Complex^3)$ when $\ell=1$. Likewise, it would be interesting to analyze the stability of the nonrelativistic multi-$\ell$ boson stars of Theorem~\ref{Thm:Main2} with respect to generic perturbations in $H^1(\Real^3,\Complex^m)$.

\section{Preliminary results}
\label{Sec:Preliminaries}

We start with a few preliminary results regarding the energy functional and its constituents defined in Eqs.~(\ref{Eq:EnergyFunctional},\ref{Eq:DefT},\ref{Eq:DefL},\ref{Eq:DefD}). First, we establish the following property of $V$ which will be useful for what follows:

\begin{lemma}
\label{Lem:Potential}
Assume $V$ satisfies condition (i). Then, for each $\varepsilon > 0$ there exist real-valued functions $V_\varepsilon\in L^2(\Real^3)$ and $W_\varepsilon\in L^\infty(\Real^3)$ such that
\begin{equation}
V = V_\varepsilon + W_\varepsilon\quad
\hbox{and}\quad
\| W_\varepsilon \|_\infty\leq \varepsilon.
\end{equation}
\end{lemma}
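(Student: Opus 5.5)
The plan is to split $V$ by truncating it outside a large ball. Given $\varepsilon>0$, condition (i) says $V(x)\to 0$ as $|x|\to\infty$. I would use this in the form: there is a radius $R_\varepsilon>0$ such that $|V(x)|\leq\varepsilon$ for almost every $x$ with $|x|>R_\varepsilon$. Since $V$ is only measurable, I would read the decay condition as essential decay, holding outside a null set. This is harmless, because $V$ is only defined almost everywhere anyway.

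With $B_\varepsilon:=\{x\in\Real^3:|x|\leq R_\varepsilon\}$, define
\begin{equation}
V_\varepsilon := \chi_{B_\varepsilon} V,\qquad W_\varepsilon := (1-\chi_{B_\varepsilon}) V,
\end{equation}
where $\chi_{B_\varepsilon}$ is the characteristic function of the ball. Both functions are real-valued and measurable, and $V=V_\varepsilon+W_\varepsilon$ holds by construction.

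It remains to check the two integrability conditions.

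\begin{enumerate}
\item[(1)] $V_\varepsilon$ vanishes outside the compact set $B_\varepsilon$. Since $V\in L^2_{\rm loc}(\Real^3)$, we get $\| V_\varepsilon\|_2^2=\int_{B_\varepsilon}|V|^2dx<\infty$, so $V_\varepsilon\in L^2(\Real^3)$.
\item[(2)] $W_\varepsilon$ vanishes on $B_\varepsilon$ and satisfies $|W_\varepsilon|=|V|\leq\varepsilon$ almost everywhere outside it. Hence $W_\varepsilon\in L^\infty(\Real^3)$ with $\|W_\varepsilon\|_\infty\leq\varepsilon$.
\end{enumerate}

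The only delicate point is the interpretation of ``$V(x)\to 0$'' for a function defined only almost everywhere. The natural reading, and the one used above, is that for each $\varepsilon$ there is an $R$ with $\operatorname{ess\,sup}_{|x|>R}|V(x)|\leq\varepsilon$. If instead the decay held pointwise for some representative of $V$, that representative gives the same conclusion directly. Beyond this, the argument involves no real obstacle.
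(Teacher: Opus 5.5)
Your proposal is correct and is essentially the paper's proof: the paper also chooses $R$ with $|V|\leq\varepsilon$ outside $B_R$ and sets $V_\varepsilon := V\chi_{B_R}$, $W_\varepsilon := V\chi_{\Real^3\setminus B_R}$, using $V\in L^2_{\rm loc}$ for the first piece. Your remark on reading the decay condition in the essential-supremum sense is a harmless clarification that the paper leaves implicit.
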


\proof Let $\varepsilon > 0$. According to assumption (i) there exists $R > 0$ such that $|V(x)|\leq \varepsilon$ if $x\in \Real^3\setminus B_R$ lies outside the ball $B_R$ of radius $R$ centered at the origin. Denoting by $\chi_A$ the characteristic function associated with a measurable subset $A\subset \Real^3$, the desired functions can be defined as
\begin{equation}
V_\varepsilon := V \chi_{B_R},\qquad
W_\varepsilon := V \chi_{\Real^3\setminus B_R}.
\end{equation}
\qed

For the following, we abbreviate $X := H^1(\Real^3,\Complex^m)$ and $Y := L^1(\Real^3,\Complex^m)\cap L^3(\Real^3,\Complex^m)$ and equip $Y$ with the norm $\|\cdot \|_Y$ defined by
\begin{equation}
\| n \|_Y := \| n \|_1 + \| n \|_3,
\qquad n\in Y.
\end{equation}
Let $q\in [1,3]$. By interpolation,  $Y\subset L^q(\Real^3,\Complex^m)$ and $\| n \|_q\leq \| n \|_1^\alpha \| n \|_3^{1-\alpha} \leq \max\{ \alpha,1-\alpha \} \| n \|_Y$, where $\alpha\in [0,1]$ satisfies $\alpha + (1-\alpha)/3 = 1/q$. Furthermore, by Sobolev's inequality, $n =|u|^2\in Y$ if $u\in X$ and there exists a constant $C$ such that $\| n \|_Y\leq C \| u\|_{H^1}^2$ for all $u\in X$. Here and in the following, $C$ denotes a generic constant.

\begin{lemma}
\label{Lem:BasicEstimates}
Assume condition (i) for $V$ is satisfied. Then, the functionals $\mathcal{E},T: X\to \Real$ and $L,D: Y\to \Real$, are well-defined and continuous. Furthermore, they satisfy the following estimates:
\begin{enumerate}
\item[(a)] For each $\varepsilon > 0$ one has
$\left| L[n_2] - L[n_1] \right| 
\leq C_\varepsilon\| n_2 - n_1 \|_2 + \varepsilon \| n_2 - n_1 \|_1
\leq C\| n_2 - n_1 \|_Y$,
\item[(b)] $\left| D[n_2] - D[n_1] \right| \leq C\left( \| n_1 \|_{6/5} + \| n_2 \|_{6/5} \right)\| n_2 - n_1 \|_{6/5}
\leq C\left( \| n_1 \|_Y + \| n_2 \|_Y \right)\| n_2 - n_1 \|_Y$,
\item[(c)] $L[n] \leq \frac{1}{4} T[u] + C N$,
\item[(d)] $D[n] \leq \frac{1}{4} T[u] + C N^3$,
\end{enumerate}
for all $n_1,n_2\in Y$ and $u\in X$, where in (c) and (d) it is understood that $n = |u|^2$ and $N = \| u \|_2^2$.
\end{lemma}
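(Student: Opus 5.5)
The plan is to prove the four estimates (a)--(d) first. Well-definedness and continuity then follow: (a) and (b) make $L$ and $D$ Lipschitz on bounded subsets of $Y$, and the map $u\mapsto |u|^2$ is continuous from $X$ to $Y$. For the latter, write $|u_2|^2-|u_1|^2 = \overline{(u_2-u_1)}\cdot u_2 + \overline{u_1}\cdot(u_2-u_1)$ and apply H\"older together with the Sobolev embedding $H^1\subset L^2\cap L^6$. This gives $\| |u_2|^2-|u_1|^2\|_Y\leq C(\|u_1\|_{H^1}+\|u_2\|_{H^1})\|u_2-u_1\|_{H^1}$. Since $T$ is clearly continuous on $X$, continuity of $\mathcal{E}=T-L-D$ follows.

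For (a), fix $\varepsilon>0$ and use Lemma~\ref{Lem:Potential} to split $V=V_\varepsilon+W_\varepsilon$. Then
\[
|L[n_2]-L[n_1]|\leq \tfrac12\|V_\varepsilon\|_2\|n_2-n_1\|_2+\tfrac12\varepsilon\|n_2-n_1\|_1
\]
by Cauchy--Schwarz and the $L^\infty$ bound. The second inequality in (a) follows from the interpolation bound $\|n\|_2\leq C\|n\|_Y$ stated above, with $\varepsilon=1$ say.

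For (b), use bilinearity: $D[n_2]-D[n_1]$ equals the Coulomb form evaluated on $(n_2-n_1,\,n_1+n_2)$, up to the factor $1/(16\pi)$. The Hardy--Littlewood--Sobolev inequality with exponents $6/5,6/5$ (since $1/|x|$ lies in weak $L^3$) then gives the first bound, and interpolation with $q=6/5\in[1,3]$ gives the second.

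For (c) and (d) the key is to produce the small coefficient $1/4$ in front of $T$. For (c), write $L[n]\leq \frac12\|V_\varepsilon\|_2\|u\|_4^2+\frac12\varepsilon N$. By Gagliardo--Nirenberg, $\|u\|_4^2\leq C\|u\|_2^{1/2}\|\nabla u\|_2^{3/2}$, and Young's inequality then yields $\delta T[u]+C_\delta\|V_\varepsilon\|_2^4 N$. Choosing $\delta=1/4$ and any fixed $\varepsilon$ (say $\varepsilon=1$) gives $L[n]\leq\frac14T[u]+CN$. (Alternatively one can use the Sobolev bound $\|u\|_4^2\leq\|u\|_2^{1/2}\|u\|_6^{3/2}$.)

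For (d), HLS and interpolation give $D[n]\leq C\|n\|_{6/5}^2=C\|u\|_{12/5}^4$. Applying $\|u\|_{12/5}\leq \|u\|_2^{3/4}\|u\|_6^{1/4}$ together with Sobolev $\|u\|_6\leq C\|\nabla u\|_2$ gives $D[n]\leq C N^{3/2}T[u]^{1/2}$. Young's inequality $ab\leq \frac14 a^2+b^2$ with $a=T^{1/2}$ then yields $\frac14 T[u]+CN^3$.

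The main obstacle is bookkeeping exponents in (c) and (d) so that Young's inequality produces exactly the power $N$ (respectively $N^3$). This dimensional homogeneity is what forces the $L^2$-part $V_\varepsilon$ and the $12/5$ interpolation. I would check the scaling first; everything else is routine.
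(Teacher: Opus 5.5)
Your proposal is correct and follows essentially the same route as the paper. Both use the splitting $V=V_\varepsilon+W_\varepsilon$ with H\"older for (a), Hardy--Littlewood--Sobolev with exponents $6/5$ for (b), and for (c) and (d) the $L^1$--$L^3$ interpolation of $n$ (your Gagliardo--Nirenberg and $12/5$ bounds are the same inequalities), Sobolev's inequality and Young's inequality. Your explicit estimate showing that $u\mapsto|u|^2$ is continuous from $X$ to $Y$ fills in the continuity claim that the paper leaves implicit.
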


\proof
To prove (a) we use Lemma~\ref{Lem:Potential} and H\"older's inequality to conclude
\begin{equation}
| 2L[n] | \leq \int |V(x)| |n(x)| dx
\leq \| V_\varepsilon \|_2 \| n \|_2 + \| W_\varepsilon \|_\infty \| n \|_1
\leq C_\varepsilon\| n \|_2 + \varepsilon \| n \|_1
\leq C \| n \|_Y,
\label{Eq:LEstimate}
\end{equation}
for all $n\in Y$, where $C_\varepsilon := \| V_\varepsilon \|_2$. The statement then follows from the linearity of $L$.

To prove (b) we use the Hardy-Littlewood-Sobolev (see, for example~\cite{LiebLoss}) inequality to obtain
\begin{align}
16\pi\left| D[n_2] - D[n_1] \right| 
 &\leq \int\int \frac{|n_2(x) - n_1(x)| |n_2(y)| + |n_1(x)| |n_2(y) - n_1(y) |}{|x-y|} dx dy
\nonumber\\
 &\leq C\left( \| n_1 \|_{6/5} + \| n_2 \|_{6/5} \right) \| n_2 - n_1 \|_{6/5}
\label{Eq:DEstimate}\\
 &\leq C\left( \| n_1 \|_Y + \| n_2 \|_Y \right) \| n_2 - n_1 \|_Y,
 \nonumber
\end{align}
for all $n_1,n_2\in Y$.

Regarding (c), we use Eq.~(\ref{Eq:LEstimate}) and interpolation to conclude
\begin{equation}
L[n] \leq C\left( \| n \|_1 + \| n \|_2 \right) \leq C\left( 
\| n \|_1 + \| n \|_1^{1/4} \| n \|_3^{3/4} \right)
 = C\left( N + N^{1/4} \| u \|_6^{3/2} \right),
\end{equation}
where we have used $\| n \|_1 = \| u \|_2^2 = N$ and $\| n \|_3 = \| u \|_6^2$ in the last step. Next, we use Sobolev's inequality, which implies $\| u \|_6^2\leq C \| \nabla u \|_2^2 =  2C T[u]$, and obtain
\begin{equation}
L[n] \leq C\left( N + N^{1/4} T[u]^{3/4} \right).
\end{equation}
Finally, we use the generalized Young inequality $a b\leq a^p/(p\delta^p) + (\delta b)^q/q$ for $a,b,\delta > 0$ and $1/p + 1/q = 1$ with $(p,q) = (4,4/3)$ and $\delta > 0$ small enough to obtain the statement.

The proof of (d) is similar. Using Eq.~(\ref{Eq:DEstimate}) and interpolation one obtains
\begin{equation}
D[n] \leq C\| n \|_{6/5}^2
\leq C \| n \|_1^{3/2} \| n \|_3^{1/2} 
 = C N^{3/2} \| u \|_6 \leq C N^{3/2} T[u]^{1/2} \leq \frac{1}{4} T[u] + C N^3.
\label{Eq:DnInequality}
\end{equation}
\qed

\begin{remark}
Estimates (a) and (b), together with Sobolev's inequality, imply that $\mathcal{E}$ is locally Lipschitz continuous on $X$. More precisely, there exists a constant $C$ such that
\begin{equation}
\left| \mathcal{E}[u_2] - \mathcal{E}[u_1] \right| \leq C\left( 1 + \| u_1 \|_{H^1}^3 + \| u_2 \|_{H^1}^3 \right)\| u_2 - u_1 \|_{H^1}
\label{Eq:LocalLipschitz}
\end{equation}
for all $u_1,u_2\in X$.
\end{remark}

\begin{remark}
Estimates (c) and (d) imply that for all $u\in X$,
\begin{equation}
\mathcal{E}[u] \geq \frac{1}{2} T[u] - C(N + N^3)
 \geq \frac{1}{4} \| u \|_{H^1}^2 - \left( C + \frac{1}{4} \right)(N + N^3),
\label{Eq:CoersiveBound}
\end{equation}
where $N = \| u \|_2^2$. In particular, the $H^1$-norm of $u$ is controlled by the energy and the total particle number $N = \Tr(Q)$.
\end{remark}

For the following, we define
\begin{equation}
E(N) := \inf\left\{ \mathcal{E}[u] : u\in H^1(\Real^3,\Complex^m), \| u \|_2^2 = N \right\}.
\end{equation}
As a consequence of the previous Lemma one has:

\begin{lemma}
\label{Lem:Coersive}
Assume conditions (i) and (ii) for $V$ hold. Then, the following properties are satisfied for each $N > 0$:
\begin{enumerate}
\item[(a)] $-\infty < E(N) < 0$,
\item[(b)] The functional $\mathcal{E}: X\to \Real$ is coersive on $\mathcal{A}_N := \{ u\in X : \| u \|_2^2 = N \}$, that is
\begin{equation}
\mathcal{E}[u_k]\to \infty
\end{equation}
for any sequence $(u_k)$ in $\mathcal{A}_N$ such that $\| u_k \|_{H^1}\to \infty$.
\end{enumerate}
\end{lemma}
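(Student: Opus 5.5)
The lower bound in (a) and the coercivity in (b) both follow from the estimate~(\ref{Eq:CoersiveBound}) derived from Lemma~\ref{Lem:BasicEstimates}(c),(d). For every $u\in\mathcal{A}_N$ it gives
\begin{equation*}
\mathcal{E}[u]\geq \frac{1}{4}\| u\|_{H^1}^2 - \left(C+\frac{1}{4}\right)(N+N^3).
\end{equation*}
Since the right-hand side depends only on $N$ and $\|u\|_{H^1}$, two things follow at once. First, $\mathcal{E}$ is bounded below on $\mathcal{A}_N$ by $-(C+1/4)(N+N^3)>-\infty$, so $E(N)>-\infty$. Second, if $(u_k)\subset\mathcal{A}_N$ with $\|u_k\|_{H^1}\to\infty$, the right-hand side tends to $+\infty$, which is (b). Note that $\mathcal{A}_N\neq\emptyset$, since any nonzero $H^1$ function can be rescaled to have norm $N$.

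The only nontrivial part is $E(N)<0$, which I would prove with a scaling argument. Fix $\phi\in H^1(\Real^3,\Complex^m)$ with $\|\phi\|_2^2=N$ (for instance a normalized Gaussian in the first component) and set
\begin{equation*}
\phi_\lambda(x):=\lambda^{3/2}\phi(\lambda x),\qquad \lambda>0.
\end{equation*}
This preserves the $L^2$-norm, so $\phi_\lambda\in\mathcal{A}_N$. The kinetic and gravitational terms scale as
\begin{equation*}
T[\phi_\lambda]=\lambda^2 T[\phi],\qquad D[|\phi_\lambda|^2]=\lambda D[|\phi|^2],
\end{equation*}
where the second identity follows by changing variables in the double integral, since $|x-y|^{-1}$ is homogeneous of degree $-1$.

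By assumption (ii), $V\leq 0$, so $L[|\phi_\lambda|^2]\geq 0$. Hence
\begin{equation*}
\mathcal{E}[\phi_\lambda]\leq \lambda^2T[\phi]-\lambda D[|\phi|^2].
\end{equation*}
Because $\phi\neq 0$, we have $D[|\phi|^2]>0$: the integrand is nonnegative and strictly positive on a set of positive measure. Therefore, for $0<\lambda<D[|\phi|^2]/T[\phi]$ the right-hand side is strictly negative, and so $E(N)\leq\mathcal{E}[\phi_\lambda]<0$.

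I do not expect any real obstacle. The only points to check are the exact scaling exponents and the strict positivity of $D$. The argument does not need the external potential, since assumption (ii) makes it only lower the energy.
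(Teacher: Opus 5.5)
Your proposal is correct and follows the paper's proof essentially verbatim. The lower bound and the coercivity both come from the estimate~(\ref{Eq:CoersiveBound}), and $E(N)<0$ comes from the $L^2$-preserving rescaling $u_\lambda(x)=\lambda^{3/2}u(\lambda x)$ together with $T[u_\lambda]=\lambda^2T[u]$, $D[n_\lambda]=\lambda D[n]$ and $L\geq 0$; your checks that $D[|\phi|^2]>0$ and $T[\phi]>0$ for $\phi\neq 0$ just make explicit what the paper leaves implicit.
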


\proof The estimate~(\ref{Eq:CoersiveBound}) shows that $\mathcal{E}$ is bounded from below and coersive on $\mathcal{A}_N$. To prove that $E(N)$ is negative, we take any $u\in \mathcal{A}_N$ (note that $u\neq 0$, otherwise $N=0$) and rescale it according to 
\begin{equation}
u_\lambda(x):=\lambda^{3/2} u(\lambda x),\qquad \lambda > 0,\qquad
x\in\Real^3.
\label{Eq:Rescaling}
\end{equation}
Since $\| u_\lambda \|_2^2 = \| u \|_2^2 = N$ it follows that $u_\lambda\in\mathcal{A}_N$ for all $\lambda > 0$. Furthermore, it is straightforward to verify that $T[u_\lambda] = \lambda^2 T[u]$ and $D[n_\lambda] = \lambda D[n]$, where $n_\lambda := |u_\lambda|^2$. Therefore, taking into account that $L\geq 0$, one obtains, for small enough $\lambda > 0$,
\begin{equation}
\lambda^{-1}\mathcal{E}[u_\lambda] = \lambda T[u] - \lambda^{-1} L[n_\lambda] - D[n] \leq \lambda T[u] - D[n] < 0.
\end{equation}
Consequently, $E(N) < 0$.
\qed

\begin{remark}
Lemma~\ref{Lem:Coersive} also holds if one replaces $E(N)$ with $E_{\ell}(N)$ and $\mathcal{A}_N$ with $\mathcal{A}_{\rho,N} := \{ u \in H^1_\rho: \| u \|_2^2 = N \}$, since the rescaling~(\ref{Eq:Rescaling}) leaves $H^1_\rho$ invariant.
\end{remark}

Let $(u_k)$ be a minimizing sequence, that is, a sequence $(u_k)$ in $\mathcal{A}_N$ such that
\begin{equation}
\mathcal{E}[u_k]\to E(N).
\label{Eq:MinimizingSequence}
\end{equation}
Then, the previous Lemma implies that it is bounded in $X$. Therefore, there exists a subsequence $(u_{k_j})$ which converges weakly to some $u_*\in X$ in $X$. It remains to show that $u_*$ is a desired minimizer. Here, the challenge is to show that one can extract a further subsequence from $(u_{k_j})$ which converges strongly in $X$, which then implies that $\mathcal{E}[u_*] = \lim_{j\to\infty}\mathcal{E}[u_{k_j}] = E(N)$. For this, a compactness argument is needed, which can be obtained by using either symmetric rearrangement techniques (when $V$ is spherically symmetric) or Lions' concentration-compactness principle. In the problems considered in this article, the minimizing sequence $(u_k)$ is restricted to $H^1_\rho$, and in this case compactness is guaranteed through Strauss' radial lemma~\cite{Strauss}. More precisely, one has:

\begin{lemma}[Radial lemma]
\label{Lem:Radial}
\begin{enumerate}
\item[(a)] There is a constant $C > 0$ such that for all $u\in H^1_\rho$,
\begin{equation}
|u(x)| \leq \frac{C}{|x|}\| u \|_{H^1}
\quad\hbox{for almost all $x\in\Real^3$.}
\label{Eq:StraussIneq}
\end{equation}
\item[(b)] The space $H^1_\rho$ is compactly embedded in $L^p(\Real^3,\Complex^m)$ for all $p\in (2,6)$.
\end{enumerate}
\end{lemma}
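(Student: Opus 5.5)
The key observation is that for $u\in H^1_\rho$ the pointwise modulus $|u|$ is a radial function. Indeed, since $D(R)$ is unitary, the identity $\rho(R)u=u$ gives $|u(R^{-1}x)| = |D(R)u(R^{-1}x)| = |u(x)|$ for every $R\in SO(3)$ and almost every $x$. The plan is therefore to reduce both statements to the classical scalar radial case due to Strauss~\cite{Strauss}, applied to the radial function $v:=|u|$.

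For (a), I will first recall that $v=|u|\in H^1(\Real^3)$ with $|\nabla v|\leq |\nabla u|$ almost everywhere (the diamagnetic/Kato inequality, applied componentwise for vector-valued functions). Hence $\| v\|_{H^1}\leq \| u \|_{H^1}$. It then suffices to prove $|v(x)|\leq C|x|^{-1}\|v\|_{H^1}$ for radial $v\in H^1(\Real^3)$. I will argue by density: for smooth compactly supported radial $v(x)=f(r)$ one has
\begin{equation}
r^2 f(r)^2 = -\int_r^\infty \frac{d}{ds}\left(s^2 f(s)^2\right) ds \leq -\int_r^\infty 2 s^2 f f' ds \leq \frac{1}{2\pi}\| v\|_2 \| \nabla v\|_2 ,
\end{equation}
where the term $2sf^2$ was dropped because it is nonnegative, and Cauchy--Schwarz was used in $L^2(s^2ds)$. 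The estimate extends to all radial $H^1$ functions by approximation. The approximants converge almost everywhere along a subsequence, and radial smooth functions are dense in radial $H^1$ functions (mollify with a radial mollifier and cut off with a radial cutoff). Thus $C=(2\pi)^{-1/2}$ up to the factor relating $\|v\|_2\|\nabla v\|_2$ to $\|v\|_{H^1}^2$.

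For (b), I will take a bounded sequence $(u_k)$ in $H^1_\rho$. After extracting a subsequence, it converges weakly in $H^1$ to some $u\in H^1_\rho$ (the space is weakly closed, being a closed subspace). Set $w_k:=u_k-u$; these are bounded in $H^1_\rho$ and converge weakly to $0$. I will split $\Real^3$ into the ball $B_R$ and its complement. On $B_R$, the Rellich--Kondrachov theorem gives $w_k\to 0$ strongly in $L^2(B_R)$, and interpolation with the uniform $L^6$ bound from Sobolev's inequality gives convergence in $L^p(B_R)$ for $2\leq p<6$. On $\Real^3\setminus B_R$, part (a) yields
\begin{equation}
\int_{|x|>R} |w_k|^p dx \leq \sup_{|x|>R}|w_k|^{p-2}\, \| w_k\|_2^2 \leq \left(\frac{C}{R}\right)^{p-2} \sup_k \| w_k\|_{H^1}^{p} .
\end{equation}
For $p>2$ this is small uniformly in $k$ once $R$ is large. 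Choosing $R$ first and then $k$ large gives $w_k\to0$ in $L^p(\Real^3,\Complex^m)$ for all $p\in(2,6)$.

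The main obstacle is the justification in (a): passing from the vector-valued symmetry to a scalar radial function and handling the almost-everywhere statement. Both are addressed by working with $|u|$ and using the Kato inequality together with density of radial test functions. The compactness argument in (b) is then routine.
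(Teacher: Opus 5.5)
Your proposal is correct and follows the paper's proof essentially step by step. For (a), you reduce to Strauss's scalar radial lemma via $|u|\in H^1(\Real^3)$ with $\||u|\|_{H^1}\le\|u\|_{H^1}$. For (b), you split $\Real^3$ into $B_R$, where Rellich--Kondrachov applies, and its complement, where the decay from (a) combined with the uniform $L^2$ bound makes the tail small. The only differences are that you re-derive Strauss's inequality by a density argument instead of citing it, and you make explicit why $|u|$ is radial (unitarity of $D(R)$), which the paper leaves implicit.
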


\proof We use the fact that $u\in H^1(\Real^3,\Complex^m)$ implies $|u|\in H^1(\Real^3)$ and $\| |u| \|_{H^1}\leq \| u \|_{H^1}$ (see, for instance, Proposition~IX.3 in~\cite{Brezis}). Therefore, if $u\in H^1_\rho$, then $|u|\in H^1(\Real^3)$ is rotational symmetric and the inequality~(\ref{Eq:StraussIneq}) follows from the radial lemma with $m=1$~\cite{Strauss}.

To prove (b), suppose $(u_k)$ is a bounded sequence in $H^1_\rho$. Since $H^1_\rho$ is a Hilbert space, there exists a subsequence (denoted again by $(u_k)$) which converges weakly, $u_k\rightharpoonup u$, to some $u\in H^1_\rho$. Set $v_k := u_k - u\rightharpoonup 0$ and let $p\in (2,6)$. We claim that $(v_k)$ converges strongly to zero in $L^p(\Real^3,\Complex^m)$, which implies that $u_k\rightarrow u$ strongly in $L^p(\Real^3,\Complex^m)$. To prove the claim, let $\varepsilon > 0$ be arbitrary and set $K := \sup_{k\in \Natural} \| v_k \|^2_{H^1}$ which we can assume to be positive. According to~(\ref{Eq:StraussIneq}) there exists $R = R(\varepsilon) > 0$ such that
\begin{equation}
|v_k(x)|^{p-2}\leq \frac{\varepsilon}{K}
\quad\hbox{for almost all $x\in\Real^3\setminus B_R$ and all $k\in \Natural$},
\end{equation}
where $B_R$ denotes the ball of radius $R$ centered at the origin, such that
\begin{equation}
\int\limits_{\Real^3\setminus B_R} |v_k(x)|^p dx
\leq \frac{\varepsilon}{K}\int\limits_{\Real^3\setminus B_R} |v_k(x)|^2 dx\leq \varepsilon.
\end{equation}
Meanwhile, avoiding a contradiction with the Rellich-Kondrachov compactness theorem~\cite{Evans} requires that
\begin{equation}
\lim\limits_{k\to\infty}\int\limits_{B_R} |v_k(x)|^p dx = 0.
\end{equation}
Therefore,
\begin{equation}
\limsup_{k\to \infty} \int |v_k(x)|^p dx
 \leq \limsup_{k\to \infty}\int\limits_{B_R} |v_k(x)|^p dx + \limsup_{k\to \infty}\int\limits_{\Real^3\setminus B_R} |v_k(x)|^p dx
 \leq \varepsilon,
\end{equation}
and since $\varepsilon > 0$ was arbitrary, the claim follows.
\qed

\section{Existence of a global minimum in the irreducible case}
\label{Sec:Min}

With the preliminary results discussed in the previous section one can prove the following theorem which implies the first part of Theorem~\ref{Thm:Main}, regarding the existence of the minimizer. The second part of the Theorem, regarding the properties of the minimizers and their orbital stability are proven in Theorems~\ref{Thm:WeakSolution}, \ref{Thm:Strong}, and in Section~\ref{Sec:OrbitalStability}.

\begin{theorem}
\label{Thm:Existence}
Let $N > 0$ and assume the validity of conditions (i) and (ii) on $V$. Suppose $(u_k)$ is a sequence in $\mathcal{A}_{\rho,N} = \{ u \in H^1_\rho: \| u \|_2^2 = N \}$ which converges weakly to $u_*$ in $H^1_\rho$ and such that $\mathcal{E}[u_k]\to E_\ell(N)$. Then,
\begin{enumerate}
\item[(a)] $\| u_* \|_2^2 = N$, that is, $u_*\in \mathcal{A}_{\rho,N}$,
\item[(b)] $u_*$ is a minimizer: $\mathcal{E}[u_*] = E_{\ell}(N)$,
\item[(c)] $(u_k)$ converges strongly in $H^1_\rho$.
\end{enumerate}
\end{theorem}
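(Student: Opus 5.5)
The idea is to exploit the compact embedding of Lemma~\ref{Lem:Radial}(b) to pass to the limit in the nonlinear term $D$. A strict subadditivity or scaling inequality then rules out loss of mass, and strong $H^1$ convergence follows from convergence of the norms.

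\textbf{Step 1: the potential terms converge.} Since $(u_k)$ converges weakly, it is bounded in $H^1_\rho$, and $u_k\to u_*$ strongly in $L^p$ for $p\in(2,6)$. Writing $n_k=|u_k|^2$ and $n_*=|u_*|^2$, we have $\|n_k-n_*\|_{6/5}\leq \|u_k-u_*\|_{12/5}(\|u_k\|_{12/5}+\|u_*\|_{12/5})\to 0$, with $12/5\in(2,6)$. Lemma~\ref{Lem:BasicEstimates}(b) then gives $D[n_k]\to D[n_*]$.

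For $L$ the $L^1$ part is not compact, so I would use the splitting of Lemma~\ref{Lem:Potential}. The contribution of $W_\varepsilon$ is bounded by $\varepsilon(N+\|u_*\|_2^2)\leq 2\varepsilon N$. For $V_\varepsilon\in L^2$ supported in $B_R$, the estimate $\|n_k-n_*\|_{L^2(B_R)}\leq \|u_k-u_*\|_4(\|u_k\|_4+\|u_*\|_4)\to0$ handles the rest. Hence $L[n_k]\to L[n_*]$. Weak lower semicontinuity of the norm gives $T[u_*]\leq\liminf T[u_k]$ and $N_*:=\|u_*\|_2^2\leq N$. Consequently $\mathcal{E}[u_*]\leq \liminf\mathcal{E}[u_k]=E_\ell(N)$.

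\textbf{Step 2: no loss of mass (the main obstacle).} First, $u_*\neq 0$: otherwise $L[n_k],D[n_k]\to0$, so $\liminf \mathcal{E}[u_k]\geq 0$, contradicting $E_\ell(N)<0$ (Lemma~\ref{Lem:Coersive} and the remark after it). So $0<N_*\leq N$. Suppose $N_*<N$, and set $\theta:=N/N_*>1$ and $v:=\sqrt{\theta}\,u_*\in\mathcal{A}_{\rho,N}$. Then
\[
\mathcal{E}[v]=\theta T[u_*]-\theta L[n_*]-\theta^2 D[n_*] = \theta\,\mathcal{E}[u_*]-(\theta^2-\theta)D[n_*] < \theta\,\mathcal{E}[u_*],
\]
using $D[n_*]>0$ because $u_*\neq0$. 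Since $\mathcal{E}[u_*]\leq E_\ell(N)<0$ and $\theta>1$, we get $\mathcal{E}[v]<\theta E_\ell(N)<E_\ell(N)$, which contradicts the definition of $E_\ell(N)$. Hence $N_*=N$, proving (a).

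\textbf{Step 3: conclusion.} Since $u_*\in\mathcal{A}_{\rho,N}$, we have $\mathcal{E}[u_*]\geq E_\ell(N)$, and together with Step 1 this gives equality, proving (b). Moreover, Step 1 shows $T[u_k]=\mathcal{E}[u_k]+L[n_k]+D[n_k]\to E_\ell(N)+L[n_*]+D[n_*]=T[u_*]$, so $\|\nabla u_k\|_2\to\|\nabla u_*\|_2$. Also $\|u_k\|_2^2=N=\|u_*\|_2^2$. Thus $\|u_k\|_{H^1}\to\|u_*\|_{H^1}$, and combined with weak convergence in the Hilbert space $H^1_\rho$ this yields strong convergence, proving (c).

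The delicate point is Step 2 together with the $L^1$ tail of $L$ in Step 1. The mass cannot be controlled by compactness, since the embedding into $L^2$ is not compact; it is controlled instead by the strict inequality that comes from the quadratic scaling of $D$ combined with negativity of the energy.
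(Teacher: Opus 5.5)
Your proposal is correct and follows essentially the same route as the paper. The shared ingredients are the compact embedding of Lemma~\ref{Lem:Radial}(b) to pass to the limit in $D$ and (via the splitting of Lemma~\ref{Lem:Potential}) in $L$, weak lower semicontinuity of $T$, the rescaling $v=\sqrt{N/\|u_*\|_2^2}\,u_*$ combined with $E_\ell(N)<0$ to exclude loss of mass, and norm convergence plus weak convergence to get strong convergence. The differences are cosmetic: you localize the $V_\varepsilon$-term to $B_R$, and you deduce $u_*\neq 0$ from $\liminf\mathcal{E}[u_k]\geq 0$ rather than from $\mathcal{E}[0]=0$.
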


\begin{remark}
In particular, the theorem implies that any minimizing sequence is relatively compact.
\end{remark}

\proofof{Theorem~\ref{Thm:Existence}} Using Lemma~\ref{Lem:Radial} and its proof, we have $u_k\rightharpoonup u_*$ weakly in $H^1_\rho$ and $u_k\rightarrow u_*$ strongly in $L^p(\Real^3,\Complex^m)$ for all $p\in (2,6)$. Consequently, $n_k := |u_k|^2\to  |u_*|^2 =: n_*$ strongly in $L^q(\Real^3,\Complex^m)$ for all $q\in (1,3)$. According to Lemma~\ref{Lem:BasicEstimates}(b) this implies
\begin{equation}
\lim\limits_{k\to\infty} D[n_k] = D[n_*].
\end{equation}
Next, we claim that also
\begin{equation}
\lim\limits_{k\to\infty} L[n_k] = L[n_*].
\end{equation}
For this, let $\varepsilon > 0$. Since $\| n_k \|_1 = N$ and $u_k\rightharpoonup u_*$ it follows that $\| n_* \|_1 = \| u_* \|_2^2\leq \liminf\limits_{k\to\infty} \| u_k \|_2^2 = N$. Hence, according to Lemma~\ref{Lem:BasicEstimates}(a),
\begin{equation}
\limsup_{k\to \infty}\left| L[n_k] - L[n_*] \right| \leq C_\varepsilon\limsup_{k\to \infty} \| n_k - n_* \|_2 + 2N\varepsilon = 2N\varepsilon,
\end{equation}
and since $\varepsilon > 0$ is arbitrary the claim follows.

Next, we use the fact that the Dirichlet functional is weakly lower semi-continuous (see Section~8.2 in Ref.~\cite{Evans}),
\begin{equation}
T[u_*]\leq \liminf_{k\to\infty} T[u_k],
\label{Eq:DirichletWLSC}
\end{equation}
to conclude that
\begin{equation}
\mathcal{E}[u_*] \leq \liminf_{k\to\infty} \mathcal{E}[u_k] = E_{\ell}(N).
\end{equation}

In a next step we show that $\| u_* \|_2^2 = N$, which implies statements (a) and (b) of the theorem since we then know that $E_{\ell}(N)\leq \mathcal{E}[u_*]$. To prove this claim, recall that $\tilde{N} := \| u_* \|_2^2\leq N$ and assume by contradiction that $\tilde{N} < N$. Clearly, $\tilde{N} > 0$ otherwise $u_* = 0$ which would contradict $\mathcal{E}[u_*] \leq E_{\ell}(N) < 0$. Set
\begin{equation}
v := \alpha u_* \in \mathcal{A}_{\rho,N},\qquad
\alpha := \sqrt{\frac{N}{\tilde{N}}} > 1.
\end{equation}
It follows that
\begin{equation}
\mathcal{E}[v] = \alpha^2\left( T[u_*] -  L[n_*] - \alpha^2 D[n_*] \right)
 < \alpha^2 \mathcal{E}[u_*] < E_{\ell}(N),
\end{equation}
which would contradict the definition of $E_\ell(N)$. Therefore, $\| u_* \|_2^2 = N$ as claimed.

To prove (c) we note that now that we know that $\mathcal{E}[u_*] = E_{\ell}(N)$, Eq.~(\ref{Eq:DirichletWLSC}) can be strengthened to
\begin{equation}
\lim_{k\to\infty} T[u_k] =  T[u_*],
\end{equation}
which, together with $\| u_k \|_2 = \| u_* \|_2 = \sqrt{N}$ implies that
\begin{equation}
\lim\limits_{k\to\infty} \| u_k \|_{H^1} = \| u_* \|_{H^1}.
\end{equation}
Since $u_k\rightharpoonup u_*$, it follows that $u_k\rightarrow u_*$ strongly in $H^1_\rho$ (see, for instance, Propositions~III.30 and V.1 in Ref.~\cite{Brezis}).
\qed

\begin{remark}
Note that, so far, we have not used the rotational symmetry of the potential $V$. However, this requirement will be needed in the next section to establish the fact that $u_*$ is a weak (and strong) solution of Eq.~(\ref{Eq:SPTimeIndependent}). The strong convergence of the minimizing sequence in $H^1_\rho$ will be important to show the orbital stability of the solution in section~\ref{Sec:OrbitalStability}.
\end{remark}

\section{Properties of the minimum in the irreducible case}
\label{Sec:Properties}

In this section we prove that a minimizer $u_*$ of $E_{\ell}(N)$ is a strong solution of Eq.~(\ref{Eq:SPTimeIndependent}) and discuss some of its properties. In a first step, we show that $u_*\in H^1_\rho$ is a weak solution of Eq.~(\ref{Eq:SPTimeIndependent}). For this, we first note the following lemma that will be useful:

\begin{lemma}
\label{Lem:GravPot}
Let $u\in H^1(\Real^3,\Complex^m)$. Then, the associated gravitational potential $U_u := \Delta^{-1}(|u|^2): \Real^3\to \Real$ is bounded, continuous, non-positive, and
\begin{equation}
\lim\limits_{|x|\to \infty} U_u(x) = 0.
\end{equation}
In particular, $U_u$ satisfies the same conditions (i) and (ii) as $V$.
\end{lemma}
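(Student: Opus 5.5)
The plan is to split the Newtonian kernel into a singular local part and a bounded tail, and to apply Young's and H\"older's inequalities to each piece. Set $n := |u|^2$. By Sobolev's inequality $n\in Y = L^1\cap L^3$, so in particular $n\in L^q$ for every $q\in[1,3]$. Non-positivity is immediate from the formula $U_u(x) = -\frac{1}{4\pi}\int n(y)|x-y|^{-1}dy$, since $n\geq 0$.

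For boundedness and continuity, I would write $|x|^{-1} = k_1 + k_2$ with $k_1 := |x|^{-1}\chi_{B_1}$ and $k_2 := |x|^{-1}\chi_{\Real^3\setminus B_1}$. Then $k_1\in L^p$ for all $p<3$, and in particular $k_1\in L^2$. Likewise $k_2\in L^\infty$, and in fact $k_2\in L^p$ for all $p>3$.

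Hence $-4\pi U_u = k_1 * n + k_2 * n$, and Young's inequality gives
\[
\|U_u\|_\infty \leq C\left(\|k_1\|_2\|n\|_2 + \|k_2\|_\infty\|n\|_1\right) < \infty .
\]
For continuity, I would use the standard fact that a convolution of $f\in L^p$ with $g\in L^{p'}$, where $1\leq p<\infty$, is uniformly continuous; this follows from continuity of translations in $L^p$. I apply it to $k_1*n$ with $(p,p') = (2,2)$. For $k_2*n$ I instead pair $k_2\in L^4$ with $n\in L^{4/3}$, so that the $L^p$ factor has finite exponent.

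The remaining claim, $U_u(x)\to 0$ as $|x|\to\infty$, is the only mildly delicate point. A convolution $f*g$ with $f\in L^p$, $g\in L^{p'}$ and $1<p<\infty$ vanishes at infinity. To see this, approximate both factors in norm by compactly supported functions: the convolution of two compactly supported functions has compact support, and the error is uniform in $x$ by H\"older's inequality. Both pairings above, $(k_1,n)$ in $L^2\times L^2$ and $(k_2,n)$ in $L^4\times L^{4/3}$, have finite conjugate exponents, so both terms tend to zero.

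Alternatively, I could argue directly. Split the integral over $|y|<|x|/2$ and its complement. On the first region $|x-y|\geq|x|/2$, so the contribution is at most $2N/|x|$. On the complement, the contribution is bounded by the $k_1,k_2$ estimates applied to $n\chi_{|y|\geq |x|/2}$, whose $L^1$ and $L^2$ norms tend to zero by dominated convergence.

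Finally, a bounded function is in $L^2_{\rm loc}$, so conditions (i) and (ii) for $U_u$ follow.
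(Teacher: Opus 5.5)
Your proposal is correct and follows essentially the paper's argument: split the Newtonian kernel into a piece $k_1\in L^2$ near the origin and a tail $k_2\in L^4$, bound $U_u$ pointwise via H\"older, and get continuity and decay at infinity by approximating both factors with compactly supported functions. The differences are cosmetic. You get continuity from continuity of translations rather than from the uniform approximation, you use the bound $\|k_2\|_\infty\|n\|_1$, and you add an optional direct splitting argument for the decay.
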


\proof [cf. Lemma II.25 in Ref.~\cite{LiebSimon77}]
By Sobolev's inequality, $n:=|u|^2\in L^q(\Real^3)$ for all $1\leq q\leq 3$. Since $U_u = k * n$ is the convolution of
\begin{equation}
k(x) := -\frac{1}{4\pi|x|},\qquad x\in\Real^3\setminus \{ 0 \},
\end{equation}
with $n$, and since $k = k_1 + k_2$ with $k_1\in L^2(\Real^3)$ and $k_2\in L^4(\Real^3)$, it follows from H\"older's inequality that $U_u\in L^\infty(\Real)$ and that $\| U_u \|_\infty \leq \| k_1 \|_2 \| n \|_2 + \| k_2 \|_4 \| n \|_{4/3}$. Furthermore, since the space of compactly supported smooth functions $C_0^\infty(\Real^3)$ is dense in $L^q(\Real)$ for all $1\leq q < \infty$ it follows that $k_1$, $k_2$, and $n$  can be approximated by functions in $C_0^\infty(\Real^3)$. Therefore, $U_u$ can be approximated in $L^\infty(\Real^3)$ by functions in $C_0^\infty(\Real^3)$, which implies that it is continuous and converges to zero when $|x|\to \infty$. Finally, since $k\leq 0$ and $n\geq 0$ it follows that $U_u\leq 0$.
\qed

Next, we recall the notation $X = H^1(\Real^3,\Complex^m)$ and show:

\begin{lemma}
\label{Lem:Gateaux}
The energy functional $\mathcal{E}: X\to \Real$ defined in Eq.~(\ref{Eq:EnergyFunctional}) is G\^ateaux-differentiable:
\begin{equation}
\left. \frac{d}{dt}\mathcal{E}[u + t v] \right|_{t=0} = \re\left( B[u;v] \right),
\qquad u,v\in X,
\label{Eq:GateauxDeriv}
\end{equation}
where for each $u\in X$, $B[u; \cdot]: X\to \Complex$ is the bounded linear functional, given explicitly by
\begin{equation}
B[u;v] := \int \left[\overline{\nabla u(x)}\cdot \nabla v(x)+ (V(x) + U_u(x)) \overline{u(x)}\cdot v(x)  
\right] dx,\qquad v\in X.
\label{Eq:BDef}
\end{equation}
\end{lemma}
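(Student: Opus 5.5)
We treat the three pieces $T$, $L$, and $D$ of $\mathcal{E}$ separately, compute each directional derivative at $t=0$, and check that the sum is $\re B[u;v]$. Afterwards we show that $B[u;\cdot]$ is a bounded linear functional on $X$. Fix $u,v\in X$ and set $n_t := |u+tv|^2 = |u|^2 + 2t\,\re(\overline{u}\cdot v) + t^2|v|^2$.

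For the kinetic term, $T[u+tv]$ is an exact quadratic polynomial in $t$:
\begin{equation}
T[u+tv] = T[u] + t\,\re\int \overline{\nabla u}\cdot\nabla v\,dx + t^2 T[v],
\end{equation}
so its derivative at $t=0$ is $\re\int\overline{\nabla u}\cdot\nabla v\,dx$. The potential term is linear in $n$, so
\begin{equation}
L[n_t] = L[|u|^2] + t\,\re\int(-V)\,\overline{u}\cdot v\,dx + t^2 L[|v|^2].
\end{equation}
Every integral here is finite: by Lemma~\ref{Lem:Potential} and H\"older,
\begin{equation}
\int |V||u||v|\,dx \leq \|V_\varepsilon\|_2\|u\|_4\|v\|_4 + \varepsilon\|u\|_2\|v\|_2,
\end{equation}
and this is bounded by $C\|u\|_{H^1}\|v\|_{H^1}$ through Sobolev's inequality. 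Hence $-\frac{d}{dt}L[n_t]|_{t=0} = \re\int V\overline{u}\cdot v\,dx$.

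For the gravitational term we expand the symmetric bilinear form $D(n,m) := \frac{1}{16\pi}\iint \frac{n(x)m(y)}{|x-y|}\,dx\,dy$ in $n_t$. The linear coefficient is $2D(|u|^2, 2\re(\overline{u}\cdot v))$. Using the definition of $U_u$ in Eq.~(\ref{Eq:GravPotential}), this equals
\begin{equation}
-\re\int U_u\,\overline{u}\cdot v\,dx,
\end{equation}
since $\frac{1}{16\pi}\cdot 4 = \frac{1}{4\pi}$. The coefficients of $t^2$, $t^3$, and $t^4$ are finite by the Hardy--Littlewood--Sobolev estimate in Lemma~\ref{Lem:BasicEstimates}(b), because $|u|^2$, $|v|^2$, and $\overline{u}\cdot v$ all lie in $L^{6/5}$ by Sobolev. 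So $D[n_t]$ is a polynomial in $t$, and $-\frac{d}{dt}D[n_t]|_{t=0} = \re\int U_u\overline{u}\cdot v\,dx$. Adding the three contributions gives Eq.~(\ref{Eq:GateauxDeriv}).

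Linearity of $B[u;\cdot]$ is clear. For boundedness, the gradient term is at most $\|\nabla u\|_2\|\nabla v\|_2$, and the $V$-term was estimated above. For the $U_u$-term, Lemma~\ref{Lem:GravPot} gives $U_u\in L^\infty$, so it is at most $\|U_u\|_\infty\|u\|_2\|v\|_2$. The only delicate step is justifying that the $D$-expansion is legitimate, namely that Fubini and symmetry can be used to rewrite the cross term through $U_u$. This follows from absolute convergence, which is guaranteed by the Hardy--Littlewood--Sobolev inequality applied to $|u|^2$ and $|u||v|$ in $L^{6/5}$.
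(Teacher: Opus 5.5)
Your proposal is correct and is essentially the paper's argument. The paper likewise observes that $\mathcal{E}[u+tv]$ is an exact quartic polynomial in $t$, writing all coefficients at once ($\frac{1}{2}S[u;v]$, $\re\int \overline{u}\,U_v v\,dx$, and $-D[|v|^2]$ for the higher orders), and reads off $\re(B[u;v])$ as the linear coefficient. The only cosmetic difference is in the boundedness of $B[u;\cdot]$: the paper notes via Lemma~\ref{Lem:GravPot} that $V+U_u$ satisfies condition (i), whereas you treat the $V$- and $U_u$-terms separately and spell out the H\"older, Sobolev and Hardy--Littlewood--Sobolev/Fubini checks that the paper leaves implicit.
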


\proof Let $u,v\in X$ be fixed. It is simple to verify that $\mathcal{E}[u + tv]$ is a fourth-order polynomial in $t\in\Real$. Explicitly, one finds
\begin{equation}
\mathcal{E}[u + tv] - \mathcal{E}[u] = t\re\left( B[u;v] \right) + \frac{t^2}{2} S[u;v] + t^3\re\int \overline{u(x)} U_v(x) v(x) dx
- t^4 D[|v|^2],
\label{Eq:Eu+tv}
\end{equation}
with $B[u; v]$ as in Eq.~(\ref{Eq:BDef}) and
\begin{equation}
S[u;v] := \int \left[ | \nabla v(x)|^2 + (V(x) + U_u(x)) |v(x)|^2  \right] dx - 2D[2\re(\overline{u} v)],
\qquad v\in X.
\end{equation}
Due to Lemma~\ref{Lem:GravPot}, $V + U_u$ satisfies the same condition (i) as $V$, from which it follows easily that $B$ satisfies the desired conditions. Likewise, one verifies that $S$ and the remaining coefficients in Eq.~(\ref{Eq:Eu+tv}) are well-defined. Hence Eq.~(\ref{Eq:GateauxDeriv}) follows.
\qed

\begin{remark}
Suppose $u\in H^2(\Real^3,\Complex^m)$ satisfies the stationary problem~(\ref{Eq:SPTimeIndependent}) for some $\omega\in\Real$. Taking the $L^2$-scalar product of both sides of Eq.~(\ref{Eq:SPTimeIndependent}) with $v\in X$ yields
\begin{equation}
B[u; v] = \omega(u,v),\qquad v\in X,
\label{Eq:SPTTimeIndependentWeakForm}
\end{equation}
where $(u,v):=\int \overline{u(x)} v(x) dx$. We define a weak solution of Eq.~(\ref{Eq:SPTimeIndependent}) to be a function $u\in X$ such that Eq.~(\ref{Eq:SPTTimeIndependentWeakForm}) holds for all $v\in X$.
\end{remark}

The next result shows that a minimizer $u_*$ constructed in the previous section is a weak solution of Eq.~(\ref{Eq:SPTimeIndependent}) satisfying Eq.~(\ref{Eq:Virial2}).

\begin{theorem}
\label{Thm:WeakSolution}
Let $N > 0$ and suppose $V$ satisfies the conditions (i) and (ii) and is rotational symmetric. Then, a minimizer $u_*$ of $E_{\ell}(N)$ is a weak solution of Eq.~(\ref{Eq:SPTimeIndependent}. Moreover, 
\begin{equation}
\omega = \frac{2}{N}\left( E_{\ell}(N) - D[n_*] \right) < 0,\qquad n_* = |u_*|^2,
\label{Eq:omega}
\end{equation}
and
\begin{equation}
S[u_*,v] \geq \omega \| v \|_2^2
\label{Eq:SecondVariation}
\end{equation}
for all $v\in H^1_\rho$ satisfying $(v,u_*) = 0$.
\end{theorem}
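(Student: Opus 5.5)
We first obtain the Euler--Lagrange equation for variations \emph{inside} $H^1_\rho$, and then remove the restriction to $H^1_\rho$ using the rotational symmetry of $V$.

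\textbf{Step 1: constrained first variation.} For $v\in H^1_\rho$ with $\re(u_*,v)=0$, set
\[
w_t := \sqrt{N}\,\frac{u_*+tv}{\|u_*+tv\|_2}\in\mathcal{A}_{\rho,N}.
\]
Minimality gives $\frac{d}{dt}\mathcal{E}[w_t]|_{t=0}=0$. By Lemma~\ref{Lem:Gateaux} and the fact that $\|u_*+tv\|_2^2 = N + O(t^2)$, this reduces to $\re B[u_*;v]=0$. For a general $v\in H^1_\rho$ we write $v = c\,u_* + v_\perp$ with $c = (u_*,v)/N$ and $(u_*,v_\perp)=0$. Replacing $v_\perp$ by $iv_\perp$ (which stays in $H^1_\rho$, since $\rho$ is complex linear) also gives $\operatorname{Im}B[u_*;v_\perp]=0$. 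Hence $B[u_*;v]=\omega(u_*,v)$ for all $v\in H^1_\rho$, with
\[
\omega := B[u_*;u_*]/N .
\]
Since $B[u_*;u_*] = 2T[u_*] - 2L[n_*] + \int U_{u_*}n_* = 2\mathcal{E}[u_*] - 2D[n_*]$ (using $\int U_u n = -4D[n]$), we get formula~(\ref{Eq:omega}). Negativity follows from $E_\ell(N)<0$ and $D\ge0$.

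\textbf{Step 2: extension to all $v\in X$ (main obstacle).} We introduce the averaging projector
\[
P v := \int_{SO(3)}\rho(R)v\,dR
\]
with respect to Haar measure. It is a bounded orthogonal projection on $L^2$ and on $X$ onto $H^1_\rho$: the $H^1$ norm is $\rho$-invariant because $D(R)$ is unitary and gradients rotate. This is presumably the operator of appendix~\ref{App:Projector}. The key claim is $B[u_*;\rho(R)v]=B[u_*;v]$. We check it by substituting $x\mapsto Rx$ and using three facts: $\rho(R^{-1})u_* = u_*$; unitarity of $D(R)$; and rotational invariance of $V$ and of $U_{u_*}$ (the latter because $|u_*|$ is radial, as $|D(R)u_*(R^{-1}x)|=|u_*(R^{-1}x)|$). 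Averaging the identity over $SO(3)$ (justified since $B[u_*;\cdot]$ is bounded on $X$ and the Bochner integral converges in $X$) gives $B[u_*;v]=B[u_*;Pv]$. Likewise $(u_*,v)=(u_*,Pv)$. Step~1 then yields $B[u_*;v]=\omega(u_*,v)$ for all $v\in X$, i.e.\ $u_*$ is a weak solution. The delicate points are the continuity of $R\mapsto\rho(R)v$ in $X$ and the invariance of $U_{u_*}$; both are routine.

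\textbf{Step 3: second variation.} Take $v\in H^1_\rho$ with $(v,u_*)=0$, let $\|v\|_2^2 = \nu$, and define $w_t$ as above, now with $\|u_*+tv\|_2^2=N+t^2\nu$. Then
\[
\mathcal{E}[w_t] = \lambda_t\,\mathcal{E}\!\left[\sqrt{\lambda_t}\,(u_*+tv)\right]\Big/\lambda_t,\qquad \lambda_t := \frac{N}{N+t^2\nu} = 1 - \frac{t^2\nu}{N} + O(t^4).
\]
We expand using (\ref{Eq:Eu+tv}) together with the scaling of the quadratic part ($T-L$, scaling like $\lambda_t$) and the quartic part ($D$, scaling like $\lambda_t^2$). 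The first-order term vanishes by Step~1. Collecting the $t^2$ coefficient gives
\[
\tfrac12 S[u_*;v] - \frac{\nu}{N}\bigl(T[u_*]-L[n_*]-2D[n_*]\bigr) = \tfrac12 S[u_*;v] - \frac{\nu}{N}\bigl(E_\ell(N)-D[n_*]\bigr) = \tfrac12\bigl(S[u_*;v]-\omega\nu\bigr).
\]
Minimality at $t=0$ forces this coefficient to be $\ge0$, which is (\ref{Eq:SecondVariation}).
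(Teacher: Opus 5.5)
Your proof is correct and follows the same overall plan as the paper: a constrained first variation inside $H^1_\rho$, complexification via $v\mapsto iv$, a second variation along directions $L^2$-orthogonal to $u_*$, and removal of the symmetry restriction through the identity $B[u_*;v]=B[u_*;Pv]$. The differences are in execution.

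In the first variation, the paper differentiates the normalization factor for arbitrary $v\in H^1_\rho$, using $\frac{d}{dt}\alpha(t)^2=-2N^{-1}\re(u_*,v)$ at $t=0$, so that $\omega$ comes out directly as $\frac{2}{N}(E_\ell(N)-D[n_*])$. You instead restrict to tangent directions, split off the $u_*$-component, identify the multiplier as $B[u_*;u_*]/N$, and evaluate it with $\int U_{u_*}n_*\,dx=-4D[n_*]$. This is equivalent and equally short.

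The more substantive difference is the extension to all of $X$. The paper takes $P$ to be the $L^2$-orthogonal projector onto $L^2_\rho$ and proves its properties by Fourier analysis in appendix~\ref{App:Projector}: $P(H^1)=H^1_\rho$, symmetry with respect to the Dirichlet form, and commutation with multiplication by radial functions. You realize the same operator as the Haar average $\int_{SO(3)}\rho(R)\,dR$ and verify $B[u_*;\rho(R)v]=B[u_*;v]$ for each $R$ by a change of variables in the integrand. Your argument is more elementary; it is essentially the standard proof of the symmetric-criticality principle that the paper mentions only in a remark. It also works directly with the $L^1$ integrand of $B$, so it does not need the hypothesis $fu\in L^2$ of Proposition~\ref{Prop:Projector}(c). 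For $f=V+U_{u_*}$, with $V$ only in $L^2_{\rm loc}$ and $u_*$ only in $H^1$, that hypothesis is not automatic and would require a truncation argument.

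One cosmetic point: the displayed identity for $\mathcal{E}[w_t]$ in your Step~3 is vacuous as written. What you actually use is $\mathcal{E}[w_t]=\lambda_t\left(T[u_*+tv]-L[|u_*+tv|^2]\right)-\lambda_t^2D[|u_*+tv|^2]$, and with this your $t^2$-coefficient is correct.
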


\proof Let $u_*\in H^1_\rho$ be a minimizer and set $n_* := |u_*|^2$. In a first step we show that Eq.~(\ref{Eq:SPTTimeIndependentWeakForm}) holds for all $v\in H^1_\rho$. For this, let $v\in H^1_\rho$ and consider the variation
\begin{equation}
w(t) := \alpha(t) (u_* + t v),
\qquad \alpha(t) := \frac{\sqrt{N}}{\| u_* + t v \|_2},
\qquad |t| < \varepsilon,
\end{equation}
in $\mathcal{A}_{\rho,N}$, with $\varepsilon > 0$ small enough such that $u_* + t v\neq 0$ for all $t\in (-\varepsilon,\varepsilon)$. Using the minimizing property of $u_*$, the definition of $\mathcal{E}$, Leibnitz' rule, and Eq.~(\ref{Eq:GateauxDeriv}) one obtains, taking into account that $\alpha(0)=1$,
\begin{align}
0 = \left. \frac{d}{dt} \mathcal{E}[w(t)] \right|_{t=0}
 &= \left. \frac{d}{dt} \left\{ \alpha(t)^2\mathcal{E}[u_* + tv] + \alpha(t)^2(1-\alpha(t)^2) D[ |u_* + tw|^2 ]
 \right\} \right|_{t=0}
\nonumber\\
 &= \left. \frac{d}{dt} \alpha(t)^2 \right|_{t=0} \left( \mathcal{E}[u_*] - D[n_*] \right)
 + \left. \frac{d}{dt} \mathcal{E}[u_* + tv] \right|_{t=0}.
\end{align}
Since $\left. \frac{d}{dt} \alpha(t)^2 \right|_{t=0} = -2N^{-1}\re (u_*,v)$ and $\mathcal{E}[u_*] = E_\ell(N)$ one finds
\begin{equation}
 0 = \re\left(  B[u_*;v] - \omega (u_*,v) \right),
\label{Eq:StationaryPoint}
\end{equation}
with $\omega$ given by Eq.~(\ref{Eq:omega}). Interchanging $v$ with $i v$ in Eq.~(\ref{Eq:StationaryPoint}) yields Eq.~(\ref{Eq:SPTTimeIndependentWeakForm}) for $v\in H^1_\rho$.

Next, assume that $(v,u_*) = 0$, which implies that $\left. \frac{d}{dt} \alpha(t)^2 \right|_{t=0} = 0$ and $\left. \frac{d^2}{dt^2} \alpha(t)^2 \right|_{t=0} = -\frac{2}{N} \| v \|_2^2$. Using again the minimizing property of $u_*$ and Eqs.~(\ref{Eq:Eu+tv}) and (\ref{Eq:omega}) one finds
\begin{equation}
0\leq \left. \frac{d^2}{dt^2} \mathcal{E}[w(t)] \right|_{t=0}
 = S[u_*;v] - \omega \| v \|_2^2,
\end{equation}
which proves~(\ref{Eq:SecondVariation}).

In a final step we prove that Eq.~(\ref{Eq:SPTTimeIndependentWeakForm}) holds for arbitrary $v\in X$. To this purpose, we introduce the orthogonal projector $P: X\to X$ onto the space $H^1_\rho$ whose properties are analyzed in appendix~\ref{App:Projector}. In particular, it can be shown that $P$ leaves the scalar product of $L^2(\Real^3,\Complex^m)$ invariant and that $P$ commutes with the multiplication operator by any rotational-symmetric function. Since for any $u\in X$, $R\in SO(3)$, and $x\in\Real^3$,
\begin{equation}
U_u(R^{-1} x) = -\frac{1}{4\pi}\int \frac{|u(R^{-1}y)|^2}{|x-y|} dy
= -\frac{1}{4\pi}\int \frac{|[\rho(R) u](y)|^2}{|x-y|} dy,
\end{equation}
it follows that $U_u$ is rotional-symmetric for the minimizer $u=u_*\in H_\rho^1$. Therefore, under the assumption that $V$ is also rotational symmetric, the form $B: X\times X\to \Complex$ defined by Eq.~(\ref{Eq:BDef}) satisfies
\begin{equation}
B[u_*;v] = B[u_*;Pv],
\end{equation}
for all $v\in X$, since $Pu_* = u_*$. Hence, Eq.~(\ref{Eq:SPTTimeIndependentWeakForm}) holds for all $v\in X$ and the claim follows.
\qed

\begin{remark}
An alternative approach to establish that Eq.~(\ref{Eq:SPTTimeIndependentWeakForm}) holds for an arbitrary test function $v \in X$ relies on the principle of symmetric criticality~\cite{PalaisSymmetric}. By recognizing $\mathcal{A}_N = \{u \in H^1(\mathbb{R}^3, \mathbb{C}^{2\ell + 1}) : \|u\|_2^2 = N\}$ as a smooth $SO(3)$-manifold, the compactness of the Lie group $SO(3)$ allows one to invoke Theorem~5.4 of~\cite{PalaisSymmetric}. This result ensures that the minimizer $u_*$, obtained over the symmetric subspace $\mathcal{A}_{\rho, N}$, is in fact a critical point of the energy functional $\mathcal{E}$ over the entire manifold $\mathcal{A}_N$. Consequently, one could take directly $v\in X$ (instead of $v\in H^1_\rho$) in the proof of Theorem~\ref{Thm:WeakSolution} and conclude that Eq.~\eqref{Eq:StationaryPoint} is valid for all $v\in X$.
\end{remark}

Next, we prove that any weak solution is automatically a strong solution. For the following, it is convenient to abbreviate $L^2:=L^2(\Real^3,\Complex^m)$ and $H^s:=H^s(\Real^3,\Complex^m)$ for $s=1,2,3,\ldots$. We first use the following lemma:

\begin{lemma}
\label{Lem:H0Properties}
Suppose $V$ satisfies condition (i). Then, the linear operator $H_0 := -\Delta + V: D(H_0)\subset L^2\to L^2$ with domain $D(H_0) := H^2$ is self-adjoint and bounded from below.
\end{lemma}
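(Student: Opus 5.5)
The plan is to use the Kato--Rellich theorem, treating $V$ as a small perturbation of the free Laplacian. It is classical that $-\Delta$ with domain $H^2$ is self-adjoint and non-negative on $L^2$; this can be seen via the Fourier transform, since it is unitarily equivalent to multiplication by $|k|^2$ on its maximal domain, and it acts componentwise on $\Complex^m$. It therefore suffices to show that the multiplication operator by $V$ is $(-\Delta)$-bounded with relative bound strictly less than one (indeed, arbitrarily small). In other words, we want $H^2\subset D(V)$ and, for each $a>0$, a constant $b_a$ with
\begin{equation}
\| V u \|_2 \leq a \| \Delta u \|_2 + b_a \| u \|_2,\qquad u\in H^2.
\end{equation}
Kato--Rellich then gives that $H_0=-\Delta+V$ is self-adjoint on $D(H_0)=H^2$ and bounded from below by $-\max\{b_a/(1-a),\ldots\}$; more precisely, by $-b_a/(1-a)$ for any $a<1$.

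To obtain this estimate we split $V=V_\varepsilon+W_\varepsilon$ using Lemma~\ref{Lem:Potential}, with $V_\varepsilon\in L^2$ and $\|W_\varepsilon\|_\infty\leq\varepsilon$. The bounded part is trivial: $\|W_\varepsilon u\|_2\leq \varepsilon\|u\|_2$. For the $L^2$ part, Hölder gives $\|V_\varepsilon u\|_2\leq \|V_\varepsilon\|_2\|u\|_\infty$. It thus remains to bound $\|u\|_\infty$ for $u\in H^2(\Real^3)$.

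The key tool is the standard three-dimensional estimate, valid for every $\delta>0$:
\begin{equation}
\| u\|_\infty \leq \delta \|\Delta u\|_2 + C\delta^{-3}\|u\|_2 .
\end{equation}
To prove it, we bound $\|u\|_\infty\leq (2\pi)^{-3/2}\|\hat u\|_1$ and write $\hat u = (|k|^2+\lambda^2)^{-1}(|k|^2+\lambda^2)\hat u$. Cauchy--Schwarz then gives
\begin{equation}
\|\hat u\|_1 \leq \|(|k|^2+\lambda^2)^{-1}\|_2\,\bigl(\|\Delta u\|_2+\lambda^2\|u\|_2\bigr),
\end{equation}
where $\|(|k|^2+\lambda^2)^{-1}\|_2 = c\lambda^{-1/2}$ by scaling. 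Choosing $\lambda$ large makes the coefficient of $\|\Delta u\|_2$ as small as desired. The same argument shows that $u\in H^2$ is bounded, so $H^2\subset D(V)$ and $Vu\in L^2$. Applying the estimate componentwise handles $\Complex^m$. Combining the two parts yields the relative bound with $a=\|V_\varepsilon\|_2\,\delta$ arbitrarily small.

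We expect the only genuinely delicate point to be this $L^\infty$ interpolation bound, i.e. ensuring that the $L^2_{\rm loc}$ singularity of $V$ (for instance $-M/|x|$) is controlled by the Laplacian with small relative bound. The decay condition in (i), packaged in Lemma~\ref{Lem:Potential}, is what makes a global (rather than merely local) $L^2$ piece available. The remaining steps are direct citations: the self-adjointness of $-\Delta$ on $H^2$ and the Kato--Rellich theorem, including its lower bound statement. Condition (ii) is not needed here.
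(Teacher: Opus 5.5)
Your proof is correct and follows the paper's route: Kato--Rellich applied to the splitting $V=V_\varepsilon+W_\varepsilon\in L^2+L^\infty$ of Lemma~\ref{Lem:Potential}. You prove the relative bound yourself, via the Fourier-space $L^\infty$ estimate, whereas the paper cites it (Reed--Simon, Theorem~X.15). For the lower bound, the paper uses the form estimate $(u,H_0u)=2T[u]-2L[n]\geq -C\|u\|_2^2$ from Lemma~\ref{Lem:BasicEstimates}(c), while you read it off from the Kato--Rellich bound $-b_a/(1-a)$; both are valid.
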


\proof As a consequence of Lemma~\ref{Lem:Potential}, $V\in L^2(\Real^3) + L^\infty(\Real^3)$ which means that it lies in the Kato-Rellich class and implies that $H_0$ is self-adjoint in $L^2$ with domain $H^2$, see for instance chapter~V in~\cite{Kato-Book} or Theorem X.15 in~\cite{ReedSimon}. To prove the lower bound, we use Lemma~\ref{Lem:BasicEstimates}(c) to conclude that for each $u\in H^2$,
\begin{equation}
(u,H_0 u) = 2T[u] - 2L[n] \geq - C\| u \|_2^2,\qquad
n = |u|^2,
\end{equation}
which establishes the claim.
\qed

\begin{theorem}
\label{Thm:Strong}
Suppose the conditions (i) and (ii) on $V$ are satisfied and $u\in H^1(\Real^3,\Complex^m)$ is a weak solution of Eq.~(\ref{Eq:SPTimeIndependent}) with $\omega < 0$. Then, $u\in H^2(\Real^3,\Complex^m)$ and $u$ is strong solution of Eq.~(\ref{Eq:SPTimeIndependent}).

Moreover, if $V\equiv 0$, then $u\in C^\infty(\Real^3,\Complex^m)$ is a classical solution of Eq.~(\ref{Eq:SPTimeIndependent}).
\end{theorem}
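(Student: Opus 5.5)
The plan is to rewrite the weak formulation as a resolvent equation for the self-adjoint operator $H_0$ from Lemma~\ref{Lem:H0Properties}. Set $W := V + U_u$, where $U_u = \Delta^{-1}(|u|^2)$. By Lemma~\ref{Lem:GravPot}, $U_u$ is bounded and continuous, so $W$ again lies in $L^2(\Real^3)+L^\infty(\Real^3)$. The operator $H := -\Delta + W$ with domain $H^2$ is then self-adjoint by the same Kato--Rellich argument, and it is bounded from below. Its quadratic form on $H^1$ is exactly $v\mapsto B[u;v]$, with $u$ frozen in the potential. So the weak equation says that $u$ lies in the form domain $H^1$ and that
\begin{equation}
\int \left[\overline{\nabla v}\cdot\nabla u + W\,\overline{v}\cdot u\right]dx = \omega\,(v,u)
\quad\hbox{for all } v\in H^1,
\end{equation}
which is obtained by conjugating Eq.~(\ref{Eq:SPTTimeIndependentWeakForm}). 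The condition $\omega<0$ is not essential at this point. I would simply pick $\lambda>0$ large enough that $H+\lambda$ is positive, using the lower bound from Lemma~\ref{Lem:BasicEstimates}(c) applied to $W$.

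The key step is to show $u\in H^2$. I would argue via the distributional equation
\begin{equation}
-\Delta u = f := (\omega - W)u .
\end{equation}
For this I need $f\in L^2$. The term $U_u u$ is in $L^2$ because $U_u\in L^\infty$. For $Vu$, split $V = V_\varepsilon + W_\varepsilon$ as in Lemma~\ref{Lem:Potential}. Then $W_\varepsilon u\in L^2$ trivially. However, $V_\varepsilon u$ with $V_\varepsilon\in L^2$ and $u\in H^1\subset L^2\cap L^6$ only lies in $L^{3/2}$ by H\"older, so it is not obviously in $L^2$.

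This is the main obstacle, and I would handle it by a bootstrap through the resolvent. Since $V_\varepsilon u\in L^{3/2}$, elliptic regularity, i.e.\ Calder\'on--Zygmund for $(-\Delta+1)^{-1}$, gives $u\in W^{2,3/2}$. This embeds into $L^\infty$ in three dimensions (Morrey/Sobolev: $2-3/(3/2)=0$, borderline) and into $L^q$ for every $q<\infty$. To avoid the borderline case I would iterate once more. From $u\in L^q$ with $q$ large, we get $V_\varepsilon u\in L^{r}$ with $1/r = 1/2+1/q$, hence $u\in W^{2,r}$ with $r$ slightly above $3/2$, hence $u\in L^\infty$. Then $V_\varepsilon u\in L^2$, so $f\in L^2$, and $(-\Delta+1)u = f+u\in L^2$ gives $u\in H^2$.

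A cleaner alternative, which I would actually write, uses the Kato--Rellich relative bound directly: $\|V_\varepsilon w\|_2\le a\|\Delta w\|_2 + b\|w\|_2$ for $w\in H^2$, with $a$ arbitrarily small. The form domain of $H$ is $H^1$, and the self-adjoint operator associated with the closed form $B[u;\cdot]$ coincides with the operator sum $-\Delta+W$ on $H^2$, since the latter is self-adjoint and its form extends to $H^1$. By the representation theorem (Kato, Theorem VI.2.1), any $u$ in the form domain whose form functional is $L^2$-continuous belongs to the operator domain. Here the form functional of $u$ equals $\omega(v,u)$ for all $v$, so $u\in D(H) = H^2$ and $Hu=\omega u$. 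Given $u\in H^2$, the strong equation follows from integrating by parts in the weak form, since $C_0^\infty$ is dense in $H^1$.

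For $V\equiv0$ I would bootstrap regularity. If $u\in H^s$ with $s\ge2$, then $|u|^2\in H^s$ because $H^s$ is an algebra for $s>3/2$. Derivatives commute with convolution, so $\partial^\alpha U_u = \Delta^{-1}(\partial^\alpha|u|^2)$, and this is bounded for $|\alpha|\le s$ by Lemma~\ref{Lem:GravPot}-type estimates applied to $\partial^\alpha |u|^2 \in L^1\cap L^2$. The $L^1$ bound holds because products of $H^s$ derivatives are $L^1$. Hence $U_u u\in H^s$. The equation $(-\Delta-\omega)u = -U_u u$ with $\omega<0$ then gives $u\in H^{s+2}$ by Fourier multiplier estimates; this is where $\omega<0$ is used, since $(|k|^2-\omega)^{-1}$ is smooth and bounded. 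Induction gives $u\in H^s$ for all $s$, and Sobolev embedding gives $C^\infty$, so $u$ is a classical solution.
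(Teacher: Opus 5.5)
Your proposal is correct and follows essentially the paper's route: the paper also freezes $U_u$, uses that $H_u=-\Delta+V+U_u$ is self-adjoint on $H^2$ and bounded below (so $\lambda+H_u:H^2\to L^2$ is invertible) to identify the weak solution with $(\lambda+H_u)^{-1}(\lambda+\omega)u\in H^2$, and for $V\equiv 0$ it bootstraps $(1-\Delta)u=(1+\omega)u-F(u)$, citing Lenzmann for $F:H^s\to H^s$. Your use of Kato's representation theorem makes explicit the step the paper leaves implicit, namely that the $H^1$ weak solution coincides with the $H^2$ resolvent solution, and your direct proof that $U_u u\in H^s$ replaces the citation; the overall structure is the same.
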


\proof Since $V + U_u$ also satisfies the conditions (i) and (ii), the linear operator $H_u := -\Delta + V + U_u: D(H_0)\subset L^2\to L^2$ with domain $D(H_u) := H^2$ is self-adjoint and bounded from below, according to the previous lemma. Therefore, there exists a constant $\lambda > 0$ such that
\begin{equation}
\lambda + H_u : H^2\to L^2
\end{equation}
is invertible. By hypothesis, $u\in H^1$ is a weak solution of
\begin{equation}
(\lambda + H_u) u = G(u),\qquad
G(u) := (\lambda + \omega) u.
\end{equation}
Since $u\in H^1\subset L^2$, it follows that $u = (\lambda + H_u)^{-1} G(u)\in H^2$ satisfies Eq.~(\ref{Eq:SPTimeIndependent}) in the strong sense.

When $V\equiv 0$ we can use the fact that $1 - \Delta$ is an isomorphism from $H^{s+2}$ to $H^s$ and the fact that the map $F$ defined below Eq.~(\ref{Eq:Cauchy2}) satisfies $F: H^s\to H^s$ for all $s\geq 2$ (see~\cite{Lenzmann}). From these observations and $(1-\Delta) u = (1+\omega) u - F(u)$ it follows recursively that $u\in H^4$, $u\in H^6$, $u\in H^8$, ... and thus that $u\in C^\infty(\Real^3,\Complex^m)$ according to Sobolev's embedding theorems (see, for instance~\cite{Brezis}).
\qed

\begin{remark}
If $u$ is a minimizer, it follows from Eq.~(\ref{Eq:SecondVariation}) and the fact that $D[g]\geq 0$ for any function $g\in H^1$ that
\begin{equation}
(v,H_u v) = S[u; v] + 2D[2\re(\overline{u} v)] \geq \omega \| v \|_2^2
\end{equation}
for all $v\in H^2$ with $(v,u) = 0$. In particular, since $H_u u = \omega u$, it follows that $u$ is a ground state of the Schr\"odinger operator $H_u$.
\end{remark}

Finally, we prove Theorem~\ref{Thm:BHPotential} regarding the additional properties of the minimum when $V = V_{BH}$. For this, we make the dependency on the mass parameter $M$ in $V_{BH}$ explicit by writing $\mathcal{E}_M[u]$ and $L_M[n]$ instead of $\mathcal{E}[u]$ and $L[n]$. According to Theorem~\ref{Thm:Main}, there exists for each $M\geq 0$ and $N > 0$ a minimizer $u_*\in H_\rho^1$ such that $\|u_* \|_2^2 = N$ and $\mathcal{E}_M[u_*] = E_\ell(M,N)$.

To prove property (a), one uses the rescaling~(\ref{Eq:Rescaling}) and notes that when $V(x)$ is proportional to $1/|x|$, then $L_M[n_\lambda] = \lambda L_M[n]$. This implies
\begin{equation}
\mathcal{E}_M[u_\lambda] = \lambda^2 T[u] - \lambda L_M[n] - \lambda D[n],
\end{equation}
for all $\lambda > 0$. Taking $u = u_*$ to be a minimizer, differentiating both sides with respect to $\lambda$, and evaluating at $\lambda=1$ yields the desired result.

To prove the scaling properties in (b) notice that for any $\lambda > 0$,
\begin{equation}
\tilde{u}_\lambda(x) := \sqrt{\lambda} u_\lambda(x) = \lambda^2 u(\lambda x),
\end{equation}
satisfies $\| \tilde{u}_\lambda \|_2^2 = \lambda \| u \|_2^2$ and
\begin{equation}
\mathcal{E}_{\lambda M}[\tilde{u}_\lambda] = \lambda^3\mathcal{E}_M[u].
\end{equation}
Consequently, for all $\lambda > 0$,
\begin{align}
E_\ell(\lambda M,\lambda N) &= \inf\{ \mathcal{E}_{\lambda M}[u] : u\in H_\rho^1, \| u \|_2^2 = \lambda N \}\\
 &= \inf\{ \mathcal{E}_{\lambda M}[\tilde{u}_\lambda] : \tilde{u}_\lambda\in H_\rho^1, \| \tilde{u}_\lambda \|_2^2 = \lambda N \}\\
  &= \inf\{ \lambda^3\mathcal{E}_M[u] : u\in H_\rho^1, \| u \|_2^2 = N \} \\
  &= \lambda^3 E_\ell(M,N).
\end{align}
Setting $\lambda := N^{-1}$ yields the first equality in Eq.~(\ref{Eq:Scaling}). Furthermore, it follows that if $u_*$ is a minimizer for $N^{-3} E_\ell(N^{-1} M,1)$, then $\tilde{u}_{*N}$ is a minimizer for $E_\ell(M,N)$. To prove the second inequality it is sufficient to observe that
\begin{equation}
N\omega_\ell(M,N) = 2(E_\ell(M,N) - D[|\tilde{u}_{*N}|^2]) = 2N^3( E_\ell(N^{-1} M,1) - D[n_*]) = N^3\omega_\ell(N^{-1} M,1).
\end{equation}

For clarity of the presentation we break the proof of Theorem~\ref{Thm:BHPotential}(c) in a series of lemmas. First, we show:

\begin{lemma}
\label{Lem:Monotonicity}
The function $E_\ell: [0,\infty)\times (0,\infty)\to \Real$ is monotonously decreasing in both arguments.
\end{lemma}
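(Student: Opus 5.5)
Monotonicity in $M$ follows directly from the definition of $\mathcal{E}_M$; monotonicity in $N$ follows from a scaling argument combined with the existence of minimizers from Theorem~\ref{Thm:Existence}.

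\emph{Monotonicity in $M$.} Note that $L_M[n] = \frac{M}{2}\int \frac{n(x)}{|x|}dx$. This is linear in $M$ and nonnegative for $n\geq 0$. Hence for $0\leq M_1\leq M_2$ and every $u\in H^1_\rho$ with $\|u\|_2^2=N$ we have $\mathcal{E}_{M_2}[u]\leq \mathcal{E}_{M_1}[u]$. Taking the infimum over the common admissible set $\mathcal{A}_{\rho,N}$ gives $E_\ell(M_2,N)\leq E_\ell(M_1,N)$.

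\emph{Monotonicity in $N$.} Fix $M\geq 0$ and $0<N_1<N_2$. Let $u_*$ be a minimizer of $E_\ell(M,N_1)$, which exists by Theorem~\ref{Thm:Existence}. Set $\alpha:=\sqrt{N_2/N_1}>1$ and $v:=\alpha u_*\in\mathcal{A}_{\rho,N_2}$. Exactly as in the proof of Theorem~\ref{Thm:Existence}(a), one computes
\begin{equation}
\mathcal{E}_M[v] = \alpha^2\left(T[u_*]-L_M[n_*]-\alpha^2 D[n_*]\right) \leq \alpha^2\,\mathcal{E}_M[u_*] = \alpha^2 E_\ell(M,N_1),
\end{equation}
where the inequality uses $D[n_*]\geq 0$ and $\alpha>1$. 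Since $E_\ell(M,N_1)<0$ by Theorem~\ref{Thm:Main} (via the Remark after Lemma~\ref{Lem:Coersive}) and $\alpha^2>1$, we get $\alpha^2E_\ell(M,N_1)<E_\ell(M,N_1)$. Therefore $E_\ell(M,N_2)\leq \mathcal{E}_M[v]<E_\ell(M,N_1)$, so $E_\ell$ is in fact strictly decreasing in $N$.

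The only point needing care is the negativity $E_\ell(M,N)<0$ for every $M\geq 0$, including $M=0$. Lemma~\ref{Lem:Coersive} applies here because $V_{BH}$ satisfies conditions (i) and (ii). If one prefers to avoid minimizers, the same scaling argument works on an arbitrary element of $\mathcal{A}_{\rho,N_1}$ with energy close to the infimum, again using that $E_\ell(M,N_1)$ is negative. No step should present a real obstacle.
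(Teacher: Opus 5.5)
Your proof is correct and matches the paper's argument. Monotonicity in $M$ follows from $\mathcal{E}_{M'}[u]\leq\mathcal{E}_M[u]$ for every $u$, and strict monotonicity in $N$ follows from the scaling identity $\mathcal{E}_M[\alpha u]=\alpha^2\mathcal{E}_M[u]+\alpha^2(1-\alpha^2)D[n]$ together with $E_\ell(M,N)<0$. The only cosmetic difference is that the paper takes $\alpha\in(0,1]$ for every $u\in H^1_\rho$ and passes directly to infima, obtaining $E_\ell(M,\alpha^2N)\geq\alpha^2E_\ell(M,N)$ without a minimizer. That is exactly the minimizer-free variant you mention at the end.
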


\proof Let $M'\geq M\geq 0$. Then $\mathcal{E}_{M'}[u]\leq \mathcal{E}_M[u]$ for all $u\in H_\rho^1$, which implies $E_\ell(M',N)\leq E_\ell(M,N)$ and establishes the statement for the first argument of $E_\ell$. Next, let $\alpha\in (0,1]$. Then,
\begin{equation}
\mathcal{E}_M[\alpha u] = \alpha^2\mathcal{E}_M[u] + \alpha^2(1-\alpha^2) D[n]\geq \alpha^2\mathcal{E}_M[u]
\end{equation}
for all $u\in H_\rho^1$, which yields $E_\ell(M,\alpha^2 N)\geq \alpha^2 E_\ell(M,N)$. Therefore, if $0 < N' < N$, then
\begin{equation}
E_\ell(M,N') \geq \frac{N'}{N} E_\ell(M,N) > E_\ell(M,N), 
\end{equation}
since $N'/N < 1$ and $E_\ell(M,N) < 0$. It follows that $E_\ell(M,N)$ is strictly monotonously decreasing in $N$ (and that $N\mapsto N^{-1} E_\ell(M,N)$ is monotonously decreasing).
\qed

\begin{lemma}
Let $M > 0$. Then $\omega_\ell(M,N)\to E_\ell^{(0)}$ as $N\to 0$, where $E_\ell^{(0)}$ is the ground state energy of $H_0 = -\Delta + V_{BH}$.
\end{lemma}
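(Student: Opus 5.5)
The plan is to work directly from the formula $\omega_\ell(M,N) = \frac{2}{N}\left(E_\ell(M,N) - D[n_*]\right)$ of Theorem~\ref{Thm:WeakSolution}. It then suffices to prove the two limits
\begin{equation}
\lim_{N\to 0}\frac{E_\ell(M,N)}{N} = \frac{1}{2}E_\ell^{(0)},
\qquad
\lim_{N\to 0}\frac{D[n_*]}{N} = 0 .
\end{equation}
Throughout, $E_\ell^{(0)}$ is understood as the infimum of the quadratic form $(v,H_0 v) = \int |\nabla v|^2 dx + \int V_{BH}|v|^2 dx$ over $v\in H^1_\rho$ with $\| v \|_2 = 1$. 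By Lemma~\ref{Lem:H0Properties} this infimum is finite. Working with the infimum means I do not need to know that it is attained, although for $V_{BH}$ it is: the hydrogen-like profile $r^\ell e^{-Mr/(2(\ell+1))}\mathcal{Y}^\ell$ gives the value $-M^2/(4(\ell+1)^2)$.

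\emph{Upper bound.} Fix $\varepsilon>0$ and pick $\varphi\in H^1_\rho$ with $\| \varphi \|_2=1$ and $(\varphi,H_0\varphi)\leq E_\ell^{(0)}+\varepsilon$. Use the trial function $\sqrt{N}\varphi\in\mathcal{A}_{\rho,N}$. Since $T$ and $L$ are quadratic in $u$ and $D$ is quartic, this gives
\begin{equation}
E_\ell(M,N)\leq \mathcal{E}_M[\sqrt{N}\varphi] = \frac{N}{2}(\varphi,H_0\varphi) - N^2 D[|\varphi|^2] \leq \frac{N}{2}\left(E_\ell^{(0)}+\varepsilon\right).
\end{equation}
Hence $\limsup_{N\to 0} E_\ell(M,N)/N \leq \frac{1}{2}E_\ell^{(0)}$.

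\emph{Lower bound and control of $D$.} For the minimizer $u_*$ one has $T[u_*]-L_M[n_*] = \frac{1}{2}(u_*,H_0u_*)\geq \frac{N}{2}E_\ell^{(0)}$, and therefore $E_\ell(M,N)\geq \frac{N}{2}E_\ell^{(0)} - D[n_*]$. So both remaining claims reduce to showing $D[n_*]=o(N)$; this is the step I regard as the main point, because it requires a bound on $T[u_*]$ that is uniform as $N\to0$. For that I use the coercivity estimate~(\ref{Eq:CoersiveBound}) together with $E_\ell(M,N)<0$:
\begin{equation}
\frac{1}{2}T[u_*]\leq \mathcal{E}_M[u_*] + C(N+N^3) < C(N+N^3),
\end{equation}
so $T[u_*]\leq C(N+N^3)$. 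Next, the chain~(\ref{Eq:DnInequality}) gives $D[n]\leq CN^{3/2}T[u]^{1/2}$, which yields
\begin{equation}
D[n_*]\leq C N^{3/2}(N+N^3)^{1/2} = O(N^2).
\end{equation}
In particular $D[n_*]/N\to 0$, and combined with the upper bound this gives $E_\ell(M,N)/N\to \frac{1}{2}E_\ell^{(0)}$.

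\emph{Conclusion.} Substituting both limits into $\omega_\ell = 2E_\ell/N - 2D[n_*]/N$ gives $\omega_\ell(M,N)\to E_\ell^{(0)}$ as $N\to 0$. One point needs care: the constant $C$ in~(\ref{Eq:CoersiveBound}) depends on $V$, hence on $M$, but not on $N$. This is harmless because $M$ is held fixed throughout.
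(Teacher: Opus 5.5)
Your proof is correct. Its skeleton coincides with the paper's: the upper bound $2E_\ell(M,N)\le NE_\ell^{(0)}$ from $D\ge 0$, the lower bound $E_\ell(M,N)\ge \frac{N}{2}E_\ell^{(0)}-D[n_*]$, and the estimate $D[n]\le CN^{3/2}T[u]^{1/2}$ from Eq.~(\ref{Eq:DnInequality}). The only real difference is how you control $T[u_*]$ uniformly as $N\to 0$.

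The paper uses the virial relation of Theorem~\ref{Thm:BHPotential}(a), $T[u_*]=|E_\ell(M,N)|$. That relation rests on the homogeneity of $V_{BH}$ under the rescaling~(\ref{Eq:Rescaling}). The paper then uses Lemma~\ref{Lem:Monotonicity} to conclude $\sqrt{N|E_\ell(M,N)|}\le\sqrt{N|E_\ell(M,1)|}\to 0$.

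You instead combine the coercivity bound~(\ref{Eq:CoersiveBound}) with $E_\ell(M,N)<0$ to get $T[u_*]\le C(N+N^3)$, which needs neither the virial identity nor monotonicity. This buys you two things. First, the argument applies verbatim to any rotationally symmetric $V$ satisfying (i) and (ii), not only to $V_{BH}$. Second, it gives the explicit rate $E_\ell^{(0)}-CN\le\omega_\ell(M,N)\le E_\ell^{(0)}$ for $N\le 1$, whereas the paper's bound as written only yields $O(\sqrt{N})$. The paper's route, for its part, reuses the $V_{BH}$-specific structure already established for parts (a) and (b).

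Incidentally, the $\varepsilon$-argument in your upper bound is unnecessary. Taking the infimum over $\varphi$ directly gives $2E_\ell(M,N)\le NE_\ell^{(0)}$, and hence $\omega_\ell(M,N)\le E_\ell^{(0)}$ for every $N$.
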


\proof By definition,
\begin{equation}
E_\ell^{(0)} = \inf\left\{ \frac{(u,H_0 u)}{\| u \|_2^2} : u\in H_\rho^1, u\neq 0 \right\}.
\end{equation}
Since $2\mathcal{E}[u] = (u,H_0 u) - 2D[n]$ and $D[n]\geq 0$, it follows one one hand that
\begin{equation}
2E_\ell(M,N) \leq N E_\ell^{(0)},
\label{Eq:ErhoEstim1}
\end{equation}
and on the other hand that
\begin{equation}
2\mathcal{E}[u] - 2D[n] \geq N E_\ell^{(0)} - 4D[n],
\end{equation}
for all $u\in H^1_\rho$ with $\| u \|_2^2 = N$. By taking a minimizer, using Eq.~(\ref{Eq:DnInequality}), and part (a) of the theorem, which yields
\begin{equation}
D[n_*]\leq C N^{3/2} T[u_*]^{1/2} = C N^{3/2} \sqrt{|E_\ell(M,N)|},
\end{equation}
one finds
\begin{equation}
2E_\ell(M,N) - 2D[n_*]\geq N E_\ell^{(0)} - 4C N^{3/2}\sqrt{|E_\ell(M,N)|}.
\label{Eq:ErhoEstim2}
\end{equation}
We conclude from Eqs.~(\ref{Eq:omega}),(\ref{Eq:ErhoEstim1}), and (\ref{Eq:ErhoEstim2}) that
\begin{equation}
E_\ell^{(0)} - 4C\sqrt{N|E_\ell(M,N)|}
\leq \omega_\ell(M,N) \leq E_\ell^{(0)},
\end{equation}
for all $M\geq 0$ and $N > 0$. Since for fixed $M$, $|E_\ell(M,N)|$ is increasing in $N$ according to Lemma~\ref{Lem:Monotonicity}, it follows that $\sqrt{N|E_\ell(M,N)|}\to 0$ as $N\to 0$ and the claim follows.
\qed

In view of the scaling in part (b) it is sufficient to show the following statement to establish  the second limit in Eq.~(\ref{Eq:MLimits}) and conclude the proof of Theorem~\ref{Thm:BHPotential}:

\begin{lemma}
Let $M > 0$. Then $\omega_\ell(M,1)\to \omega_\ell(0,1)$ when $M\to 0$.
\end{lemma}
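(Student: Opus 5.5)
The plan is to avoid any compactness argument and work only with the virial identities of Theorem~\ref{Thm:BHPotential}(a). For $N=1$ these give
\begin{equation*}
\omega_\ell(M,1) = 2\left(3E_\ell(M,1) + L_M[n_M]\right), \qquad \omega_\ell(0,1) = 6E_\ell(0,1),
\end{equation*}
where $u_M$ is a minimizer for $E_\ell(M,1)$ and $n_M=|u_M|^2$. It therefore suffices to show two things as $M\to 0$: that $L_M[n_M]\to 0$, and that $E_\ell(M,1)\to E_\ell(0,1)$.

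\textbf{Step 1: a uniform kinetic bound.} The virial relation gives $T[u_M] = -E_\ell(M,1)$. By Lemma~\ref{Lem:Monotonicity}, $E_\ell(M,1)$ is decreasing in $M$, so for $0<M\leq 1$,
\begin{equation*}
0 < T[u_M] = |E_\ell(M,1)| \leq |E_\ell(1,1)| < \infty .
\end{equation*}
Thus the kinetic energies of the minimizers are uniformly bounded for small $M$.

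\textbf{Step 2: $L_M[n_M]\to 0$.} Since $V_{BH}$ is linear in $M$, we have $L_M = M L_1$. The estimate in the proof of Lemma~\ref{Lem:BasicEstimates}(c), applied to $V=-1/|x|$, gives
\begin{equation*}
L_1[n] \leq C\left(N + N^{1/4}T[u]^{3/4}\right).
\end{equation*}
Alternatively, Hardy's inequality gives directly $\int |u|^2/|x|\,dx \leq 2\|u\|_2\|\nabla u\|_2$. With $N=1$ and the bound from Step~1, this yields $L_M[n_M]\leq C' M$ with $C'$ independent of $M\in(0,1]$. Hence $L_M[n_M]\to 0$.

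\textbf{Step 3: convergence of the energy.} On one side, monotonicity gives $E_\ell(M,1)\leq E_\ell(0,1)$. On the other side, $u_M$ is an admissible competitor for $E_\ell(0,1)$, so
\begin{equation*}
E_\ell(0,1)\leq \mathcal{E}_0[u_M] = \mathcal{E}_M[u_M] + L_M[n_M] = E_\ell(M,1) + L_M[n_M] \leq E_\ell(M,1) + C'M .
\end{equation*}
Combining the two inequalities gives $|E_\ell(M,1)-E_\ell(0,1)|\leq C'M$.

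\textbf{Conclusion.} Inserting Steps~2 and~3 into the virial formula for $\omega_\ell(M,1)$ gives
\begin{equation*}
|\omega_\ell(M,1)-\omega_\ell(0,1)| \leq 6|E_\ell(M,1)-E_\ell(0,1)| + 2L_M[n_M] \leq 8C'M \to 0 .
\end{equation*}
This bound holds for whichever minimizer $u_M$ is used to define $\omega_\ell(M,1)$, so the argument does not require uniqueness of minimizers.

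The only step needing care is Step~2: one must ensure that the constant in the bound on $L_1$ is independent of $M$. This is guaranteed because $M$ enters only as the linear prefactor in $L_M = M L_1$, and because Step~1 supplies a uniform bound on $T[u_M]$.
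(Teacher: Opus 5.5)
Your proposal is correct and follows essentially the same route as the paper. Both arguments use the sandwich $-L_M[|u_M|^2]\le E_\ell(M,1)-E_\ell(0,1)\le 0$, bound $L_M[|u_M|^2]=O(M)$ via Hardy's inequality together with the virial bound $T[u_M]=|E_\ell(M,1)|\le|E_\ell(1,1)|$ for $M\le 1$, and conclude from $\omega_\ell(M,1)=2\bigl(3E_\ell(M,1)+L_M[|u_M|^2]\bigr)$ and $\omega_\ell(0,1)=6E_\ell(0,1)$.
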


\proof Let $u_M\in H_\rho^1$, $\| u_M \|_2^2 = 1$, be a minimizer of $E_\ell(M,1)$. Then, since $M\geq 0$,
\begin{equation}
E_\ell(0,1)\leq \mathcal{E}_0[u_M] = T[u_M] - D[|u_M|^2]
 = E_\ell(M,1) + L_M[|u_M|^2].
\end{equation}
Therefore, taking into account the monotonicity of $E_\ell$ in the first argument, it follows that
\begin{equation}
-L_M[|u_M|^2] \leq E_\ell(M,1) - E_\ell(0,1) \leq 0,
\label{Eq:EellSandwich}
\end{equation}
for all $M\geq 0$. Using Hardy's inequality (see for example~\cite{Evans}) yields
\begin{equation}
L_M[|u_M|^2] = \frac{M}{2}\int \frac{|u_M(x)|^2}{|x|} dx \leq M\| u_M \|_2 \| \nabla u_M \|_2 = M\sqrt{2T[u_M]}.
\end{equation}
Since according to (a), $T[u_M] = |E_\ell(M,1)|$, one obtains for all $0\leq M\leq 1$ that
\begin{equation}
L_M[|u_M|^2] \leq M\sqrt{2|E_\ell(1,1)|}.
\label{Eq:LMBound}
\end{equation}
Equations~(\ref{Eq:EellSandwich}) and (\ref{Eq:LMBound}) imply that $E_\ell(M,1)\to E_\ell(0,1)$ for $M\to 0$. Since $\omega_\ell(M,1) = 2(3E_\ell(M,1) + L_M[|u_M|^2])$, the claim follows.
\qed

\section{Existence and properties of the global minimum in the reducible case}
\label{Sec:MutliEll}

In this section we prove the first part of Theorem~\ref{Thm:Main2}, regarding the existence and basic properties (a) and (b) of multi-$\ell$-boson stars. How to prove property (c) of the Theorem is explained in Section~\ref{Sec:OrbitalStability}.

First, it follows from Lemma~\ref{Lem:Coersive} and its proof that
\begin{equation}
-\infty < E(N) \leq E_\rho(N_1,N_2,\ldots,N_r) < 0,\qquad N = N_1 + N_2 + \ldots + N_r,
\end{equation}
since the rescaling~(\ref{Eq:Representation}) leaves $H_\rho^1$ and the subspace
\begin{equation}
\mathcal{A}_{\rho,N_1,N_2,\ldots,N_r} := \{ u\in H^1_\rho : \| u^{(j)} \|_2^2 = N_j \hbox{ for each } j=1,2,\ldots,r \}
\end{equation}
invariant. Let $(u_k)$ be a sequence in $\mathcal{A}_{\rho,N_1,N_2,\ldots,N_r}$ such that $\mathcal{E}[u_k]\to E_\rho(N_1,N_2,\ldots,N_r)$. It follows from Lemma~\ref{Lem:Coersive}(b) that $(u_k)$ is bounded in $H^1_\rho$, and hence by Lemma~\ref{Lem:Radial} we have (after selecting a  subsequence) $u_k\rightharpoonup u_*$ weakly in $H^1_\rho$ and $u_k\to u_*$ strongly in $L^p(\Real^3,\Complex^m)$ for all $p\in (2,6)$. As in the proof of Theorem~\ref{Thm:Existence} it follows that $\mathcal{E}[u_*]\leq E_\rho(N_1,N_2,\ldots,N_r)$. Let $\tilde{N}_j := \| u_*^{(j)} \|_2^2$ for $j=1,2,\ldots,r$. By the properties of weak convergence, $\tilde{N}_j\leq N_j$, and to conclude that $u_*\in H^1_\rho$ is a minimizer it remains to show that equality holds for all $j=1,2,\ldots,r$. To prove this, we use the following

\begin{lemma}
\label{Lem:EnergyDifference}
Let $u = (u^{(1)},u^{(2)},\ldots,u^{(r)})\in H^1_\rho$ and $u'\in H^1_\rho$ be equal to $u$ except that one of the components $u^{(j)}$ is replaced with $w$. Then,
\begin{equation}
\mathcal{E}[u'] - \mathcal{E}[u] = \mathcal{E}_j[w] - \mathcal{E}_j[u^{(j)}],
\label{Eq:EnergyDifference}
\end{equation}
where
\begin{equation}
\mathcal{E}_j[w] := \frac{1}{2}\int \left( |\nabla w(x)|^2 + W_j(x)|w(x)|^2 \right) dx
 - \frac{1}{16\pi}\int\int \frac{|w(x)|^2 |w(y)|^2}{|x-y|} dx dy,
\label{Eq:Ej}
\end{equation}
and
\begin{equation}
W_j(x) := V(x) - \sum\limits_{j'\neq j}\frac{1}{4\pi}\int \frac{|u^{(j')}(y)|^2}{|x-y|} dy.
\end{equation}
\end{lemma}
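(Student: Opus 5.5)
This identity is a direct algebraic computation, so the plan is to expand every term of $\mathcal{E}$ along the block decomposition $u=(u^{(1)},\ldots,u^{(r)})$ and check that all terms not involving the $j$-th component cancel. Write $n=|u|^2=\sum_{k}n_k$ with $n_k:=|u^{(k)}|^2$; this holds pointwise because the components $u^{(k)}$ occupy disjoint blocks of coordinates in $\Complex^m$. For the same reason $|\nabla u|^2=\sum_k|\nabla u^{(k)}|^2$. Hence $T[u]=\sum_k T[u^{(k)}]$ and $L[n]=\sum_k L[n_k]$, and both are additive over the components.

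The only coupling comes from $D$. By bilinearity and symmetry of the Coulomb form, define $D(f,g):=\frac{1}{16\pi}\int\int f(x)g(y)|x-y|^{-1}dxdy$. Then
\begin{equation}
D[n]=\sum_k D[n_k]+2\sum_{k<k'}D(n_k,n_{k'}).
\end{equation}
Fix $j$ and collect the terms containing $n_j$: they are $D[n_j]+2\sum_{k\neq j}D(n_j,n_k)$. The cross term can be rewritten as
\begin{equation}
2\sum_{k\neq j}D(n_j,n_k)=\frac{1}{8\pi}\int n_j(x)\sum_{k\neq j}\int\frac{n_k(y)}{|x-y|}dy\,dx=-\frac12\int\Big(W_j(x)-V(x)\Big)n_j(x)dx .
\end{equation}
This uses the definition of $W_j$ together with the symmetry of the kernel. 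It is exactly the extra potential term inside $\mathcal{E}_j$.

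Collecting everything gives $\mathcal{E}[u]=\mathcal{E}_j[u^{(j)}]+R_j$, where $R_j$ contains only $T[u^{(k)}]$, $L[n_k]$, $D[n_k]$ and $D(n_k,n_{k'})$ with $k,k'\neq j$. Applying the same decomposition to $u'$ gives $\mathcal{E}[u']=\mathcal{E}_j[w]+R_j$. The point is that $R_j$ is the same in both cases: $W_j$ depends only on the components $u^{(j')}$ with $j'\neq j$, and these are unchanged when passing from $u$ to $u'$. Subtracting the two expressions yields Eq.~(\ref{Eq:EnergyDifference}).

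The only thing that needs justification is finiteness of each term, which makes the splitting and rearrangement legitimate. Finiteness holds because each $n_k\in Y$ by Sobolev's inequality. Lemma~\ref{Lem:BasicEstimates} (via Hardy–Littlewood–Sobolev) then makes each $D(n_k,n_{k'})$ finite, and Lemma~\ref{Lem:GravPot} shows that $W_j-V$ is bounded, so $\int W_j|w|^2$ is finite. I do not expect any real obstacle beyond this bookkeeping.
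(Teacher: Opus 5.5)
Your proposal is correct and is the same direct verification the paper has in mind; the paper's proof consists only of the words ``Direct verification''. Your block-wise splitting of $T$ and $L$, the bilinear expansion of $D[n]$, and the absorption of the cross terms $2\sum_{k\neq j}D(n_j,n_k)=-\frac12\int (W_j-V)\,n_j$ into $\mathcal{E}_j$ all have the right signs and constants, so $\mathcal{E}[u]=\mathcal{E}_j[u^{(j)}]+R_j$ and $\mathcal{E}[u']=\mathcal{E}_j[w]+R_j$ with the same $R_j$. For the finiteness of $\int W_j|w|^2$ you also need $\int V|w|^2<\infty$, which is Lemma~\ref{Lem:BasicEstimates}(a).
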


\proof Direct verification.
\qed

\begin{remark}
$W_j$ is the sum of the external potential $V$ and the potential associated with the gravitational force acting on the $j$'th component due to the presence of the remaining components.
\end{remark}

\begin{corollary}
Suppose $u,u'\in H^1_\rho$ are as in Lemma~\ref{Lem:EnergyDifference} and set $\tilde{N}_j := \| u^{(j)} \|_2^2$. Then, for each $N_j > \tilde{N}_j$ one can choose $w$ in $u'$ such that $\| w \|_2^2 = N_j$ and $\mathcal{E}[u'] < \mathcal{E}[u]$.
\end{corollary}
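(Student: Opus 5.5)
Since the Corollary only asks for the \emph{existence} of a suitable $w$, I will copy the scaling argument from the proof of Theorem~\ref{Thm:Existence}(a), applied to the single-component functional $\mathcal{E}_j$ of Lemma~\ref{Lem:EnergyDifference}. By Eq.~(\ref{Eq:EnergyDifference}), the inequality $\mathcal{E}[u'] < \mathcal{E}[u]$ is equivalent to $\mathcal{E}_j[w] < \mathcal{E}_j[u^{(j)}]$. It therefore suffices to find $w\in H^1_{\ell_j}$ with $\| w \|_2^2 = N_j$ and $\mathcal{E}_j[w] < \mathcal{E}_j[u^{(j)}]$.

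I split into two cases according to whether $\tilde{N}_j$ vanishes.

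\emph{Case $\tilde{N}_j > 0$.} Take $w := \alpha u^{(j)}$ with $\alpha := \sqrt{N_j/\tilde{N}_j} > 1$, which lies in $H^1_{\ell_j}$. Write $T_j$ for the kinetic-plus-$W_j$ quadratic part of $\mathcal{E}_j$ and $D_j := D[|u^{(j)}|^2] > 0$. Then
\begin{equation*}
\mathcal{E}_j[w] = \alpha^2 T_j - \alpha^4 D_j = \alpha^2\mathcal{E}_j[u^{(j)}] - \alpha^2(\alpha^2-1) D_j .
\end{equation*}
To get strict decrease I need $\mathcal{E}_j[u^{(j)}] < 0$. This is not automatic for an arbitrary $u$, and it is the step I expect to need the most care; I see two ways to handle it.

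First, in the intended application $u = u_*$ is a weak limit of a minimizing sequence. One can then argue that $u^{(j)}$ may be assumed to have $\mathcal{E}_j \le 0$, since otherwise replacing it by a small-norm rescaled profile would already lower the energy.

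More robustly, I would not use the pure amplitude scaling alone but combine it with the dilation~(\ref{Eq:Rescaling}). Because $W_j\le 0$ (by condition (ii) and Lemma~\ref{Lem:GravPot}), $\mathcal{E}_j[v_\lambda] \le \lambda^2 T[v] - \lambda D[|v|^2]$. Thus mass can be added in a very spread-out component with strictly negative energy.

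Concretely, I would define $w := \beta(u^{(j)} + \phi_\lambda)$. Here $\phi\in H^1_{\ell_j}$ is fixed and nonzero, $\phi_\lambda(x) = \lambda^{3/2}\phi(\lambda x)$ with $\lambda\to 0$, and $\beta$ is chosen so that $\| w\|_2^2 = N_j$. As $\lambda\to 0$, $\phi_\lambda \rightharpoonup 0$, so the cross terms vanish: the $L^2$ cross term and the gradient cross term go to zero by weak convergence, and the cross terms in $D$ and $W_j$ go to zero by a Lemma~\ref{Lem:BasicEstimates}-type estimate with $\|\phi_\lambda\|_{p}\to 0$ for $p>2$. Choosing $\|\phi\|_2^2 = N_j - \tilde{N}_j$, one gets $\beta \to 1$ and
\begin{equation*}
\mathcal{E}_j[w] \to \mathcal{E}_j[u^{(j)}] + \lim \mathcal{E}_j[\phi_\lambda].
\end{equation*}
This limit alone does not give strict inequality, because $\mathcal{E}_j[\phi_\lambda]\approx -\lambda D[|\phi|^2] + O(\lambda^2)$ is only slightly negative. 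I would therefore compare the energy at order $\lambda$, which requires showing that the cross terms are $o(\lambda)$. The Dirichlet cross term $\re(\nabla u^{(j)},\nabla\phi_\lambda)$ behaves like $\lambda^{3/2}$ times a local quantity, so with care it is $o(\lambda)$. The Coulomb cross term is in fact favourable: $2\cdot\frac{1}{16\pi}\cdot 2\iint |u^{(j)}|^2|\phi_\lambda|^2/|x-y|$ enters the energy with a minus sign, so it only lowers $\mathcal{E}_j[w]$. The cross term from $\re(\overline{u^{(j)}}\phi_\lambda)$ is $O(\lambda^{3/2})$. The cost of the normalization $\beta$ is likewise of order $\lambda^{3/2}$, because $(u^{(j)},\phi_\lambda) = O(\lambda^{3/2})$. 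Altogether this gives $\mathcal{E}_j[w] \le \mathcal{E}_j[u^{(j)}] - c\lambda + o(\lambda)$, which is strictly less than $\mathcal{E}_j[u^{(j)}]$ for small $\lambda$.

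\emph{Case $\tilde{N}_j = 0$.} Then $u^{(j)}=0$ and $\mathcal{E}_j[u^{(j)}]=0$. Take $w = \phi_\lambda$ with $\|\phi\|_2^2=N_j$, so that $\mathcal{E}_j[w]\le \lambda^2T[\phi]-\lambda D[|\phi|^2]<0$ for small $\lambda$, exactly as in Lemma~\ref{Lem:Coersive}.

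The main obstacle is the order-$\lambda$ bookkeeping of the cross terms in the general case. If $\mathcal{E}_j[u^{(j)}]<0$ is known, I would first try the simple scaling $w=\alpha u^{(j)}$ and avoid that bookkeeping altogether.
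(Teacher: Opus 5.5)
Your proposal has a genuine gap in the one nontrivial situation, $\tilde N_j>0$ with $\mathcal{E}_j[u^{(j)}]>0$.

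Your first suggestion contains the right observation. However, it is phrased as a reduction inside the application rather than as a proof of the corollary for arbitrary $u$. Also, a ``small-norm'' profile is the wrong object, since $w$ must have norm exactly $N_j$.

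Your ``more robust'' construction $w=\beta(u^{(j)}+\phi_\lambda)$ relies on rates that do not hold in general. The claim $(u^{(j)},\phi_\lambda)=O(\lambda^{3/2})$ requires $u^{(j)}\in L^1$, and your bound on the $W_j$ cross term requires $W_j u^{(j)}\in L^1$. Neither follows from $u^{(j)}\in H^1$. For instance, take $\ell_j=0$ and the radial function $u^{(j)}(x)=(1+|x|)^{-2}$, which lies in $H^1\setminus L^1$. For a fixed $\phi\ge 0$ equal to $1$ on the unit ball, one gets $(u^{(j)},\phi_\lambda)\sim\lambda^{1/2}$. Now use $\mathcal{E}_j[\beta v]-\mathcal{E}_j[v]=(\beta^2-1)\bigl(\frac{1}{2}\int(|\nabla v|^2+W_j|v|^2)\,dx-(\beta^2+1)D[|v|^2]\bigr)$, where the bracket tends to $\mathcal{E}_j[u^{(j)}]-D[|u^{(j)}|^2]$. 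Replace $\phi$ by $-\phi$, so that $\beta>1$, and dilate $u^{(j)}$ until $\mathcal{E}_j[u^{(j)}]>D[|u^{(j)}|^2]$. Then the renormalization alone raises the energy by an amount of order $\lambda^{1/2}$, which swamps your $-c\lambda$ gain. So the bound $\mathcal{E}_j[w]\le\mathcal{E}_j[u^{(j)}]-c\lambda+o(\lambda)$ is not established, and it fails for some admissible $\phi$.

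The missing idea is much simpler: split cases on the sign of $\mathcal{E}_j[u^{(j)}]$ rather than on $\tilde N_j$. This is what the paper does.

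If $\mathcal{E}_j[u^{(j)}]\ge 0$, your case $\tilde N_j=0$ argument applies verbatim. The profile $w=\phi_\lambda$ with $\|\phi\|_2^2=N_j$ satisfies $\mathcal{E}_j[w]\le\lambda^2T[\phi]-\lambda D[|\phi|^2]<0\le\mathcal{E}_j[u^{(j)}]$ for small $\lambda$. The paper uses a minimizer $w_*$ of $\mathcal{E}_j$ here, but only negativity of the infimum is needed, so your dilated profile is the more economical choice.

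If $\mathcal{E}_j[u^{(j)}]<0$, then $u^{(j)}\neq 0$, and your identity gives $\mathcal{E}_j[\alpha u^{(j)}]\le\alpha^2\mathcal{E}_j[u^{(j)}]<\mathcal{E}_j[u^{(j)}]$. In fact the identity gives strict decrease whenever $\mathcal{E}_j[u^{(j)}]<\alpha^2 D[|u^{(j)}|^2]$, so the condition $\mathcal{E}_j[u^{(j)}]<0$ is more than you need. Eq.~(\ref{Eq:EnergyDifference}) then concludes, with no cross-term asymptotics.

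If you prefer to keep the superposition, it can be made rigorous without any rates. Replace $\phi_\lambda$ by $e^{i\theta}\phi_\lambda$, with $\theta$ chosen so that $\re(u^{(j)},e^{i\theta}\phi_\lambda)=0$; then no renormalization is needed. Of the two such phases, take the one for which the terms linear in $e^{i\theta}\phi_\lambda$ are nonpositive. The remaining Coulomb cross terms are nonnegative by positive-definiteness of the kernel and enter with a minus sign. This yields $\mathcal{E}_j[u^{(j)}+e^{i\theta}\phi_\lambda]\le\mathcal{E}_j[u^{(j)}]+\mathcal{E}_j[\phi_\lambda]<\mathcal{E}_j[u^{(j)}]$.
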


\proof According to Lemma~\ref{Lem:GravPot} $W_j$ satisfies the conditions (i) and (ii), thus $\mathcal{E}_j: H^1_{\ell_j}\to \Real$ possesses the same qualitative properties as the energy functional $\mathcal{E}$. In particular, $\mathcal{E}_j$ has a minimum $w_*$ on $\{ w\in H^1_{\ell_j} : \| w \|_2^2 = N_j \}$ which satisfies $\mathcal{E}_j[w_*] < 0$. If $\mathcal{E}_j[u^{(j)}]\geq 0$ the statement is obvious. Hence, suppose $\mathcal{E}_j[u^{(j)}] < 0$. Then, setting $\alpha:=\sqrt{N_j/\tilde{N}_j} > 1$ such that $\| \alpha u^{(j)} \|_2^2 = N_j$ one obtains
\begin{equation}
\mathcal{E}_j[w_*] \leq \mathcal{E}_j[\alpha u^{(j)}] \leq \alpha^2 \mathcal{E}_j[u^{(j)}] < \mathcal{E}_j[u^{(j)}],
\end{equation}
and the corollary follows from Eq.~(\ref{Eq:EnergyDifference}).
\qed

Coming back to the proof of Theorem~\ref{Thm:Main2}, suppose $\tilde{N}_j < N_j$ for some $j\in \{1,2,\ldots,r \}$. Then, applying the above corollary we can substitute $u_*^{(j)}$ for some $w\in H^1_{\ell_j}$ with $\| w \|_2^2 = N_j$ such that the resulting vector $u\in H^1_\rho$ satisfies $\mathcal{E}[u] < \mathcal{E}[u_*]$. Repeating the process until $\| u^{(j)} \|_2^2 = N_j$ for all $j=1,2,\ldots,r$ yields $u\in \mathcal{A}_{\rho,N_1,N_2,\ldots,N_r}$ satisfying $\mathcal{E}[u] < \mathcal{E}[u_*]\leq E_\rho(N_1,N_2,\ldots,N_r)$, which contradicts the definition of $E_\rho(N_1,N_2,\ldots,N_r)$. Therefore, we conclude that $u_*\in \mathcal{A}_{\rho,N_1,N_2,\ldots,N_r}$ and $u_*$ is a minimizer. Furthermore, as in the proof of Theorem~\ref{Thm:Existence}, it follows that $u_k\to u_*$ strongly in $H^1_\rho$.

To prove properties (a) and (b) of Theorem~\ref{Thm:Main2}, we first show that $u_*$ is a weak solution of Eq.~(\ref{Eq:SPTimeIndependent}). For this, let $v = (v^{(1)},v^{(2)},\ldots,v^{(r)})\in H^1_\rho$ and consider the variation $w(t) = (w^{(1)}(t),w^{(2)}(t),\ldots,w^{(r)}(t))\in \mathcal{A}_{\rho,N_1,N_2,\ldots,N_r}$ defined by
\begin{equation}
w^{(j)}(t) := \alpha_j(t)\left (u_*^{(j)} + t v^{(j)} \right),\qquad
\alpha_j(t):=\frac{\sqrt{N_j}}{\| u_*^{(j)} + t v^{(j)} \|_2},\qquad
j=1,2,\ldots,r,
\end{equation}
for small enough $|t|$. Then, using Lemma~\ref{Lem:Gateaux} a similar calculation than the one used in the proof of Theorem~\ref{Thm:WeakSolution} yields
\begin{equation}
0 = \left. \frac{d}{dt} \mathcal{E}[w(t)] \right|_{t=0} = \re\left( B[u_*;v] - \sum\limits_{j=1}^r \omega_j(u_*,v^{(j)}) \right),
\end{equation}
with
\begin{equation}
\omega_j = \frac{2}{N_j}\left( \mathcal{E}_j[u_*^{(j)}] - \frac{1}{16\pi}\int\int \frac{|u_*^{(j)}(x)|^2 |u_*^{(j)}(y)|^2}{|x-y|} dx dy \right),
\label{Eq:omegaj}
\end{equation}
where $\mathcal{E}_j$ is as in Eq.~(\ref{Eq:Ej}). Replacing $v$ with $iv$ we conclude that for all $v\in H^1_\rho$
\begin{equation}
B[u_*;v] = \sum\limits_{j=1}^r \omega_j(u_*,v^{(j)}).
\end{equation}
Using the projector $P$ from appendix~\ref{App:Projector} one concludes from this that $u_*\in H^1_\rho$ is a weak solution of Eq.~(\ref{Eq:SPTimeIndependent}) with $\omega$ replaced with the matrix in Eq.~(\ref{Eq:OmegaMatrix}). As in the proof of Theorem~\ref{Thm:Strong} one concludes that $u_*\in H^2$ is also a strong solution. Finally, by multiplying Eq.~(\ref{Eq:omegaj}) with $N_j$ and summing over $j$, the relation~(\ref{Eq:Virial2Bis}) follows.

\section{Orbital stability}
\label{Sec:OrbitalStability}

In this section we prove the orbital stability of the sets $S_{\ell,N}$ and $S_{\rho,N_1,N_2,\ldots,N_r}$ defined in Eqs.~(\ref{Eq:SrhoN}) and (\ref{Eq:SrhoNs}), respectively, for perturbations $\psi_0\in H^2_\rho := H^1_\rho\cap H^2(\Real^3,\Complex^m)$. The argument is standard, see for instance~\cite{Cazenave1982,Frohlich:2005sh}, so we only provide the main steps and restrict ourselves to $S_{\ell,N}$  since the other case is completely analogous.

First, notice that for rotational-symmetric $V$, $\psi_0\in H^2_\rho$ implies that $\psi(t)\in H^2_\rho$ for all $t\in\Real$. This can be seen by replacing the spaces $L^2$ and $H^2$ with their spherically symmetric subspaces $L^2_\rho$ and $H^2_\rho$ in Theorem~\ref{Thm:WP} and exploiting the fact that both $H_0$ and $F$ commute with $\rho(R)$. The claim than follows by uniqueness of the solution.

Next, we note that according to Eq.~(\ref{Eq:CoersiveBound}) the set $S_{\ell,N}$ is uniformly bounded in $H^1_\rho$ by a constant that depends only on $N$. Suppose by contradiction that $S_{\ell,N}$ is not orbital stable. Then, there exists $\varepsilon > 0$, a sequence $(\psi_0^{(j)})$ in $H^1_\rho\cap H^2(\Real^3,\Complex^m)$ of initial data, and a sequence $(t_j)$ in $(0,\infty)$ such that
\begin{equation}
\lim\limits_{j\to\infty}\mbox{dist}(\psi_0^{(j)},S_{\ell,N}) = 0
\label{Eq:distpsi0}
\end{equation}
and the corresponding sequence of solutions $(\psi^{(j)}(t))$ of the Cauchy problem~(\ref{Eq:Cauchy1},\ref{Eq:Cauchy2}) satisfies
\begin{equation}
\mbox{dist}(\psi^{(j)}(t_j),S_{\ell,N}) \geq \varepsilon
\label{Eq:distpsij}
\end{equation}
for all $j=1,2,3,\ldots$

According to Eq.~(\ref{Eq:distpsi0}) there is a sequence $(v_j)$ in $S_{\ell,N}$ such that $\| \psi_0^{(j)} - v_j \|_{H^1}\to 0$ for $j\to \infty$. Therefore, using the fact that $\psi^{(j)}(t)$ preserves the $L^2$-norm,
\begin{equation}
\left| \| \psi^{(j)}(t_j) \|_2 - \sqrt{N} \right|
 = \left| \| \psi_0^{(j)} \|_2 - \| v_j \|_2 \right| \leq \| \psi_0^{(j)} - v_j \|_2 \to 0
\end{equation}
for $j\to\infty$. Likewise, by the local Lipschitz continuity of $\mathcal{E}$ (see Eq.~(\ref{Eq:LocalLipschitz})) and energy conservation, one obtains
\begin{equation}
\left| \mathcal{E}[ \psi^{(j)}(t_j)] - E_\ell(N) \right| = \left| \mathcal{E}[ \psi_0^{(j)} ] - \mathcal{E}[v_j] \right| \to 0
\end{equation}
for $j\to \infty$. Therefore, the sequence $(\tilde{u}_j):= (\psi^{(j)}(t_j))$ in $H^1_\rho$ satisfies $\| \tilde{u}_j \|_2^2 \to N$ and $\mathcal{E}[\tilde{u}_j] \to E_\ell(N)$.

If $\|\tilde{u}_j \|_2^2 = N$ for all $j\in \Natural$ this would imply that $(\tilde{u}_j)$ is a minimizing sequence in $\mathcal{A}_{\rho,N}$, and according to Theorem~\ref{Thm:GroundState}, it would have a subsequence which converges strongly in $H^1_\rho$ to a minimizer of $E_\ell(N)$ contradicting Eq.~(\ref{Eq:distpsij}). Otherwise we normalize $u_j := \alpha_j\tilde{u}_j$ with $\alpha_j := \sqrt{N}/\| \tilde{u}_j \|_2$, such that $u_j\in \mathcal{A}_{\rho,N}$. Since $\alpha_j\to 1$ one finds again $\mathcal{E}[u_j]\to E_\ell(N)$ which shows that $(u_j)$ is a minimizing sequence and leads to a contradiction with Eq.~(\ref{Eq:distpsij}).


\begin{acknowledgments}
We thank Piotr Chru\'sciel for suggesting the variational approach and Alberto Diez-Tejedor and Armando Roque for fruitful discussions. We also thank Alberto Diez-Tejedor and H\r{a}kan Andr{\'{e}}asson for comments on a previous version of the manuscript. E.C.N. was supported by a SECIHTI predoctoral scholarship. O.S. acknowledges support from SECIHTI-SNII and CIC grant No.~18315 to Universidad Michoacana de San Nicolás de Hidalgo.
\end{acknowledgments}

\appendix
\section{Global existence of the Cauchy problem}
\label{App:GlobalExistence}

In this appendix we provide a proof of Theorem~\ref{Thm:WP} based on semi-group methods (see for example~\cite{Pazy-Book,ReedSimon}). As before, we abbreviate $L^2 := L^2(\Real^3,\Complex^m)$ and $H^s := H^s(\Real^3,\Complex^m)$ for $s=1,2,3,\ldots$. We first recall that the Schr\"odinger operator $H_0 = -\Delta + V: D(H_0)\in L^2\to L^2$ is self-adjoint with domain $D(H_0) = H^2$, see Lemma~\ref{Lem:H0Properties}. According to Stone's theorem, the anti-selfadjoint operator $-i H_0: D(H_0)\subset L^2\to L^2$ generates a strongly continuous one-parameter unitary group $U(t) = \exp(-iH_0 t)$ in $L^2$. Next, we note that by Duhamel's formula, a solution of Eqs.~(\ref{Eq:Cauchy1},\ref{Eq:Cauchy2}) must satisfy
\begin{equation}
\psi(t) = U(t)\psi_0 + \int\limits_0^t U(t-s) F(\psi(s)) ds,\qquad
t\in\Real.
\label{Eq:IntegralRepresentation}
\end{equation}
Local in time well-posedness\footnote{See, for instance, Theorem~1.7 and the comments following it in chapter~6 of reference~\cite{Pazy-Book}. Note that in this reference strong solutions are called ``classical solutions''.} follows if one can show that the map $F: H^2\to H^2$ is locally Lipschitz continuous. This means that for each $\psi_0\in D(H_0)$ there exists a local strong solution, that is, a continuously differentiable curve $\psi: (t_1,t_2)\to L^2$ defined on a finite time interval $(t_1,t_2)$ containing $0$, such that $\psi(0) = \psi_0$, $\psi: (t_1,t_2)\to D(H_0)$ is continuous, and $\psi(t)$ satisfies Eq.~(\ref{Eq:Cauchy1}) for all $t\in (t_1,t_2)$. Local Lipschitz continuity of $F$ is implied by the following Lemma, which is a direct generalization to $m > 1$ of Lemma 3 in Ref.~\cite{Lenzmann} when specialized to $\mu=0$ and $s\in\Natural$:

\begin{lemma}
\label{Lem:KeyEstimate}
For each $s\in\Natural$ the map $F: H^s\to H^s$, $u\mapsto \Delta^{-1}(|u|^2) u$, is well-defined and satisfies the following inequalities: there are positive constants $C_1$ and $C_2$ depending only on $s$ such that for all $u,v\in H^s$,
\begin{enumerate}
\item[(a)] $\| F(u) \|_{H^s}\leq C_1 \| u \|_{H^r}^2 \| u \|_{H^s}$, where $r = \max\{ 1,s-1 \}$,
\item[(b)] $\| F(u) - F(v) \|_{H^s} \leq C_2\left( \| u \|_{H^s}^2 + \| v \|_{H^s}^2 \right)\| u - v \|_{H^s}$.
\end{enumerate}
\end{lemma}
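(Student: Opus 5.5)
Write $U_u:=\Delta^{-1}(|u|^2)=k*n$ with $k(x)=-1/(4\pi|x|)$ and $n=|u|^2$. The plan rests on two ingredients. The first is a Leibniz-type product estimate in $H^s$ for $F(u)=U_u u$. The second is control of the derivatives of $U_u$ in $L^\infty$, or in $L^p$ for suitable $p$, in terms of Sobolev norms of $u$. For $s\in\Natural$ and a multi-index $|\alpha|\le s$ we expand
\begin{equation*}
D^\alpha F(u)=\sum_{\beta\le\alpha}\binom{\alpha}{\beta}\, D^\beta U_u\, D^{\alpha-\beta}u ,
\end{equation*}
and we have $D^\beta U_u = \Delta^{-1}(D^\beta n)$. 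For $|\beta|\ge 2$ we may move two derivatives onto the kernel. Indeed, $\partial_i\partial_j\Delta^{-1}$ is a combination of Riesz transforms, which are bounded on $L^p$ for $1<p<\infty$, so $\|D^\beta U_u\|_p\le C\|D^{\beta'}n\|_p$ with $|\beta'|=|\beta|-2$. For $|\beta|\le 1$ we use the Hardy--Littlewood--Sobolev inequality, in the same way as in Lemma~\ref{Lem:GravPot}. This gives $\|U_u\|_\infty\le C(\|n\|_2+\|n\|_{4/3})\le C\|u\|_{H^1}^2$. For the gradient we use $\nabla U_u=\nabla k*n$ with $|\nabla k|\sim|x|^{-2}$, and HLS together with the splitting $\nabla k\in L^{3/2}_w$ yields $\|\nabla U_u\|_\infty\le C(\|n\|_1+\|n\|_{3+\delta})$. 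This is likewise bounded by norms of $u$ in $H^r$; alternatively one can put $\nabla U_u$ in $L^6$ via $\|\nabla U_u\|_6\le C\|n\|_{6/5}\cdot$(HLS), which is enough once combined with the right Hölder pairing.

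For (a) we then bound each term $\|D^\beta U_u\,D^{\alpha-\beta}u\|_2$ by Hölder's inequality. The terms with $\beta=0$ are controlled by $\|U_u\|_\infty\|u\|_{H^s}\le C\|u\|_{H^1}^2\|u\|_{H^s}$. Terms with $|\beta|=1$ are estimated as $\|\nabla U_u\|_\infty\|u\|_{H^{s-1}}$. For $|\beta|\ge 2$ we write $\|D^\beta U_u\|_{p}\|D^{\alpha-\beta}u\|_{q}$ with $1/p+1/q=1/2$. By the Riesz bound, $\|D^\beta U_u\|_p\lesssim\|D^{\beta-2}(|u|^2)\|_p$. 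We distribute the derivatives of $|u|^2=\overline{u}\cdot u$ via Leibniz again and close with Gagliardo--Nirenberg/Sobolev embeddings in $\Real^3$ ($H^1\subset L^6$, $H^2\subset L^\infty$). The derivative count is favourable: $\Delta^{-1}$ gains two derivatives. Hence at most $s-2$ derivatives fall on the density and at most $s$ on $u$, so all but one factor can be measured in $H^{\max\{1,s-1\}}$. This is exactly the claimed $C_1\|u\|_{H^r}^2\|u\|_{H^s}$, with $r=\max\{1,s-1\}$ reflecting that for $s\le2$ only $H^1$ norms of the potential factors are needed.

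For (b) we use trilinearity. Write $F(u)=G(u,u,u)$ with $G(a,b,c)=\Delta^{-1}(\overline{a}\cdot b)\,c$. Then
\begin{equation*}
F(u)-F(v)=G(u-v,u,u)+G(v,u-v,u)+G(v,v,u-v).
\end{equation*}
The estimates for (a) were proved for $G(a,b,c)$ in multilinear form, giving $\|G(a,b,c)\|_{H^s}\le C\|a\|_{H^s}\|b\|_{H^s}\|c\|_{H^s}$ (a cruder bound, with $H^s$ everywhere, suffices here). Combined with $\|u\|_{H^s}\|v\|_{H^s}\le\frac12(\|u\|_{H^s}^2+\|v\|_{H^s}^2)$, this yields (b).

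The main obstacle is the low-order terms $D^\beta U_u$ with $|\beta|\le1$, and in particular the endpoint case $s=1$. There Riesz transforms give no gain, and we must rely on HLS and the decay of $k$ alone. We would first try to bound $\nabla U_u$ in $L^\infty$ using $\||x|^{-2}*n\|_\infty\le C(\|n\|_1+\|n\|_{2})$, obtained by splitting the kernel into near and far parts. If this proves delicate, we would instead pair $\|\nabla U_u\|_{L^6}\,\|u\|_{L^3}$. Here $\|\nabla U_u\|_6\le C\|n\|_{2}$ by HLS, and $\|n\|_2\le\|u\|_4^2\le C\|u\|_{H^1}^2$, which closes the $s=1$ case.
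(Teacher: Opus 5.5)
\paragraph{Comparison with the paper.} The paper gives no argument of its own for this lemma. It defers to Lemma~3 of Lenzmann (scalar case, Yukawa-type kernel, general $s$), specialised to $\mu=0$ and $s\in\Natural$, and asserts that the proof carries over to $m>1$. You instead give a self-contained proof for integer $s$:
\begin{itemize}
\item the classical Leibniz expansion of $D^\alpha(U_u u)$;
\item moving two derivatives onto the kernel when $|\beta|\geq 2$;
\item direct kernel bounds on $U_u$ and $\nabla U_u$ when $|\beta|\leq 1$;
\item the trilinear splitting $F(u)-F(v)=G(u-v,u,u)+G(v,u-v,u)+G(v,v,u-v)$ for (b).
\end{itemize}
This is a legitimate alternative route. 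It covers exactly the integer cases the paper uses, and it shows why vector-valuedness is harmless: only $|\overline{a}\cdot b|\leq |a|\,|b|$ enters. Your step (b) is fine as written. So is your $s=1$ fallback $\|\nabla U_u\|_6\|u\|_3$ with $\|\nabla U_u\|_6\leq C\|n\|_2$.

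\paragraph{Where the argument breaks.} The weak point is the $L^\infty$ control of $\nabla U_u$. As written, this leaves the case $s=2$ with $r=1$ unproved.
\begin{itemize}
\item The inequality $\||x|^{-2}*n\|_\infty\leq C(\|n\|_1+\|n\|_2)$ is false. Since $|x|^{-2}\chi_{\{|x|<1\}}\notin L^2(\Real^3)$, the near part of the kernel only pairs with $\|n\|_p$ for $p>3$.
\item Your correct bound $\|\nabla U_u\|_\infty\leq C(\|n\|_1+\|n\|_{3+\delta})$ is \emph{not} controlled by $H^1$-norms, because $\|n\|_{3+\delta}=\|u\|_{6+2\delta}^2$. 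So your justification that this quantity is bounded by $H^r$-norms breaks down precisely when $r=1$. The terms $\partial_i U_u\,\partial_j u$ for $s=2$ are never closed in the required form $C\|u\|_{H^1}^2\|u\|_{H^2}$. The $L^6$ pairing would need the extra interpolation $\|\nabla u\|_3\leq C\|u\|_{H^1}^{1/2}\|u\|_{H^2}^{1/2}$.
\item The bound $\|\nabla U_u\|_6\leq C\|n\|_{6/5}$ in your first paragraph is wrong. The norm $\|n\|_{6/5}$ controls $\|U_u\|_6$ and $\|\nabla U_u\|_2$; you correct this yourself later.
\end{itemize}

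\paragraph{How to repair it.} Use Hardy's inequality: $\int |u(y)|^2|x-y|^{-2}\,dy\leq 4\|\nabla u\|_2^2$ uniformly in $x$. This gives $\|\nabla U_u\|_\infty\leq \pi^{-1}\|\nabla u\|_2^2$, and $\pi^{-1}\|\nabla a\|_2\|\nabla b\|_2$ in the trilinear version. Hence every $|\beta|=1$ term is at most $C\|u\|_{H^1}^2\|u\|_{H^{s-1}}$.

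For $|\beta|\geq 2$, write $\beta=\beta'+e_i+e_j$. A single pairing handles all cases, and it needs only Plancherel ($|k_ik_j|/|k|^2\leq 1$) rather than Calder\'on--Zygmund theory:
$\|D^\beta U_u\,D^{\alpha-\beta}u\|_2\leq \|D^{\beta'}n\|_2\,\|D^{\alpha-\beta}u\|_\infty\leq C\|u\|_{H^{|\beta|-1}}^2\|u\|_{H^s}$.
This uses $H^1\subset L^4$ and $H^2\subset L^\infty$, together with $1\leq |\beta|-1\leq s-1$. Combined with the $\beta=0$ term, this gives (a) with exactly $r=\max\{1,s-1\}$.

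Finally, for the ``well-defined'' part, prove the estimates for Schwartz functions, where the Leibniz rule and $D^\beta U_u=\Delta^{-1}D^\beta n$ are immediate. Then extend to $H^s$ by density.
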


\proof One can easily verify that the proof in~\cite{Lenzmann} goes through when $m>1$.
\qed

In order to prove global well-posedness we first show that a local strong solution conserves the total energy and the quantity $Q$ defined in Eq.~(\ref{Eq:QDef}).

\begin{lemma}
\label{Lem:EQConservation}
Let $t_1 < 0 < t_2$, and let $\psi: (t_1,t_2)\to L^2$ be a local strong solution of the Cauchy problem~(\ref{Eq:Cauchy1},\ref{Eq:Cauchy2}). Then, $\mathcal{E}[\psi(t)] = \mathcal{E}[\psi_0]$ and $Q[\psi(t)] = Q[\psi_0]$ for all $t\in (t_1,t_2)$.
\end{lemma}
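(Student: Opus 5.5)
The plan is to differentiate $Q[\psi(t)]$ and $\mathcal{E}[\psi(t)]$ in time along the strong solution and show that both derivatives vanish. By definition of a local strong solution, $\psi:(t_1,t_2)\to L^2$ is $C^1$ and $\psi:(t_1,t_2)\to H^2$ is continuous. Moreover, Eq.~(\ref{Eq:Cauchy1}) gives $\dot\psi = -i(H_0\psi + F(\psi))$. Here $H_0\psi\in L^2$ by Lemma~\ref{Lem:H0Properties}, and $F(\psi)\in H^2$ by Lemma~\ref{Lem:KeyEstimate}, so the right-hand side is continuous in $t$ with values in $L^2$.

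For $Q$, the entries are $Q_{ab} = (\psi_a,\psi_b)$, which is a continuous sesquilinear expression in $L^2$. Since $\psi$ is $C^1$ into $L^2$, each entry is differentiable with
\[
\frac{d}{dt}Q_{ab} = (\dot\psi_a,\psi_b) + (\psi_a,\dot\psi_b).
\]
The components of $H_0$ act diagonally as $-\Delta + V$ with $V$ real, so $H_0$ is self-adjoint componentwise. Likewise, $F(\psi)_a = U_\psi\psi_a$ with $U_\psi$ real and bounded by Lemma~\ref{Lem:GravPot}, so multiplication by $U_\psi$ is a bounded self-adjoint operator. Substituting $\dot\psi_a = -i(H_0\psi_a + U_\psi\psi_a)$ gives
\[
i\big[(A\psi_a,\psi_b) - (\psi_a,A\psi_b)\big] = 0, \qquad A = H_0 + U_\psi .
\]
Hence $Q$ is constant.

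For $\mathcal{E}$, the first step is to compute $t\mapsto\mathcal{E}[\psi(t)]$ using Lemma~\ref{Lem:Gateaux}. The map $\psi$ is $C^1$ only into $L^2$, not $H^1$, so the chain rule must be justified. Once it is, we expect
\[
\frac{d}{dt}\mathcal{E}[\psi(t)] = \re B[\psi;\dot\psi] = \re(H_0\psi + U_\psi\psi,\dot\psi),
\]
where the second equality uses integration by parts, valid since $\psi\in H^2$. Writing $g = H_0\psi+F(\psi)\in L^2$, this equals $\re(g,-ig) = \re(-i\|g\|_2^2) = 0$.

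The main obstacle is justifying the differentiation. The plan is to avoid the chain rule in $H^1$ by estimating difference quotients directly. The expansion Eq.~(\ref{Eq:Eu+tv}) gives
\[
\mathcal{E}[\psi(t+h)] - \mathcal{E}[\psi(t)] = \re B[\psi(t);\delta] + O(\|\delta\|_{H^1}^2), \qquad \delta = \psi(t+h)-\psi(t).
\]
Because $\psi(t)\in H^2$, we may rewrite $B[\psi(t);\delta] = (H_0\psi(t)+F(\psi(t)),\delta)$, which only requires $\delta$ in $L^2$. Dividing by $h$, the first term converges since $\delta/h\to\dot\psi$ in $L^2$. The remaining terms are bounded by $\|\delta\|_{H^1}\,\|\delta/h\|_{H^1}$.

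Bounding the latter factor needs $\|\delta/h\|_{H^1}$ to stay bounded. One way is to interpolate, $\|\delta\|_{H^1}^2\le\|\delta\|_2\|\delta\|_{H^2}$, which makes the remainder $o(h)$: $\|\delta\|_2 = O(h)$ and $\|\delta\|_{H^2}\to 0$ by continuity into $H^2$. The cubic and quartic terms are handled by the same interpolation together with Lemma~\ref{Lem:BasicEstimates}. This shows the right derivative exists and equals zero, and the symmetric argument gives the same for the left derivative.

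As a fallback, if this interpolation bookkeeping becomes delicate, one can approximate by regularized data, for instance using the resolvent $(1+\epsilon H_0)^{-1}$. Another option is to use the Duhamel formula Eq.~(\ref{Eq:IntegralRepresentation}) and pass to the limit using the continuity of $\mathcal{E}$ on $X$ from Lemma~\ref{Lem:BasicEstimates}.
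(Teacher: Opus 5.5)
Your proposal is correct and follows essentially the paper's argument. The $Q$ part is identical, using self-adjointness of $H_0+U_\psi$. For $\mathcal{E}$, both you and the paper work with difference quotients, using only that $\psi$ is $C^1$ into $L^2$ and continuous into $H^2$, and both arrive at $\frac{d}{dt}\mathcal{E}[\psi(t)]=\re(\dot\psi,H_0\psi+F(\psi))=0$. The difference is bookkeeping. The paper rearranges the difference exactly, via self-adjointness of $H_0$ and the symmetry of the Newtonian pairing, so that $\delta/s$ is paired with $H_0\psi(t+s)$ and with $\Delta^{-1}(|\psi(t+s)|^2+|\psi(t)|^2)\psi(t+s)$. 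You instead keep $\psi(t)$ fixed and show the remainder in Eq.~(\ref{Eq:Eu+tv}) is $o(h)$ via $\|\delta\|_{H^1}^2\lesssim\|\delta\|_2\|\delta\|_{H^2}$, which is the same estimate in another form, so neither the fallback nor a separate left/right-derivative argument is needed.
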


\proof
We prove that the derivatives of $\mathcal{E}[\psi(t)]$ and $Q[\psi(t)]$ with respect to $t$ are equal to zero. First, note that the energy functional can be written as follows:
\begin{equation}
\mathcal{E}[\psi(t)] = \frac{1}{2} \left(\psi(t), H_0\psi(t)\right) + \frac{1}{4}\left(\psi(t), \Delta^{-1}(|\psi(t)|^2)\psi(t)\right).
\end{equation}
Its derivative is given by 
\begin{align}
\frac{d}{dt} \mathcal{E}[\psi(t)] &= \frac{1}{2}\lim_{s \to 0} \frac{1}{s}\left[\left(\psi(t + s), H_0 \psi(t + s)\right) - \left((\psi(t), H_0\psi(t)\right)\right]
\nonumber\\
&+ \frac{1}{4} \lim_{s \to 0} \frac{1}{s} \left[\left(\psi(t + s), \Delta^{-1}(|\psi(t + s)|^2)\psi(t + s) \right) - \left(\psi(t), \Delta^{-1}(|\psi(t)|^2)\psi(t)\right)\right].
\end{align}
After some algebra one obtains
\begin{align}
\frac{d}{dt} \mathcal{E}[\psi(t)] &= \frac{1}{2}\lim_{s \to 0} \left(\frac{\psi(t + s) - \psi(t)}{s}, H_0\psi(t)\right) + \frac{1}{2}\lim_{s \to 0} \left( H_0\psi(t + s), \frac{\psi(t + s) - \psi(t)}{s}\right)
\nonumber\\
&+ \frac{1}{4}\lim_{s \to 0} \left(\frac{\psi(t + s) - \psi(t)}{s}, [\Delta^{-1}(|\psi(t + s)|^2 + |\psi(t)|^2)]\psi(t)\right)
\nonumber\\
&+ \frac{1}{4}\lim_{s \to 0} \left([\Delta^{-1}(|\psi(t + s)|^2 + |\psi(t)|^2)]\psi(t + s), \frac{\psi(t + s) - \psi(t)}{s}\right),
\label{Eq:TimeDerivativeEnergy}
\end{align}
where we have used the facts that $H_0$ is self-adjoint and that $(\psi,\Delta^{-1}(|\phi|^2)\psi$ is symmetric in $\psi\leftrightarrow\phi$. By assumption,
$$
\lim_{s \to 0}\frac{\psi(t + s) - \psi(t)}{s} = \frac{d\psi}{dt}(t),\qquad
\lim_{s \to 0} H_0\psi(t+s) = H_0\psi(t),
$$
in $L^2$. Furthermore, using similar arguments than in the proof of Lemma~\ref{Lem:GravPot}, one finds
\begin{equation}
\|\Delta^{-1}(|\psi(t + s)|^2 -|\psi(t)|^2)\|_\infty 
\le C(\|\psi(t + s) + \psi(t)\|_4 \|\psi(t + s) - \psi(t)\|_4 
 + \|\psi(t + s) + \psi(t)\|_{8/3} \|\psi(t+ s) - \psi(t)\|_{8/3}),
\end{equation}
which implies $\Delta^{-1}(|\psi(t + s)|^2 + |\psi(t)|^2) \to 2\Delta^{-1}(|\psi(t)|^2$ in $L^\infty$, as $s\to 0$. Gathering the results, one obtains
\begin{equation}
\frac{d}{dt} \mathcal{E}[\psi(t)] = \text{Re} \left(\frac{d\psi(t)}{dt}, H_0\psi(t) + F(\psi(t))\right).
\end{equation}
Thus, from Eq.~\eqref{Eq:Cauchy1} it follows
\begin{equation}
 \frac{d}{dt} \mathcal{E}[\psi(t)] = \text{Re}\left(i\left\|\frac{d\psi(t)}{dt}\right\|_{2}^2\right) = 0. 
\end{equation}

To establish the analogous result for the   functional $Q[\psi(t)]$ it is convenient to consider its components, given by
\begin{equation}
Q_{jk}[\psi(t)] = \left(\psi_j(t), \psi_k(t)\right),\qquad
1\le j, k \le m.
\end{equation}
Using Eq.~\eqref{Eq:Cauchy1} and the self-adjointness of the operator $H_0 + \Delta^{-1}(|\psi(t)|^2)$ one finds
\begin{align}
\frac{d}{dt} Q_{jk}[\psi(t)] &= \left(\psi_j(t), \frac{d\psi_k(t)}{dt}\right) + \left(\frac{d\psi_j(t)}{dt}, \psi_k(t)\right)\\
 &= i\left(H_0\psi_j(t) + \Delta^{-1}(|\psi(t)|^2)\psi_j(t), \psi_k(t)\right) - i\left( H_0\psi_j(t) + + \Delta^{-1}(|\psi(t)|^2)\psi_j(t), \psi_k(t) \right) \\
 & = 0,
\end{align}
which concludes the proof of the lemma.
\qed

An important consequence of Lemma~\ref{Lem:EQConservation} is that the $H^1$-norm of a local solution is uniformly bounded. This is due to the bounds~(c) and (d) in Lemma~\ref{Lem:BasicEstimates} which imply that (see Eq.~(\ref{Eq:CoersiveBound}))
\begin{equation}
\| u \|_{H^1}^2\leq C\left( \mathcal{E}[u] + N + N^3 \right),
\qquad 
N = \Tr(Q[u]),\qquad
u\in H^1,
\end{equation}
such that $\| \psi(t) \|_{H^1}$ is bounded by a constant depending only on $\mathcal{E}[\psi_0]$ and $\Tr( Q[\psi_0])$. As a consequence of Eq.~(\ref{Eq:IntegralRepresentation}) and the fact that the $H^2$-norm is equivalent to the graph norm associated with $H_0$ and that $U(t)$ commutes with $H_0$, it follows that
\begin{equation}
\| \psi(t) \|_{H^2} \leq C\left( \| \psi_0 \|_{H^2} + \int\limits_0^t \| F(\psi(s)) \|_{H^2} ds \right)
 \leq C\left( \| \psi_0 \|_{H^2} + \int\limits_0^t \| \psi(s) \|_{H^2} ds \right),
\end{equation}
for all $0\leq t < t_2$, where we have used Lemma~\ref{Lem:KeyEstimate}(a) in the last step. Using Gronwall's inequality yields a uniform bound on the $H^2$-norm of $\psi(t)$. Therefore, the solution can be extended to $t_2=\infty$. Likewise, one can extend the solution to the past to $t_1=-\infty$. Finally, observe that by Sobolev's inequality, the uniform bound on $H^2$ also implies that $\| \psi(t) \|_\infty$ is uniformly bounded.

\section{Projector on the space of spherically symmetric functions}
\label{App:Projector}

We use the same abbreviations $L^2$ and $H^s$ as in the previous appendix. Let $\rho: SO(3)\to GL(L^2)$ be a unitary representation of the rotation group of the form given in Eq.~(\ref{Eq:Representation}), and denote by $P: L^2\to L^2$ the orthogonal projector onto the closed subspace
\begin{equation}
L_\rho^2 := \{ u\in L^2 : \rho(R)u = u \hbox{ for all $R\in SO(3)$} \}.
\end{equation}
In this appendix, we prove:

\begin{proposition}
\label{Prop:Projector}
The projector $P: L^2\to L^2$ defined above satisfies the following properties:
\begin{enumerate}
\item[(a)] $P(H^1) = H_\rho^1$,
\item[(b)] $(\nabla Pu,\nabla v) = (\nabla u,\nabla Pv)$ for all $u,v\in H^1$,
\item[(c)] Let $f: \Real^3\setminus \{0 \} \to \Real$ be a measurable rotational-symmetric function, and let $u\in L^2$ be such that $f u\in L^2$. Then, $f P u\in L^2$ and $P(fu) = f Pu$.
\end{enumerate}
\end{proposition}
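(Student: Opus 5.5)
The plan is to write $P$ explicitly as the group average
\begin{equation}
Pu = \int_{SO(3)} \rho(R) u \, dR,
\end{equation}
where $dR$ is the normalized Haar measure on $SO(3)$ and the integral is a Bochner integral in $L^2$. The map $R\mapsto\rho(R)u$ is strongly continuous, because rotations act continuously on $L^2$ and $D$ is continuous. So the integral is well defined.

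First I will check that this operator coincides with the orthogonal projector onto $L^2_\rho$.
\begin{itemize}
\item Invariance of Haar measure gives $\rho(S)Pu = Pu$ for every $S$, so the range lies in $L^2_\rho$.
\item If $u\in L^2_\rho$, then $Pu=u$.
\item Unitarity of $\rho$ and $dR$ being invariant under $R\mapsto R^{-1}$ give $P^*=P$.
\item Hence $P$ is idempotent and self-adjoint with range $L^2_\rho$, so it is the orthogonal projector.
\end{itemize}

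For (a), the key observation is that each $\rho(R)$ restricts to a unitary operator on $H^1$. The derivative $\nabla[\rho(R)u](x)$ is obtained from $D(R)(\nabla u)(R^{-1}x)$ by multiplying the gradient index by $R$. This action is orthogonal on gradients and unitary on components, so it preserves $\|\nabla u\|_2$ and $\|u\|_2$. The map $R\mapsto\rho(R)u$ is also continuous in $H^1$; I would show this by density of $C_0^\infty$ together with uniform boundedness. The Bochner integral therefore converges in $H^1$ as well, and it agrees with the $L^2$ one, so $Pu\in H^1$ with $\|Pu\|_{H^1}\le\|u\|_{H^1}$. Since $Pu$ is invariant, $Pu\in H^1_\rho$. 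Conversely, for $u\in H^1_\rho$ we have $u=Pu$, which gives $P(H^1)=H^1_\rho$.

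Property (b) follows from the same structure. Each $\rho(R)$ preserves the Dirichlet form, meaning $(\nabla\rho(R)u,\nabla\rho(R)v)=(\nabla u,\nabla v)$, so $(\nabla\rho(R)u,\nabla v)=(\nabla u,\nabla\rho(R^{-1})v)$. Integrating over $R$ and using $R\mapsto R^{-1}$ invariance of $dR$ gives $(\nabla Pu,\nabla v)=(\nabla u,\nabla Pv)$. Interchanging the integral with the continuous sesquilinear form $(\nabla\cdot,\nabla v)$ on $H^1$ is justified because the integral converges in $H^1$.

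For (c), rotational symmetry of $f$ gives $\rho(R)(fu)=f\,\rho(R)u$ pointwise almost everywhere. I would first treat bounded $f$: here multiplication by $f$ is a bounded operator, so it commutes with the Bochner integral and $P(fu)=fPu$ directly. For general $f$ I would truncate, setting $f_n:=f\chi_{\{|f|\le n\}}$, which is still rotationally symmetric. Then $f_nu\to fu$ in $L^2$ by dominated convergence, so $P(f_nu)=f_nPu$ converges in $L^2$ to $P(fu)$. Passing to a subsequence converging almost everywhere, $f_nPu\to fPu$ pointwise, so $fPu=P(fu)\in L^2$.

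The main obstacle I expect is the measure-theoretic care in (c) and in the $H^1$ continuity of $R\mapsto\rho(R)u$. Everything else reduces to invariance of the Haar measure.
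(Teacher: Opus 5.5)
Your proof is correct, and it takes a genuinely different route from the paper. The paper never writes $P$ explicitly. It shows that $\mathcal{F}$ commutes with every $\rho(R)$, hence leaves $L^2_\rho$ and $(L^2_\rho)^\perp$ invariant and commutes with $P$. It then proves (c) first, by showing that multiplication by a rotationally symmetric $f$ preserves $L^2_\rho$ and $(L^2_\rho)^\perp$. Finally, it obtains (a) and (b) by applying (c) on the Fourier side to the multipliers $(1+|k|^2)^{s/2}$ and $|k|$, together with Plancherel and $P^*=P$.

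Your Haar-average formula replaces all of this with one structural fact: each $\rho(R)$ is an isometry of $H^1$ that preserves the Dirichlet form and commutes with symmetric multipliers. As a result, (a) and (b) need neither the Fourier transform nor (c), and you get $\|Pu\|_{H^1}\le\|u\|_{H^1}$ for free. The argument also carries over verbatim to any compact group acting in this way, including settings with no Fourier transform. The paper's route, in exchange, avoids Haar measure and vector-valued integration, and yields $P(H^s)=H^s_\rho$ for all $s\ge 0$ in one stroke.

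One point is in your favor. The paper's step $P(fu)=P\bigl(f(Pu+u^\perp)\bigr)=fPu$ tacitly requires $fPu$ and $fu^\perp$ to lie in $L^2$ separately. For unbounded $f$, that is precisely the nontrivial content of (c), and it is what the proof of (a) relies on with $f=(1+|k|^2)^{1/2}$. Your truncation $f_n=f\chi_{\{|f|\le n\}}$, combined with an a.e.-convergent subsequence, establishes $fPu=P(fu)\in L^2$ rigorously.
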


\begin{remark}
Properties (a) and (b) imply that $(Pu,v)_{H^1} = (Pu,v) + (\nabla P u,\nabla v) = (u,Pv)_{H^1}$ for all $u,v\in H^1$ and that the image of the restriction $P_1: H^1\to H^1$, $u\mapsto P u$, is equal to $H_\rho^1$. Therefore, $P_1$ coincides with the orthogonal projector of $H^1$ onto $H_\rho^1$.
\end{remark}

To prove the proposition, we use the Fourier transformation $\mathcal{F}: L^2\to L^2$, defined as
\begin{equation}
[\mathcal{F} u](k) := \frac{1}{(2\pi)^{3/2}} \int u(x) e^{-ikx} dx,\qquad k\in \Real^3,
\end{equation}
for $u$ in a suitable dense subspace of $L^2$ (for example $C_0^\infty(\Real^3,\Complex^m)$), and first prove the following properties:

\begin{lemma}
\label{Lem:Fourier}
\begin{enumerate}
\item[(a)] $[\rho(R),\mathcal{F}] = 0$ for all $R\in SO(3)$,
\item[(b)] $\mathcal{F}$ and $\mathcal{F}^{-1}$ leave $L_\rho^2$ and its orthogonal complement invariant,
\item[(c)] $[P,\mathcal{F}] = 0$.
\end{enumerate}
\end{lemma}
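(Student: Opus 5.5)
We work first on the dense subspace $C_0^\infty(\Real^3,\Complex^m)$ and then extend everything by continuity, using that $\mathcal{F}$ and each $\rho(R)$ are unitary on $L^2$ (Plancherel and unitarity of $D(R)$).

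For (a), take $u\in C_0^\infty$ and $R\in SO(3)$. We compute
\begin{equation}
[\mathcal{F}\rho(R)u](k) = \frac{1}{(2\pi)^{3/2}}\int D(R)u(R^{-1}x)e^{-ik\cdot x}dx .
\end{equation}
Substituting $x=Ry$, which has unit Jacobian since $\det R=1$, and using $k\cdot Ry = (R^{-1}k)\cdot y$ because $R$ is orthogonal, the integral becomes $D(R)[\mathcal{F}u](R^{-1}k) = [\rho(R)\mathcal{F}u](k)$. Both sides are bounded operators on $L^2$, so the identity extends from the dense subspace to all of $L^2$. Taking inverses then gives $[\rho(R),\mathcal{F}^{-1}]=0$ as well.

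For (b), let $u\in L^2_\rho$. Part (a) gives $\rho(R)\mathcal{F}u = \mathcal{F}\rho(R)u = \mathcal{F}u$ for every $R$, so $\mathcal{F}u\in L^2_\rho$. The same argument applies to $\mathcal{F}^{-1}$. For the orthogonal complement, let $w\perp L^2_\rho$ and $u\in L^2_\rho$. By unitarity, $(\mathcal{F}w,u) = (w,\mathcal{F}^{-1}u)$, and this vanishes because $\mathcal{F}^{-1}u\in L^2_\rho$. Hence $\mathcal{F}w\perp L^2_\rho$, and symmetrically $\mathcal{F}^{-1}w\perp L^2_\rho$.

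For (c), decompose $u = Pu + (1-P)u$. By (b), $\mathcal{F}Pu\in L^2_\rho$ and $\mathcal{F}(1-P)u\in (L^2_\rho)^\perp$, so this splitting of $\mathcal{F}u$ is exactly its orthogonal decomposition with respect to $L^2_\rho$. Therefore $P\mathcal{F}u = \mathcal{F}Pu$.

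The only delicate point is in (a): the change of variables must be justified on a dense class and then extended by boundedness, and one must check that the covariance of the Fourier exponent, $k\cdot Ry = (R^{-1}k)\cdot y$, really uses orthogonality of $R$. Everything else is formal Hilbert-space reasoning.
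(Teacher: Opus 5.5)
Your proof is correct and follows the paper's argument essentially step by step. For (a) you use the same change of variables on $C_0^\infty$, extended by density and boundedness (you start from $\mathcal{F}\rho(R)u$ and substitute $x=Ry$, the paper starts from $\rho(R)\mathcal{F}u$ and substitutes $x=R^{-1}y$); for (b) the same adjoint argument $\mathcal{F}^*=\mathcal{F}^{-1}$ handles the orthogonal complement; and for (c) the same orthogonal decomposition $u=Pu+u^\perp$ gives the commutation. Your explicit derivation of $[\rho(R),\mathcal{F}^{-1}]=0$ by taking inverses fills in what the paper leaves as ``likewise.''
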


\proof 
Using Eq.~(\ref{Eq:Representation}) one finds
\begin{align}
[\rho(R)\mathcal{F}u](k) 
 &= \frac{1}{(2\pi)^{3/2}}\int D(R) u(x) e^{-i(R^{-1} k)x} dx 
\nonumber\\
 &= \frac{1}{(2\pi)^{3/2}}\int D(R) u(R^{-1} y) e^{-ik y} dy
 = [\mathcal{F}\rho(R)u](k)
\end{align}
for all $k\in \Real^3$ and all $u\in C_0^\infty(\Real^3,\Complex^m)$, where we have used the variable substitution $x = R^{-1} y$ in the second step. Since $C_0^\infty(\Real^3,\Complex^m)$ is dense in $L^2$ and $\rho(R)$ and $\mathcal{F}$ are bounded, this proves (a).

Next, to prove (b), let $u\in L_\rho^2$. Then $\rho(R)\mathcal{F} u = \mathcal{F}\rho(R) u = \mathcal{F} u$ for all $R\in SO(3)$, which implies that $\mathcal{F} u\in L_\rho^2$. Likewise, it follows that $\mathcal{F}^{-1} u\in L_\rho^2$. Next, let $v\in (L_\rho^2)^\perp$. Using Parseval's identity, one finds
\begin{equation}
(w,\mathcal{F} v) = (\mathcal{F}^{-1} w,v) = 0
\end{equation}
for all $w\in L_\rho^2$, which implies that $\mathcal{F}v\in (L_\rho^2)^\perp$. Likewise, one concludes that $\mathcal{F}^{-1}v\in (L_\rho^2)^\perp$.

Finally, to prove (c), let $u\in L^2$. Then, using the orthogonal decomposition $u = Pu + u^\perp$ with $u^\perp\in (L_\rho^2)^\perp$, one finds
\begin{equation}
P\mathcal{F} u = P\mathcal{F}(Pu + u^\perp) = P\mathcal{F} P u + P(\mathcal{F} u^\perp) = \mathcal{F} P u,
\end{equation}
since $\mathcal{F} Pu\in L_\rho^2$ and $\mathcal{F} u^\perp\in (L_\rho^2)^\perp$ according to (b).
\qed

\proofof{Proposition~\ref{Prop:Projector}} We start with the proof of statement (c). For this, note first that using Eq.~(\ref{Eq:Representation}),
\begin{equation}
[\rho(R) f u](x) = f(R^{-1} x) D(R) u(R^{-1} x) = f(x) [\rho(R) u](x) = f(x) u(x),
\end{equation}
for all $x\in \Real^3$, $R\in SO(3)$ and $u\in L_\rho^2$. Therefore, $f u\in L_\rho^2$ if $f u\in L^2$ and $u\in L_\rho^2$. Likewise, one shows that $f u\in (L_\rho^2)^\perp$ if $f u\in L^2$ and $u\in (L_\rho^2)^\perp$. Therefore, as in the proof of Lemma~\ref{Lem:Fourier}(c), it follows that
\begin{equation}
P(f u) = P f(Pu + u^\perp) = f P u,
\end{equation}
for all $u\in L^2$ such that $f u\in L^2$.

To prove (a) recall that for any $s\geq 0$,
\begin{equation}
H^s = \{ u\in L^2 : (1+k^2)^{s/2}\mathcal{F}{u} \in L^2 \}.
\end{equation}
Let $u\in H^s$. Then, it follows using Lemma~\ref{Lem:Fourier}(c) and the already established property (c) of the proposition that
\begin{equation}
(1+k^2)^{s/2}\mathcal{F} P u = P(1+k^2)^{s/2} \mathcal{F} u\in L^2;
\end{equation}
hence $Pu\in H^s$. It follows that $P$ maps $H^s$ onto $H_\rho^s := H^s\cap L_\rho^2$ for each $s\geq 0$, in particular, $P(H^1) = H_\rho^1$.

Finally, to prove (b), let $u,v\in H^1$. Using Parseval's identity and the previous results, one finds
\begin{align*}
(\nabla P u,\nabla v) &= (\mathcal{F}\nabla P u,\mathcal{F}\nabla v) \\
 &= (|k|\mathcal{F} P u,|k|\mathcal{F} v) \\
 &= (P|k|\mathcal{F} u,|k|\mathcal{F} v) \\
 &= (|k|\mathcal{F} u,|k| P\mathcal{F} v ) \\
 &= (\nabla u,\nabla P v),
\end{align*}
which concludes the proof of the Proposition.
\qed

\bibliography{references.bib} 

\end{document}